\newtheorem{theorem}{Theorem}[section] 
\newtheorem{claim}[theorem]{Claim}
\newtheorem{proposition}[theorem]{Proposition} 
\newtheorem{lemma}[theorem]{Lemma} 
\newtheorem{corollary}[theorem]{Corollary}
\newtheorem{conjecture}[theorem]{Conjecture}
\theoremstyle{definition}
\newtheorem{definition}[theorem]{Definition}
\newtheorem{remark}[theorem]{Remark}
\newtheorem{assumption}{Assumption}
\DeclareMathOperator*{\argmax}{arg\,max}
\DeclareMathOperator*{\nsw}{NSW} 
\DeclareMathOperator{\opt}{OPT} 
\newcommand{\R}{{{\mathbb{R}}}}
\newcommand{\N}{{{\mathbb{N}}}}
\newcommand{\Z}{{{\mathbb{Z}}}}
\newcommand{\1}{{{\mathds{1}}}}
\newcommand{\supp}{{{\mathrm{supp}}}}
\newcommand{\nat}{{{$\sp{\natural}$}}}
\newcommand{\val}{v} 
\newcommand{\f}{r} 
\newcommand{\G}{{\cal G}} 
\newcommand{\T}{\ensuremath{V}} 
\newcommand{\E}{E} 
\newcommand{\Se}{S} 
\newcommand{\M}{{\cal M}} 
\newcommand{\I}{{\cal I}} 
\newcommand{\co}{c} 
\newcommand{\Hs}{{\cal H}} 
\newcommand{\F}{{\cal L}} 
\newcommand{\A}{{\cal A}} 
\newcommand{\B}{B}
\newcommand{\ls}{Y}
\newcommand{\s}{k}
\newcommand{\g}{\gamma}
\newcommand{\kc}{{c}} 
\newcommand{\pr}{{q}}
\newcommand{\y}{y} 
\newcommand{\tcnt}{t} 
\newcommand{\ry}{\y^r}
\newcommand{\de}{d} 
\newcommand{\De}{D} 
\newenvironment{claimproof}[1][\proofname]
	{  
        \proof[Proof of Claim]
        
	}
	{      
		\endproof
	}
\newenvironment{tagequation}[1][Missing Tag]{
    \begin{equation}
    \tag{#1}
    \begin{aligned}
    \hphantom{#1}
    \end{aligned}
    \begin{aligned}
}{
    \end{aligned}
    \end{equation}
}
\title{Approximating Nash Social Welfare under Rado Valuations}
\author{Jugal Garg\thanks{
University of Illinois at Urbana-Champaign. Supported by NSF Grant CCF-1942321 (CAREER)}\\ \texttt{jugal@illinois.edu}
\and
Edin Husi\' c\thanks{Department of Mathematics, 
	London School of Economics and Political Science. This project has received funding from the European Research Council (ERC) under the European Union's Horizon 2020 research and innovation programme (grant agreement no. ScaleOpt--757481).}\\ \texttt{e.husic@lse.ac.uk}
\and
L\' aszl\' o A. V\' egh\footnotemark[2]\\ \texttt{l.vegh@lse.ac.uk}
}
\date{}
\begin{document}

\maketitle
\thispagestyle{empty}

\begin{abstract}
We consider the problem of approximating maximum Nash social welfare (NSW) while allocating a set of indivisible items to $n$ agents. The NSW is a popular objective that provides a balanced tradeoff between the often conflicting requirements of fairness and efficiency, defined as the weighted geometric mean of agents' valuations. For the symmetric additive case of the problem, where agents have the same weight with additive valuations, the first constant-factor approximation algorithm was obtained in 2015. This led to a flurry of work obtaining constant-factor approximation algorithms for the symmetric case under mild generalizations of additive, and $O(n)$-approximation algorithms for more general valuations and for the asymmetric case. 

In this paper, we make significant progress towards both symmetric and asymmetric NSW problems. We present the first constant-factor approximation algorithm for the symmetric case under \emph{Rado} valuations. Rado valuations form a general class of valuation functions that arise from maximum cost independent matching problems, including as special cases assignment (OXS) valuations and weighted matroid rank functions. Furthermore, our approach also gives the first constant-factor approximation algorithm for the asymmetric case under Rado valuations, provided that the maximum ratio between the weights is bounded by a constant. 
\end{abstract}
\newpage
\tableofcontents
\thispagestyle{empty}

\newpage
\section{Introduction}
Fair and efficient allocation of resources is a fundamental problem in many disciplines, including computer science, economics, and social choice theory; see, e.g., several excellent books written specifically on this problem~\cite{Barbanel,Brams1996,BrandtCELP16,Moulin2004,Robertson1998,R16,Young1995}. The Nash social welfare (NSW) is a popular objective that provides a balanced tradeoff between the often conflicting requirements of fairness and efficiency in contrast to the other popular social welfare concepts, including the \emph{utilitarian social welfare} and the \emph{max-min fairness}, also known as the \emph{Santa Claus} problem. It is no wonder that it was discovered independently in several different contexts: First, as the unique solution to a bargaining game by Nash in 1950~\cite{Kaneko1979,nash1950bargaining}. It also coincides with the popular notion of competitive equilibrium with equal incomes (CEEI) in economics~\cite{varian1973equity}, and as a notion of proportional fairness in networking~\cite{kelly1997charging}. 

In the discrete NSW problem, one needs to allocate a set $\G$ of $m$ indivisible items to a set $\A$ of $n$ agents where each agent $i$ has a valuation function $v_i: 2^{\G}\to \R_+$ and weight (entitlement) $w_i>0$. The goal is to find an allocation maximizing the NSW, defined as the weighted geometric mean of the valuations:

\begin{equation}\label{eq:nash-welfare}
\max\left\{\left(\prod_{i\in \A} \val_i(S_i)^{w_i} \right)^{1/\sum_{i\in \A} w_i}:\, \{S_i: i\in \A\} \textrm{ forms a partition of }\G \right\}\, .
\end{equation}
We refer to the special case when all agents have equal weight (i.e., $w_i=1$) as the \emph{symmetric} NSW problem, and call the general case the \emph{asymmetric} NSW problem. While the early works only considered the symmetric NSW, the asymmetric case has also been well-studied since the seventies \cite{harsanyi1972generalized,kalai1977nonsymmetric}, and has 
found many applications in different areas, such as bargaining theory~\cite{chae2010bargaining,Laruelle2007}, water resource allocation~\cite{degefu2016water,houba2013asymmetric}, and climate agreements~\cite{yu2017nash}. 
Another distinctive feature of NSW is invariance under scaling of the valuation functions, i.e., one obtains the same optimal partition even if some agents over- or under-report their valuations by a constant factor.

\paragraph{Computational complexity}
Finding an exact solution to the NSW problem is NP-hard even for two identical agents with additive valuations: observe that the partition problem reduces to the NSW problem. Moreover, the problem is NP-hard to approximate within a factor better than $1.069$~\cite{garg2017satiation} for additive valuations, and better than $1.5819$~\cite{GargKK20} for submodular valuations. These results hold already in the symmetric case. 

For the symmetric NSW problem with additive valuations, Cole and Gkatzelis~\cite{cole2015approximating}, in a breakthrough result, provided the first constant-factor approximation algorithm using an approach based on \emph{spending-restricted} market equilibrium, whose analysis was later improved in~\cite{cole2017convex}. Anari et al.~\cite{anari2017nash} provided another approach using the theory of real stable polynomials. Barman et al.~\cite{barman2018finding} developed yet another approach based on local search that provides the state-of-the-art approximation factor of $1.45$.

These approaches have also been extended to obtain constant-factor approximation algorithms for mild generalizations of additive, namely, budget-additive~\cite{garg2018approximating}, separable piecewise linear concave (SPLC)~\cite{anari2018nash}, and their combination budget-SPLC~\cite{ChaudhuryCGGHM18} valuations. All these approaches heavily exploit the symmetry of agents and the characteristics of `additive-like' valuations, such as the notion of a maximum bang-per-buck (MBB) items, which make them hard to extend to significantly more general settings.

For more general valuations or the asymmetric NSW problem, new approaches~\cite{barman2020tight,chaudhury2020fair,GargKK20} have been recently developed, resulting in the state-of-the-art factor of $O(n)$ for the asymmetric Nash problem under subadditive valuations. However, their analysis is based on averaging arguments, making them hard to yield a factor better than $O(n)$ even for the special cases, e.g., OXS valuations, or only two types of agents with weights 1 or 2 under additive valuations. Therefore, $O(n)$ remained the best approximation factor for the symmetric NSW problem beyond `additive-like' valuations or for the asymmetric NSW problem. 

\paragraph{Our contributions}
We make significant progress towards both symmetric and asymmetric NSW problems. Firstly, we obtain a constant-factor approximation for a broad class of submodular valuations we call \emph{Rado} valuations. This is a common generalization of OXS valuations and weighted matroid functions. A Rado valuation of an agent $i\in \A$ is specified by a bipartite graph $(\G, \T_i; \E_i)$, edge costs $\co_i : \E_i \to \R_+$ and a matroid $\M_i = (\T_i, \I_i)$. The value $\val_i(S)$ of a subset of items $S\subseteq \G$ is given as the maximum cost of a matching  between nodes in $S$ and nodes in $\T_i$ such that the endpoints in  $\T_i$ form an independent set in the matroid $\M_i$. Relation between popular classes of valuations functions follows~\cite{DBLP:journals/geb/LehmannLN06,nisan2007algorithmic}:
\[
\text{Additive} \subsetneq \text{SPLC} \subsetneq 
\begin{array}{c}
\text{OXS} \\
\text{Weighted Matroid Rank}
\end{array}
\subsetneq \text{Rado} \subsetneq \text{GS} \subsetneq \text{Submodular} \subsetneq \text{XOS} \subsetneq \text{Subadditive} \, . 
\]

\begin{restatable}{theorem}{mainSym}\label{thm:main-symmetric}
There exists a polynomial-time $256 e^{3/e}{\approx }772$-approximation algorithm for the symmetric Nash social welfare problem under Rado valuations. 
\end{restatable}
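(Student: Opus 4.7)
The plan is to combine a concave programming relaxation with a matroid-based rounding procedure, adapting the Eisenberg--Gale-type approach that has been successful for symmetric NSW under additive or SPLC valuations to the richer structure of Rado valuations. The factor $256\cdot e^{3/e}$ strongly suggests that the $e^{3/e}$ loss arises from analyzing the integrality gap of the concave relaxation (as in the Cole--Gkatzelis framework), while the constant $256$ absorbs losses from a constant number of discrete rounding/classification steps.

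\textbf{Step 1: Concave relaxation.} For each agent $i\in\A$, I would introduce fractional variables $x_{i,e}$ for every edge $e\in \E_i$ of the underlying bipartite graph $(\G,\T_i;\E_i)$. The constraints enforce: (a) for every item $j\in \G$, $\sum_i \sum_{e\ni j} x_{i,e}\le 1$, so that each item is fractionally allocated at most once; and (b) for every agent $i$, the vector $(x_{i,e})_{e\in \E_i}$ lies in the fractional matroid-constrained matching polytope determined by $\M_i$. The fractional valuation of agent $i$ is $\val_i^*=\sum_{e\in\E_i}\co_i(e)\,x_{i,e}$, and the objective maximizes $\sum_i\log \val_i^*$, which is concave. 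The feasible set is an intersection of tractable matroid/polymatroid polytopes, so the program can be solved in polynomial time via the ellipsoid method or convex optimization over matroid-intersection polytopes.

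\textbf{Step 2: Rounding.} Given an optimal fractional solution $x^*$ with values $\val_i^*$, I first aggregate it into an item-to-agent fractional assignment $y^*_{i,j}=\sum_{e=(j,u)\in\E_i}x^*_{i,e}$ and round $y^*$ to an integer partition using a dependent / contention-resolution rounding scheme that preserves matroid-type marginals. For each agent $i$, I then recompute an optimal matroid-constrained matching on the items actually received, and a final local-search phase swaps items between agents whenever doing so increases the product of valuations. The $e^{3/e}$ factor is then charged against the expected log-loss in rounding, in the spirit of prior symmetric NSW proofs, and the factor $256$ accumulates from a constant number of discrete halving or bucketing stages used to separate ``large'' from ``small'' contributions and to ensure that every agent retains at least a constant fraction of $\val_i^*$.

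\textbf{Main obstacle.} The hardest step will be bounding the rounding loss: even if agent $i$ retains a large fraction of the items contributing to $\val_i^*$, the matroid constraint on $\T_i$ can still force large losses in $\val_i$ because the matching realizing $\val_i^*$ used particular endpoints of $\T_i$, not just particular items. I expect the analysis to rely on matroid-exchange / augmenting-path arguments showing that, with constant probability, the random set of items received by agent $i$ admits a matroid-constrained matching of value close to $\val_i^*$; this is where the richer structure of Rado (compared to OXS) requires genuinely new combinatorial work. Once the symmetric case is in hand, the asymmetric case with bounded weight ratios should follow by inserting the weights into the log-objective and rescaling the rounding thresholds, with the bounded ratio assumption controlling the additional concentration loss.
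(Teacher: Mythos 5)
There is a genuine gap at the very first step: the concave relaxation you propose in Step 1 is exactly the ``na\"\i ve'' relaxation \eqref{prog:vanilla-relax} (your edge variables $x_{i,e}$ together with the matroid-constrained matching polytope are just the explicit LP description of the concave extension of a Rado valuation, so aggregating them reproduces $\max\sum_i\log\val_i(y_i)$ over fractional allocations). This relaxation has \emph{unbounded} integrality gap already for additive valuations \citep[Lemma 3.1]{cole2015approximating}: if a single item is enormously valuable to all $n$ agents, the fractional optimum splits it and every agent gets huge value, while any integral allocation leaves $n-1$ agents with essentially nothing. Consequently no rounding scheme---contention resolution, dependent rounding, local search, or otherwise---can recover a constant factor relative to your relaxation's value, and attributing the $e^{3/e}$ to ``the integrality gap of the concave relaxation'' cannot work. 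This is also a misreading of the Cole--Gkatzelis framework, which does \emph{not} round the natural relaxation; it introduces spending-restricted equilibria precisely to sidestep this gap. The paper's route is different in exactly this respect: it first computes a matching $\tau$ maximizing NSW under ``one item per agent,'' takes $\Hs=\tau(\A)$ as a set of $n$ high-impact items, and works with a \emph{mixed integer} relaxation in which $\Hs$ must be allocated integrally (in fact, by a matching, losing a factor that is $e^{1/e}$ in the symmetric case by Lemma~\ref{lemma:symmetricProductBound}); the Eisenberg--Gale program is solved only on $\G\setminus\Hs$, an optimal matching of $\Hs$ is layered on top, the Rado structure is used to obtain a basic optimal solution with support at most $2|\A|+|\F^+|$, and the final rounding fixes at most $2|\A|$ fractional variables to zero while re-matching $\Hs$ along alternating cycles against $\tau$. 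The three factors of $e^{1/e}$ (giving $e^{3/e}$) come from three applications of the symmetric product bound in the reduction and rounding phases, not from the gap of a continuous relaxation.

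A secondary issue: your ``main obstacle'' paragraph correctly identifies that the matroid constraint on $\T_i$ complicates per-agent value retention, but the paper never needs a probabilistic matroid-exchange argument there; instead it exploits that \eqref{prog:EGprod} with Rado valuations is a rational convex program solvable exactly (Lemma~\ref{lemma:optimalSolution}), and that basic optimal solutions plus one auxiliary LP give the needed sparsity deterministically (Lemma~\ref{lem:sparseSolution}), losing only a factor $2$. Finally, the closing claim that the asymmetric case ``should follow by inserting the weights'' inverts the paper's logic: the machinery is developed for the weighted case and the symmetric theorem is obtained from it by swapping Lemma~\ref{lem:productBound} for Lemma~\ref{lemma:symmetricProductBound}; and, as the paper notes, a symmetric $O(1)$-approximation does not by itself yield any guarantee for bounded weights.
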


Rado valuations form a subclass of gross substitutes (GS) valuations. In fact, it was conjectured by Frank in 2003 that every GS valuation arises as a Rado valuation, see Section~\ref{section:utility}. In Section~\ref{sec:conjectures} we give a counterexample and formulate a slight strengthening of this conjecture.
\medskip

Secondly, we obtain a constant-factor approximation for the asymmetric NSW problem under Rado valuations, provided that the maximum ratio between the weights is bounded by a constant.
Assume the weights $w_i$ of the agents fall in the interval $[1, \g-1]$ for some $\g\ge2$. 

\begin{restatable}[Main]{theorem}{main}\label{thm:main}
There exists a polynomial-time $256\g^3$-approximation algorithm for the Nash social welfare problem with Rado valuation functions. For additive valuation functions, there exists a polynomial-time $16\g$-approximation algorithm.\footnote{We note that $\g$ in the theorem can be replaced by $\min\left\{O\left(\frac{\g}{\log \g} \right), n \right\}$ as we show in Section~\ref{subsection:upperBoundsPreliminaries} and Section~\ref{section:productBounds}.}
\end{restatable}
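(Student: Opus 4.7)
The plan is to reduce the asymmetric problem to the symmetric one and then invoke Theorem~\ref{thm:main-symmetric}. The reduction I propose is agent duplication: replace each agent $i\in \A$ with $k_i := \lceil w_i \rceil$ identical copies, all carrying the same valuation $\val_i$, producing an equal-weight instance with $K := \sum_i k_i$ agents. Since $w_i \in [1,\g-1]$, we have $\max_i k_i \le \g$ and $W \le K \le 2W$ for $W:=\sum_i w_i$. After running the symmetric algorithm of Theorem~\ref{thm:main-symmetric} on the duplicated instance and obtaining bundles $(T_{i,j})_{i,j}$, I recover an asymmetric allocation by setting $S_i := \bigcup_j T_{i,j}$.

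The analysis proceeds in two halves. First, I upper-bound the duplicated-symmetric optimum by a splitting lemma: given an asymmetric optimum $(S_i^*)$, partition each $S_i^*$ into $k_i$ pieces whose $\val_i$-values are balanced up to a constant factor. For additive $\val_i$ this is a standard balanced-partition step: isolate items whose individual value exceeds $\val_i(S_i^*)/k_i$, then greedily balance the remainder. For Rado $\val_i$ I take an optimal max-cost independent matching $F_i^*$ realising $\val_i(S_i^*)$, split its edges into $k_i$ groups of approximately equal cost, and assign the corresponding item endpoints to the copies; feasibility is preserved because subsets of an independent set are independent, and submodularity of $\val_i$ lower-bounds each piece's value by a constant fraction of $\co_i(F_i^*)/k_i$. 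Passing to the geometric mean and reconciling the exponents $k_i/K$ against $w_i/W$, the duplicated-symmetric optimum is at least $\opt_{\mathrm{asym}}/\Theta(\g)$. Second, I combine the copies' bundles back. When $\val_i$ is additive, $\val_i(S_i) = \sum_j \val_i(T_{i,j}) \ge k_i\cdot (\prod_j \val_i(T_{i,j}))^{1/k_i}$ by AM--GM, which recovers the $k_i$ factor lost during the split; together with the weight-exponent correction (another factor $\g$) and a constant-factor symmetric-additive subroutine, this yields the claimed $16\g$ bound. For Rado, submodularity only gives $\val_i(S_i) \ge \max_j \val_i(T_{i,j}) \ge (\prod_j \val_i(T_{i,j}))^{1/k_i}$, costing an extra factor of $k_i\le \g$ compared to the additive case; combining this with the $256\,e^{3/e}$ loss of Theorem~\ref{thm:main-symmetric} and the two $\g$ losses above yields the $256\g^3$ bound.

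The main obstacle I foresee is the Rado splitting lemma. The optimum matching $F_i^*$ can mix very low- and very high-cost edges, so a naive edge split may be badly unbalanced and ruin the geometric mean of the $\val_i(T_{i,j})$. I expect to handle this by pruning items whose singleton $\val_i$-value already exceeds $\val_i(S_i^*)/k_i$ (placing each such item into its own bundle) and then balancing the remaining ``small'' edges by a greedy longest-processing-time-style assignment; because the matroid independence on the $\T_i$-side is inherited by any sub-matching, feasibility comes for free. Once this lemma is in place, the remainder of the argument is the routine bookkeeping of the two exponent systems $k_i/K$ and $w_i/W$, which is where the footnote's refinement to $\min\{O(\g/\log \g),n\}$ should emerge by choosing $k_i$ proportional to $w_i/\log \g$ rather than $\lceil w_i\rceil$ and collapsing agents of very small weight.
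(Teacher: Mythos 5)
Your reduction has a genuine gap, and it is exactly the trap the paper itself warns about: agent duplication cannot work for indivisible items, because the ``splitting lemma'' you rely on is false. If an agent's optimal bundle $S_i^*$ contains fewer (high-value) items than $k_i=\lceil w_i\rceil$ --- in the extreme, a single item carrying all of $\val_i(S_i^*)$ --- then no partition into $k_i$ pieces can give every copy a constant fraction of $\val_i(S_i^*)/k_i$; at least one copy gets value $0$ (or negligible), the geometric mean over the copies collapses, and the duplicated-symmetric optimum is \emph{not} within $\Theta(\g)$ of $\opt_{\mathrm{asym}}$. The paper's own footnote example already kills the reduction: two items, $w_1=2$, $w_2=1$, $\val_1(\{1\})=M$, $\val_2(\{1\})=M+1$, $\val_1(\{2\})=\val_2(\{2\})=1$. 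Duplicating agent~1 yields three equal-weight agents and two items, so \emph{every} allocation in the duplicated instance has symmetric NSW equal to $0$, while the asymmetric optimum is $M^{2/3}$. The symmetric algorithm then gives no guarantee whatsoever, and your recombination step has nothing to work with. Your pruning/LPT patch does not help, since the failure is not about balancing edge costs but about there simply not being enough indivisible value-carrying items to feed $k_i$ copies. A secondary problem is that ``reconciling the exponents $k_i/K$ against $w_i/W$'' is not routine bookkeeping: the objective ratio between the two weightings is $\prod_i \val_i(S_i)^{w_i/W-k_i/K}$, which is unbounded in the valuations unless they are normalized, and NSW instances come with no such normalization.

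Note also that the logical direction of your argument is inverted relative to the paper: there, Theorem~\ref{thm:main-symmetric} is obtained as a corollary of the proof of Theorem~\ref{thm:main} (replacing the Lemma~\ref{lem:productBound} bound $\g$ by the $e^{1/e}$ bound of Lemma~\ref{lemma:symmetricProductBound}), so there is no symmetric black box available that was established independently of the asymmetric machinery. The actual proof goes through the mixed relaxation with the preferred-item set $\Hs$ (Phase I), the reduction to the mixed matching relaxation losing a factor $\g$ (Theorem~\ref{theorem:reduction}), the Eisenberg--Gale program and its sparse approximate solution for Rado valuations (Theorem~\ref{theorem:approx-sparse}), and the alternating-cycle rounding of the matching (Lemma~\ref{lem:newMatchingMain}), with the weight parameter $\g$ entering only through Lemma~\ref{lem:productBound} in Phases II and V --- a fundamentally different route from any symmetrization of the instance.
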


We note that even if the weights of the agents are bounded, an $O(1)$-approximation for the symmetric case does not yield an $O(1)$-approximation to the asymmetric case.\footnote{To illustrate this point, consider two items and two agents with weights $w_1=2$, $w_2=1$ and additive valuations $\val_{1}(\{1\})=M$, $\val_{1}(\{2\})=1$, $\val_{2}(\{1\})=M+1$, $\val_{2}(\{2\})=1$, and so on, where $M$ is an arbitrarily large number. The unique optimal solution to the symmetric case (by setting $w'_1=w'_2=1$) is allocating good 2 to agent 1 and good 1 to agent 2. However, this returns an NSW value $(M+1)^{1/3}$ for the original weights. This can be worse by an arbitrary factor than the value $M^{2/3}$ obtainable by assigning good 1 to agent 1 and good 2 to agent 2.}
 Table~\ref{table:results} summarizes the updated best approximation guarantees for the problem under various valuation functions. 
\begin{table}[h]
\centering
\begin{tabular}{|c|c|c|}
\hline
\bf{Valuations}         & \bf{Symmetric}                                    & \bf{Asymmetric} \\ \hline \hline
Additive                & 1.45 \cite{barman2018finding} & $O(\g)$ [Theorem~\ref{thm:main}] \\ \hline
SPLC                    & 1.45 \cite{ChaudhuryCGGHM18}                      &  $O(\g^3)$ [Theorem~\ref{thm:main}]\\ \hline
Rado                    & $O(1)$ [Theorem~\ref{thm:main-symmetric}] & $O(\g^3)$ [Theorem~\ref{thm:main}]  \\ \hline
Subadditive             & $O(n)$~\cite{barman2020tight,chaudhury2020fair}                   & $O(n)$~\cite{barman2020tight,chaudhury2020fair}  \\ \hline
\end{tabular}
\captionsetup{singlelinecheck=off}
\caption[.]{Summary of the best approximation algorithms for the NSW problem.
Definitions of valuations functions are deferred to Section~\ref{section:valuationFunctions}.}
\label{table:results}
\end{table}

\subsection{Main ideas}
Our approach is based on a mixed-integer programming relaxation, using a careful combination of convex programming relaxations and combinatorial arguments.

The NSW problem is given with discrete valuation functions $\val_i:2^\G\to \R_+$. In order to apply convex programming techniques, we first need to obtain a convex programming relaxation; already this turns out to be a nontrivial task. As explained in Section~\ref{section:utility}, gross substitute valuations are the subclass of discrete valuations where a concave extension can be naturally defined.

Already for  additive valuations,  the natural relaxation of the NSW problem has unbounded integrality gap~\cite{cole2015approximating}. In order to formulate a mixed integer program, we identify a set $\Hs$ of $n$ items, and require that all these items must be integrally allocated. We do not know if this relaxation can be solved in polynomial time: we only provide an approximate solution to a further relaxation.

For the set $\Hs$, we aim to identify the set of the `most important' items. We find the allocation maximizing the NSW value assuming each agent can obtain just a single item, and select $\Hs$ as the set of the items selected in this allocation. This can be efficiently solved as a maximum weight matching problem. The algorithm in \cite{GargKK20} also starts with such a matching. One cannot commit to assigning these items to the agents, as it may result in an arbitrary bad outcome; the approach in  \cite{GargKK20} is an intricate combinatorial scheme with iterated matchings and reallocations to obtain an $O(n\log n)$ approximation for submodular valuations. 
Our result implies that the mixed integer relaxation that requires $\Hs$ to be integrally allocated has a constant integrality gap, in contrast to the standard continuous relaxation. As a possible explanation why this happens, we make a connection to the approach of \citet{cole2015approximating} in Section~\ref{section:connectionToSR}, showing that all `expensive' items in the spending restricted equilibrium will be included in $\Hs$.
\medskip

We give a detailed exposition of the overall approach  and formulate the main lemmas in Section~\ref{sec:appraoch}, split into five phases. Here, we only give a high-level overview. \ref{phase1} selects $\Hs$ as above. \ref{phase2} approximates the mixed relaxation by another mixed integer program~\eqref{prog:decomposedUtility} that assigns items $\G\setminus \Hs$ fractionally to the agents, 
and at most one item from $\Hs$ to each agent. This is not a relaxation of the original problem anymore, as an optimal solution may allocate multiple items from $\Hs$ to the same agent. However, \eqref{prog:decomposedUtility} approximates the original mixed 
within a factor  $\g$. 
We note that this is the only part of our reductions that depends on the bound $\gamma$.

Solving \eqref{prog:decomposedUtility} still does not turn out to be easy. In  \ref{phase3}, we find a 2-approximate solution by first solving the  restriction to $\G\setminus \Hs$---a convex program---then optimally assigning the items in $\Hs$.

All reductions thus far work for general subadditive valuations. In \ref{phase4} we exploit combinatorial  properties of the  concave extension of Rado valuations to obtain a sparse solution. We first show that the restriction of~\eqref{prog:decomposedUtility} to $\G\setminus \Hs$ has a basic optimal solution 
with at most $|\A|+2|\G \setminus \Hs|$ non-zero variables. 
We note that this is on its own an interesting new  \emph{rational convex program} \cite{Vazirani2012}, the first example we are aware of with an exponential number of constraints, given by a separation oracle.
We then further sparsify the solution to 
at most $2|\A|+|\G \setminus \Hs|$ non-zero variables, at the expense of losing at most half of the objective value. 

At this point, we have a mixed integer solution that is not too far from an integral one. 
Namely, $\Hs$ is already allocated integrally and 
 $\G\setminus \Hs$ is allocated to agents fractionally but with at most $2|\A|+|\G \setminus \Hs|$ non-zero variables. 
Thus, it suffices to fix a suitable subset of $2|\A|$ fractional variables to zero of the non-zero to obtain a feasible solution, and round the rest of the variables to 1.
However, this may not be viable for any subset.

In the final~\ref{phase5}, we make use of the choice of  $\Hs$ as the set of items allocated in the best allocation with one item per agent. Using this property, we carefully recombine the matching in the mixed assignment and the initial allocation of the items in $\Hs$ by swapping around alternating cycles. This enables the final rounding step to obtain an integer allocation.

\subsection{Further related work}
We briefly mention further results on Nash social welfare, utilitarian social welfare and max-min welfare.

\paragraph{Nash social welfare} 
NSW has turned out to be the focal point in fair division. Amenable fairness properties of NSW are underlined by~\citet{caragiannis2019unreasonable}, who call the solution  ‘unreasonably’ fair and efficient. The same paper introduces an algorithm for finding optimum NSW allocation, which is deployed on the website \href{http://www.spliddit.org}{spliddit.org} and used for fair allocation of indivisible goods~\cite{goldman2015spliddit}. Approximation algorithms for the NSW also preserve many nice fairness properties, as shown in~\cite{CaragiannisGH19,chaudhury2020fair,GargM20}. 

\paragraph{Utilitarian social welfare} In this setting, the goal is to find a partition of the items that maximizes the sum of agents' valuations. Note that this problem is straightforward for additive valuations. For gross substitutes valuations (see Definition~\ref{def:gs}), the optimal partition corresponds to a \emph{Walrasian equilibrium}: there exists a price vector such that each agent receives an optimal bundle at such prices. Such an allocation can be efficiently computed \cite{Gul1999,Kelso1982}.  G\"ul and Stachetti \cite{Gul1999} also showed that the converse is essentially true: if a class $\cal C$ of valuation functions contains all unit demand valuations, and there exists a Walrasian equilibrium for an arbitrary choice of valuation functions from $\cal C$, then $\cal C$ must be a subset of gross substitutes valuations.

For submodular valuations there is an $\frac{e}{e-1}\approx 1.5819$-approximation algorithm by~\citet{vondrak2008optimal} and this is the best possible~\cite{khot2008inapproximability}. \citet{feige2009maximizing} gave a 2-approximation algorithm for the social welfare problem under subadditive valuations assuming access to particular demand queries.

\paragraph{Max-min welfare} In this problem the objective is to maximize the minimum valuation of any agent.
This NP-hard problem can be seen as an absolute fairness problem and it has been appropriately named the Santa Claus problem~\cite{bansal2006santa}. It is a significant open problem to obtain a 
constant-factor approximation for additive valuations: such algorithms are known only for a restricted subclasses, see \citet{annamalai2017combinatorial,davies2020tale}. 
For additive (resp. submodular) valuations the best approximation factor is $O(\sqrt[3]{n} \log^3 n)$ by~\citet{asadpour2010approximation} (resp. $O(n)$ by~\citet{khot2007approximation}).

\paragraph{Organization of the paper} In Section~\ref{section:preliminaries} we formally define all the notation and concepts. 
Here, we also explain the significance of the gross substitutes and Rado valuations for the problem and our approach. 
Missing proofs from Section~\ref{section:preliminaries}, as well as a detailed discussion of Rado valuations 
are presented in Section~\ref{section:utilityDetails}.
In Section~\ref{sec:appraoch} we give a rigorous overview of the algorithm together with main lemmas proof ideas. 
Sections~\ref{sec:approximatingMatchingRelaxation}-\ref{section:rounding} contain more detailed arguments for the various phases. Section~\ref{section:connectionToSR} compares our approach with the spending restricted equilibria in \cite{cole2015approximating}. Concluding remarks are given in Section~\ref{sec:conclusions}.

\section{Preliminaries}
\label{section:preliminaries}
Throughout, we let $\G$ denote a finite set of $m$ indivisible items (goods), and $\A$ a set of $n$ agents. Each of the agents $i\in \A$ are equipped with a valuation function $\val_i:2^{\G}\to \R$. 
Throughout, we use the shorthand notation $\val_{ij}=\val_i(\{j\})$ to denote the valuation of agent $i$ for a single unit of item $j$. 

Given a subset $S \subseteq \G$ we will denote with $\chi_S$ the characteristic vector of $S$.
For $k\in\Z$, we let $[k]=\{1,2,\ldots,k\}$. 
A {\em bipartite graph} $(U,V;E)$ has node set $U\cup V$ and an undirected edge set $E\subseteq U\times V$. For an edge subset $F\subseteq E$, we let $\delta_U(E)$ and $\delta_V(E)$ denote the set of endpoints of $E$ in $U$ and in $V$, respectively.

A {\em matroid} on a finite ground set $\T$ is given as $\M=(\T,\I)$, where $\I\subseteq 2^\T$ is a nonempty collection of {\em independent sets}. This collection is required to satisfy the {\em independence axioms}:
\begin{enumerate}[label=(I\arabic*)]
\item\label{ax:mon} {\em Monotonicity:} if $X\in \I$ then $Y\in \I$ for all $Y\subseteq X$, and
\item\label{ax:ex} {\em Exchange property:} if $X,Y\in \I$, $|X|<|Y|$, then there exists an $y\in Y\setminus X$ such that $X\cup\{y\}\in \I$.
\end{enumerate}
The \emph{rank function} $\f_{\M}:2^\T\to \Z_+$ associated with the matroid $\M$ is defined with $\f_{\M}(X)$ denoting the size of the largest independent subset of $X\subseteq \T$. A fundamental property implied by \ref{ax:ex} is that every maximal independent set in $X$ has size $\f_{\M}(X)$. The value $\f_{\M}(\T)$ is called the rank of the matroid, and the maximal independent sets are called {\em bases}. 
A set $X\subseteq \T$ is in $\I$ if and only if $r(X)=|X|$.
We refer the reader to \cite[Part IV]{schrijver2003combinatorial} for matroids and their role in optimization.

\subsection{Valuation functions}
\label{section:valuationFunctions}
By a \emph{valuation function}, we mean a  function $\val: 2^{\G}\to \R$  with $\val(\emptyset)=0$. 
Let us start with two simple examples of valuations. 
The function $\val$ is an \emph{additive valuation} if $\val(S)=\sum_{j\in S} \val_{j}$,
 and a \emph{unit demand  valuation} if $\val(S)=\max_{j\in S} \val_j$ where $\val_j \in \R_+$ represents the value of item $j\in \G$.

We now define some basic properties.
A function $\val: 2^{\G}\to \R_+$  is \emph{monotone} if $\val(X)\le \val(Y)$ for any $X\subseteq Y\subseteq \G$, \emph{subadditive} if 
\begin{equation}\label{eq:subadd}
\val(X)+\val(Y)\ge \val(X\cup Y)\quad \forall X,Y\subseteq \G\, ,
\end{equation}
and \emph{submodular} if
\begin{equation}\label{eq:submod}
\val(X)+\val(Y)\ge\val(X\cap Y)+\val(X\cup Y)\quad \forall X,Y\subseteq \G\, .
\end{equation}
Additive valuations and unit demand valuations satisfy all the above properties. 
Another basic example of submodular functions is the rank function $\f_\M$ of a matroid $\M=(V,\I)$. In fact, every integer valued monotone submodular set function on $V$ with $\val(X)\le |X|$ arises as the rank function of a matroid. Given a weighting $g\in\R^V$, the \emph{weighted rank function} $\f_g(X)$ is the maximum $g$-weight of a maximal independent set in $X$; this function is also submodular.

\paragraph{Gross substitute valuations} For a price vector $p\in \R^\G$ and a subset $S\subseteq \G$, we let $p(S)=\sum_{j\in S} p_j$. For a valuation function $\val: 2^{\G}\to \R_+$, the utility obtainable at prices $p$ from a set $S\subseteq\G$ is $\val(S)-p(S)$.
The  \emph{demand correspondence} is defined as 
\[
D(\val,p):=\argmax_{S\subseteq \G} \val(S)-p(S)\, .
\]
An important class of valuation functions is \emph{gross substitutes valuations}, defined by Kelso and Crawford in 1982 \cite{Kelso1982}:
\begin{definition}\label{def:gs}
The  valuation function  $\val: 2^{\G}\to \R_+$ is a  \emph{gross substitutes (GS) valuation} if for any $p,p'\in \R^{\G}$ such that $p'\ge p$ and any $S\in D(\val,p)$, there exists an $S'\in D(\val,p')$ such that $S\cap\{j: p_j=p'_j\}\subseteq S'$.
\end{definition}
That is, if we have an optimal bundle at prices $p$ and increase some of the prices, then there will be an optimal bundle that contains all items whose price remained unchanged. For a comprehensive survey on GS valuations, we refer the reader to the survey by Paes Leme
\cite{Leme2017}.

G\"ul and Stachetti~\cite{Gul1999} showed that every gross substitutes valuation is submodular. 
It turns out that gross substitute functions are intimately connected to  \emph{discrete convex analysis}, a general theory arising at the intersection of convex analysis and submodularity. 

Murota's book \cite{Murotabook} gives a comprehensive treatment of this field.
A central concavity concept on the integer lattice is that of \emph{M\nat-concave functions}. The definition specialized for valuation functions (corresponding to the sublattice $\{-\infty,0\}^\G$) is as follows.
\begin{definition}\label{def:M-conc}
The function  $\val: 2^{\G}\to \R_+$ is an \emph{M\nat-concave} if 
for any $X,Y\subseteq \G$ and $x\in X\setminus Y$,
\[
\val(X)+\val(Y)\le\max_{Z\subseteq Y\setminus X, |Z|\le 1} v((X\setminus\{x\})\cup Z)+v((Y\setminus Z)\cup\{x\})
\]
\end{definition}
That is, for any $x\in X\setminus Y$, the sum $\val(X)+\val(Y)$ is either non-decreasing if we move $x$ from $X$ to $Y$, or the sum is non-decreasing  by swapping $x$ for some $y\in Y\setminus X$. As established by Fujishige and Yang \cite{Fujishige2003}, these two concepts are equivalent:
\begin{theorem}[{\cite{Fujishige2003}}]\label{thm:gs-mnat}
The valuation function $\val: 2^{\G}\to \R_+$ is a gross substitutes valuation if and only if it is M\nat-concave.
\end{theorem}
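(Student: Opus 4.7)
The plan is to establish the two directions separately, each by a local-exchange argument.

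\paragraph{M\nat-concave $\Rightarrow$ GS.} Fix $p\le p'$ and $S\in D(\val,p)$. Let $E=\{j : p_j=p'_j\}$. Among all $S^{*}\in D(\val,p')$, choose one maximizing $|S^{*}\cap S\cap E|$. Suppose for contradiction that some $x\in S\cap E$ is not in $S^{*}$. Apply the M\nat-concavity inequality to $X:=S$, $Y:=S^{*}$ and this $x$: either
\[
\val(S\setminus\{x\})+\val(S^{*}\cup\{x\})\ge \val(S)+\val(S^{*}),
\]
or there exists $y\in S^{*}\setminus S$ with
\[
\val((S\setminus\{x\})\cup\{y\})+\val((S^{*}\setminus\{y\})\cup\{x\})\ge \val(S)+\val(S^{*}).
\]
In the first case, since $p_x=p'_x$, the total utility at $(p,p')$ of the new pair equals the old one, hence $S^{*}\cup\{x\}\in D(\val,p')$, contradicting the maximality of $|S^{*}\cap S\cap E|$. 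In the second case, a direct bookkeeping gives that swapping $x$ and $y$ changes the utility sum by $(p'_x-p_x)+(p_y-p'_y)=p_y-p'_y\le 0$; combined with the value inequality, both bundles must remain optimal and $p'_y=p_y$, so $y\in E$. Then $(S^{*}\setminus\{y\})\cup\{x\}\in D(\val,p')$ again has strictly larger intersection with $S\cap E$, contradiction. Hence $S\cap E\subseteq S^{*}$, which is the GS property.

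\paragraph{GS $\Rightarrow$ M\nat-concave.} Fix $X,Y\subseteq\G$ and $x\in X\setminus Y$. The strategy is to produce prices at which a suitable enlargement of $Y$ containing $x$ is demanded, then use the price-monotonicity form of GS to force an exchange with $X$. Concretely, I would first invoke the Single Improvement Property (known to be equivalent to GS): for any $T\notin D(\val,p)$ there exists $T'$ with $\val(T')-p(T')>\val(T)-p(T)$ and $|T\triangle T'|\le 2$. Starting from $T=Y\cup\{x\}$ at well-chosen prices that make $X$ demanded and $Y\cup\{x\}$ near-demanded (using submodularity, which follows from GS, to control the marginals of $x$ against $Y$ versus against $X\setminus\{x\}$), a single-improvement step must be of the form ``add some $y\in X\setminus (Y\cup\{x\})$ and/or remove some element'' — exactly the two cases in the M\nat-concavity inequality. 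Translating the utility inequality back to values via the chosen prices yields the required bound.

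\paragraph{Main obstacle.} The harder direction is GS $\Rightarrow$ M\nat-concave: one cannot in general find a single price vector making both $X$ and $Y$ simultaneously demanded, so the naive ``raise $p_x$ and apply GS'' does not work directly on the pair $(X,Y)$. The workaround is to route the argument through the Single Improvement Property and submodularity, which together localize the exchange to a pair of elements — precisely matching the form of Definition~\ref{def:M-conc}. Verifying SI from the price-monotonicity definition of GS, and then aligning the SI swap with the required $x\mapsto Z\subseteq Y\setminus X$ swap, is the delicate combinatorial step; the reverse direction is, by contrast, a fairly short and clean choice-of-witness argument.
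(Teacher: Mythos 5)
The paper does not prove Theorem~\ref{thm:gs-mnat} at all: it is quoted from Fujishige and Yang \cite{Fujishige2003}, so there is no internal proof to compare against and your attempt must stand as a self-contained proof of the cited result. Your first direction (M\nat-concave $\Rightarrow$ GS) is essentially correct and is the standard witness-choice argument: choose $S^*\in D(\val,p')$ maximizing $|S^*\cap S\cap\{j:p_j=p'_j\}|$, apply the exchange inequality of Definition~\ref{def:M-conc} to $(S,S^*,x)$, and use optimality at $p$ and at $p'$ to force both modified bundles to remain demanded. One bookkeeping remark: in your second case the price contribution to the change in \emph{total utility} is $p'_y-p_y\ge 0$ (total spending weakly drops while total value weakly rises), and it is this direction of the inequality, combined with the fact that neither new bundle can beat an optimal one, that forces equality throughout and $p_y=p'_y$; as written, your sign would not close the argument, though the fix is immediate.

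The converse, GS $\Rightarrow$ M\nat-concave, is where the entire content of the theorem lies, and there you have a plan rather than a proof. First, you import the Single Improvement property as ``known to be equivalent to GS''; that equivalence (Gul--Stacchetti) is itself a nontrivial theorem, and even granting it, the passage from SI to M\nat-concavity is essentially the Fujishige--Yang theorem you are trying to prove. Second, the ``well-chosen prices'' that make $X$ demanded and $Y\cup\{x\}$ near-demanded are never exhibited; as you yourself observe, no single price vector need make both $X$ and $Y$ demanded, so constructing prices with the required properties (and handling ties and limits) is exactly the delicate point, not a detail. Third, the target inequality in Definition~\ref{def:M-conc} concerns a \emph{coordinated} exchange that modifies both bundles --- $x$ leaves $X$ for $Y$ while some $y$, taken from $Y\setminus X$ (not from $X\setminus(Y\cup\{x\})$ as in your sketch), possibly moves the other way --- and it bounds the sum $\val((X\setminus\{x\})\cup Z)+\val((Y\setminus Z)\cup\{x\})$. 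A single-improvement step applied to the one bundle $Y\cup\{x\}$ at one price vector yields no statement about the companion bundle $(X\setminus\{x\})\cup Z$, so ``translating the utility inequality back to values'' is not a routine final step but the heart of the argument. Until the price construction is specified and that translation is actually carried out, the harder direction is missing and the proof as a whole is incomplete.
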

This connection has enabled a fruitful interaction between the areas of mechanism design and discrete convexity, see e.g. \cite{murota2016discrete,Leme2017}.

\paragraph{Rado valuations} 
The key class of valuation functions for this paper will be 
 \emph{Rado valuation functions}, or \emph{Rado valuations}, defined next.
\begin{definition}\label{def:Rado}
Assume we are given a bipartite graph $(\G, \T; \E)$ with a cost function $\co : \E \to \R$ on the edges, and a matroid $ \M = (\T, \I)$.
For a subset of items $S \subseteq \G$, the \emph{Rado valuation function} $\val(S)$ is defined as the maximum cost of a matching $M$ in $(\G, \T; \E)$
such that $\delta_{\G}(M) \subseteq S$ and $\delta_{\T}(M) \in \I$, i.e.,
\begin{equation}\label{eq:Rado-def}
  \val(S) := \max\left\{\sum_{e\in M} \co(e) : M \text{ is a matching, } \delta_{\G}(M) \subseteq S, \delta_{\T}(M) \in \I\right\}\,.
\end{equation}
\end{definition}
We propose to name this class in honor of Richard Rado, who first studied the independent matching problem \cite{Rado1942}.

Let us consider the special case where the matroid $\M$ is the free matroid on $\T$, i.e., $\I = 2^{\T}$. 
In this case, the matroid constraints $\delta_{\T}(M) \in \I$ are void. 
The value of a set $S$ it then 
the maximum cost matching in the bipartite subgraph induced by $S \cup \T$. 
Such valuations are called \emph{assignment valuations} by~\citet{shapley1962optimal}, 
and \emph{OXS valuations} by~\citet{DBLP:journals/geb/LehmannLN06}. 

As another example of Rado valuations,
consider the case where $\T$ is a copy of the set of items $\G$, 
with each $j\in \G$ having a corresponding $j'\in \T$, 
and let $\E=\{(j,j'):\, j\in \G\}$. 
Let $g:\G\to \R$, and $\co_{jj'}=g_j$ for all $j\in \G$, and let $\f$ be rank function of $\M$. 
In this case the $\val(S)$ equals the weighted matroid rank function $\f_g(S)$. 

Assignment valuations and weighted matroid rank functions are well-known examples of M\nat-concave (and, according to Theorem~\ref{thm:gs-mnat}, gross substitutes) functions. 
We show that this is true in general for Rado valuations.

\begin{restatable}{lemma}{RadoIsMconcave}
\label{lemma:RadoIsMconcave}
Every Rado valuation $v:2^\G \to \R$ is an M\nat-concave function.
\end{restatable}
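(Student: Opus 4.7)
The plan is to verify the M\nat-concavity exchange inequality of Definition~\ref{def:M-conc} by an alternating-path manipulation on the bipartite graph underlying the Rado valuation. Fix $X,Y\subseteq\G$ and $x\in X\setminus Y$, and choose optimal matchings $M_X$, $M_Y$ realizing $v(X)$ and $v(Y)$ in~\eqref{eq:Rado-def}. The trivial case is $x\notin\delta_\G(M_X)$: then $M_X$ itself witnesses $v(X\setminus\{x\})\ge v(X)$ and the inequality holds with $Z=\emptyset$, so henceforth I assume $x$ is matched by some edge $e_x=(x,t_x)\in M_X$.

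Viewing $M_X\cup M_Y$ as a bipartite multigraph in which every vertex has degree at most two, it decomposes into vertex-disjoint alternating paths and even alternating cycles. Because $x$ is saturated by $M_X$ but not by $M_Y$, it is an endpoint of a unique alternating path $P$ in this decomposition. Let $u$ denote the other endpoint of $P$, and define $M_X':=M_X\triangle E(P)$ and $M_Y':=M_Y\triangle E(P)$; both are matchings, and since $P$ alternates colors, $\co(M_X')+\co(M_Y')=\co(M_X)+\co(M_Y)=v(X)+v(Y)$. On the $\G$-side one computes $\delta_\G(M_X')=(\delta_\G(M_X)\setminus\{x\})\cup(\{u\}\cap\G)$ and $\delta_\G(M_Y')=(\delta_\G(M_Y)\cup\{x\})\setminus(\{u\}\cap\G)$; a brief case analysis on whether $u$ lies in $\T$, in $Y\setminus X$, or in $X$ selects a valid $Z$ with $|Z|\le 1$ so that the new $\G$-supports are contained in $(X\setminus\{x\})\cup Z$ and $(Y\setminus Z)\cup\{x\}$, respectively.

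The main obstacle is the $\T$-side, i.e.\ verifying $\delta_\T(M_X'),\delta_\T(M_Y')\in\I$ so that $M_X'$ and $M_Y'$ are feasible in~\eqref{eq:Rado-def} for their respective supports. Internal $\T$-vertices of $P$ are incident to edges of both $M_X$ and $M_Y$, so they remain in both outputs; the only non-trivial change occurs when $P$ terminates at a $\T$-vertex $u$, in which case $\delta_\T(M_X')=\delta_\T(M_X)\setminus\{u\}$ is independent by monotonicity~\ref{ax:mon}, but $\delta_\T(M_Y')=\delta_\T(M_Y)\cup\{u\}$ may fail independence. My plan to handle this is to fix $(M_X,M_Y)$ lexicographically among all optimal pairs (for instance after an infinitesimal cost perturbation), which constrains how the two independent sets can interact along $P$; if a $\T$-side circuit still appears at $u$, I would truncate $P$ at the first $\T$-vertex where the exchange axiom~\ref{ax:ex} applied to $\delta_\T(M_X)$ and $\delta_\T(M_Y)$ permits a single-element swap preserving independence on both sides, which still yields a $\G$-side exchange of the required form with $|Z|\le 1$ and sum of costs at least $v(X)+v(Y)$. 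A parallel and arguably cleaner route I would pursue is to invoke preservation results in Murota's discrete convex analysis~\cite{Murotabook}: weighted matroid rank functions are M\nat-concave, and M\nat-concavity is preserved under the ``induction along a bipartite graph'' operation that produces~\eqref{eq:Rado-def}, yielding Lemma~\ref{lemma:RadoIsMconcave} immediately.
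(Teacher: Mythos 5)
Your second, ``parallel'' route is in fact the paper's proof: the paper invokes Murota's \emph{transformation by networks} theorem \cite[Chapter 9.6.1]{Murotabook}, applying it not to a weighted matroid rank function but to the $\{0,-\infty\}$-valued indicator of the independent sets of $\M$ (the edge costs already live on the bipartite graph), and --- since that theorem imposes exact flow conservation at the $\G$-nodes --- it first adds dummy nodes, zero-cost dummy edges, and free dummy matroid elements to convert the condition $\delta_{\G}(M)\subseteq S$ into an equality constraint. These are routine adjustments, so if you carry out this route carefully you recover the paper's argument essentially verbatim.

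Your primary route, however, has a genuine gap, and it sits exactly where you flag it: restoring independence on the $\T$-side after swapping along the alternating path. Neither proposed fix is shown to work. Lexicographic or perturbed choice of the optimal pair $(M_X,M_Y)$ gives no stated structural consequence for how $\delta_\T(M_X)$ and $\delta_\T(M_Y)$ interact along $P$, and the truncation idea is not even well defined as stated: if you symmetric-difference only a prefix $P[x..t]$ ending at an internal $\T$-vertex $t$, then $t$ is incident to one edge of $M_X$ and one of $M_Y$, and after the swap one of the two modified edge sets acquires both of them, so it is no longer a matching (or, in the other parity, you delete an edge at $t$ without any guarantee that $\delta_\T(M_Y)\cup\{t'\}$ is independent at whatever vertex you do add). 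Moreover a single application of the exchange axiom \ref{ax:ex} to $\delta_\T(M_X),\delta_\T(M_Y)$ does not obviously translate into a cost-preserving rerouting of the matchings: the element you are allowed to swap in the matroid need not be reachable by the path you have, and chasing it may require further alternating augmentations whose existence is precisely the content of Rado's theorem and its valuated refinement. In short, what you need here is the exchange property of valuated matroids induced through a bipartite graph, which is nontrivial and is exactly what Murota's preservation theorem supplies; the naive one-path swap with ad hoc repairs does not constitute a proof. I would therefore promote your second route to the main argument and drop, or substantially rework, the first.
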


The proof is given in Section~\ref{section:utilityDetails}, using a more general construction by Murota \cite{Murotabook}. 
It is worth noting that in 2003, Frank posed the question on whether the converse is also true: is the class of M\nat-concave functions the same as those of Rado valuations?\footnote{Personal communication by Andr\'as Frank. See also Kazuo Murota's lecture~\cite{lecture}, the problem sheet~\cite{problemSheet}, and Renato Paes Leme's lecture~\cite{lectureRenato}.}
In Section~\ref{sec:conjectures} we use an example from \cite{DBLP:journals/geb/LehmannLN06}  showing that this is not the case. The main underlying reason is that this class is not minor closed. We then formulate a stronger conjecture, and mention an earlier conjecture by Ostrovsky and Paes Leme
\cite{ostrovsky2015gross}, partially refuted by Tran \cite{Tran2019}.

\subsection{Continuous valuation functions}
\label{section:utility}
The valuation functions $\val$ in the Nash social welfare problem are defined on subsets of $\G$. Our arguments are based on convex relaxations, which requires a continuous extension of the valuation functions to $\R_+^\G$. We provide such an extension for Rado valuations; however, we note that a suitable extension does not even exist for general submodular valuations.

By a \emph{continuous valuation function} we mean a continuous function $\val:[0,1]^\G\to \R$ with $\val(\mathbf{0})=0$. We slightly abuse the notation by using $\val$ to denote both  discrete and continuous valuations;  
the value of a subset $S\subseteq \G$ of items will be $\val(\chi_S)=\val(S)$.
Extending notions from discrete valuations, a function $f:\R_+^\G\to\R_+$ is \emph{monotone} if $f(x)\le f(y)$ for $x\le y$,  $x,y\in \R_+^{\G}$, and \emph{subadditive} if $f(x+y)\le f(x)+f(y)$ for any $x,y\in [0,1]^{\G}$ such that $x+y\in [0,1]^\G$.

Whereas our overall result requires the continuous extension of \emph{Rado valuations},  much weaker assumptions suffice for most parts of the argument, as formulated next. 
\begin{assumption}
\label{monotonicity}
  For every agent $i \in \A$ the continuous valuation function $\val_i:[0,1]^\G\to \R_+$ is  monotone, concave, and subadditive. 
\end{assumption}

\paragraph{Concave extensions of discrete valuations} 
For any discrete valuation function $\val:2^\G\to\R$, we can define the \emph{concave closure} $\bar\val:[0,1]^\G\to \R$ as 
\begin{equation}\label{eq:bar-val}
\bar\val(x):=\inf_{p\in \R^\G,\alpha\in \R}\left\{\langle p,x\rangle +\alpha: p(S) +\alpha\ge \val(S)\quad \forall S\subseteq \G\right\}\, ,
\end{equation}
see e.g. \cite[Section 3.4]{Murotabook}. As the infimum of linear functions, $\bar\val$ is always concave. Note that it provides the concave upper envelope of the function $\val$ defined on the discrete set $\{0,1\}^\G$, meaning that $\bar\val\le f$ for every concave function $f:\R_+^\G\to \R$ such that $\val(S)\le f(\chi_S)$ for all $S\subseteq \G$.

We leave it to the reader as an exercise to verify that for an additive valuation $\val(S)=\sum_{j\in S} \val_j$, the concave closure is the linear function $\bar\val(x)=\langle \val,x\rangle$.

Whereas the extension $\bar\val$ can be defined and is concave for every valuation function $\val$, evaluating $\bar\val(x)$ can be a hard problem. For example, in the case of submodular valuations, deciding whether $p(S)+\alpha\ge \val(S)$ holds for all $S\subseteq \G$ amounts to submodular maximization and is thus NP-hard. Computing $\bar\val(x)$ amounts to minimization over a polyhedron $P$ where separation is NP-hard; by the polynomial equivalence of optimization and separation \cite{gls}, it follows that evaluating $\bar\val(x)$ is NP-hard for submodular functions (see also \cite[Lemma 6.15]{submodularExtension}).

Apart from computational hardness, another problem is that $\bar\val(\chi_S)>\val(S)$ may be possible for $S\subseteq \G$. If $\bar\val(\chi_S)=\val(S)$ for all subsets $S\subseteq \G$, then we say that $\bar\val$ is the \emph{concave extension} of $\val$, and that $\val$ is \emph{concave extensible}. 

Theorem 6.43 in \cite{Murotabook} asserts that all M\nat-concave functions are concave extensible, and the converse is also essentially true. This underlines the importance of gross substitutes/M\nat-concave valuations for our approach: this is the subclass of valuations where we can naturally use convex relaxation techniques.
We also note that for M\nat-concave functions, the concave extension can be evaluated in polynomial time. This is since, in contrast with general submodular functions, M\nat-concave functions can be efficiently maximized with a simple greedy algorithm.

\paragraph{The concave extension of Rado valuations} For the case of Rado valuations, we now give an explicit description of the concave extension by a linear program.
This representation of the concave extension is at the core of the arguments in Section~\ref{section:sparsifying}, 
where we argue about the existence of a sparse optimal solution of a particular convex program. 

\begin{theorem}\label{thm:integral}
Consider a Rado valuation $v:2^\G\to \R$ given by a bipartite graph $(\G, \T; \E)$ with costs on the edges $\co : \E \to \R$,
and a matroid $ \M = (\T, \I)$ with a rank function $\f=\f_\M$ as in Definition~\ref{def:Rado}. 
For $x\in [0,1]^{\G}$, let us define
\begin{equation}\label{prog:utility}
\begin{aligned}
   \nu(x):= &\quad \max \quad \sum_{(j,k)\in \E}    \co_{jk} z_{jk}   \\
   &\begin{aligned}
    \text{s.t.: }   && \sum_{k\in \T } z_{jk} &\le x_j     &&\quad \forall j \in \G \\
                    &&\sum_{j \in \G, k \in T} z_{jk} &\le \f(T)        &&\quad \forall T \subseteq \T \\
                  &&z &\ge 0  \,.      &&
    \end{aligned} 
\end{aligned}
\end{equation}
Then, $\nu=\bar\val$ is the concave extension of $\val$, and satisfies Assumption~\ref{monotonicity}.
\end{theorem}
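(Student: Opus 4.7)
The plan is to establish in turn that $\nu$ is concave, monotone, and subadditive on $[0,1]^\G$, and that it coincides with the concave closure $\bar\val$ from~\eqref{eq:bar-val}. Concavity and monotonicity of $\nu$ are immediate from the LP: if $z^{(1)}, z^{(2)}$ are optimal for right-hand sides $x^{(1)}, x^{(2)}$, then $\lambda z^{(1)} + (1-\lambda) z^{(2)}$ is feasible at $\lambda x^{(1)} + (1-\lambda) x^{(2)}$, and increasing $x$ only relaxes the constraint $\sum_k z_{jk} \le x_j$.

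For subadditivity, I would take $x, y$ with $x+y \in [0,1]^\G$ and an optimal $z$ for right-hand side $x+y$. Setting $\alpha_j := \min\{1, x_j / \sum_k z_{jk}\}$ (with $\alpha_j := 1$ when the denominator vanishes) and splitting $z^x_{jk} := \alpha_j z_{jk}$, $z^y_{jk} := (1-\alpha_j) z_{jk}$, one has $\sum_k z^x_{jk} \le x_j$ by construction and $\sum_k z^y_{jk} \le \sum_k z_{jk} - x_j \le y_j$. The matroid constraints transfer automatically because $z^x, z^y \le z$ componentwise, so $\sum_{j,\, k\in T} z^x_{jk} \le \sum_{j,\, k\in T} z_{jk} \le \f(T)$ for every $T \subseteq \T$, and similarly for $z^y$. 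Hence $\nu(x+y) = \co^\top z = \co^\top z^x + \co^\top z^y \le \nu(x) + \nu(y)$. This splitting step is essential: concavity together with $\nu(\0)=0$ does not in general imply subadditivity (for instance, $\min(x_1,x_2)$ is concave and vanishes at the origin, yet is superadditive).

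The core step is the identity $\nu = \bar\val$. At $x = \chi_S$, the LP~\eqref{prog:utility} is precisely the LP relaxation of the maximum-cost independent matching problem defining $\val(S)$, viewed as the intersection of the partition matroid on $\E$ (rank $1$ per $\G$-endpoint, with $z_{jk} = 0$ enforced for $j \notin S$) and the matroid on $\E$ induced from $\M$ through the $\T$-endpoints. By Edmonds' matroid intersection theorem this polytope is integral, so $\nu(\chi_S) = \val(S)$. Since $\nu$ is concave and equals $\val$ at every vertex of $[0,1]^\G$, and $\bar\val$ is by definition the smallest concave majorant of $\val$ on those vertices, we obtain $\nu \ge \bar\val$; in particular $\bar\val(\chi_S) = \val(S)$.

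For the reverse inequality $\nu \le \bar\val$, fix $x \in [0,1]^\G$ and an LP-optimal $z^*$ for~\eqref{prog:utility}. Because $x_j \le 1$, $z^*$ also lies in the independent-matching polytope $Q := \{z \ge 0 : \sum_k z_{jk} \le 1,\; \sum_{j,\, k \in T} z_{jk} \le \f(T)\;\forall T \subseteq \T\}$, whose vertices are characteristic vectors of independent matchings (Edmonds, again). Decomposing $z^* = \sum_i \lambda_i \chi_{M_i}$ with $\lambda \in \Delta$ and setting $S_i := \delta_\G(M_i)$ yields $\sum_i \lambda_i \chi_{S_i} \le x$ and $\nu(x) = \co^\top z^* = \sum_i \lambda_i \co(M_i) \le \sum_i \lambda_i \val(S_i)$. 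Monotonicity of $\bar\val$ (a standard consequence of $\val$ being monotone, since any optimal affine majorant in~\eqref{eq:bar-val} can be assumed to have $p \ge 0$) combined with its concavity then gives
\[
  \bar\val(x) \;\ge\; \bar\val\!\left(\textstyle\sum_i \lambda_i \chi_{S_i}\right) \;\ge\; \sum_i \lambda_i \bar\val(\chi_{S_i}) \;=\; \sum_i \lambda_i \val(S_i) \;\ge\; \nu(x),
\]
completing the proof. The main technical ingredient is Edmonds' integrality of the matroid-intersection polytope, invoked twice; the remaining steps are routine.
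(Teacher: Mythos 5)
Your proof is correct, but it reaches the key identity $\nu=\bar\val$ by a somewhat different route than the paper. The paper (this theorem together with Lemma~\ref{lemma:concaveClosure}) proves $\bar\val\le\nu$ by taking an optimal decomposition $\bar\val(x)=\sum_S\lambda_S\val(S)$ from the dual LP in \eqref{eq:extend-dual} and superposing integral optimal matchings $M_S$ into a feasible solution of \eqref{prog:utility}, and proves $\nu\le\bar\val$ by decomposing an optimal $z$ into integral matchings and invoking complementary slackness with an optimal dual of \eqref{prog:utility} to certify that each matching in the support is optimal for its own bundle, $c(M_S)=\val(S)$. You instead obtain $\bar\val\le\nu$ directly from the upper-envelope minimality of $\bar\val$ (the note following \eqref{eq:bar-val}) once $\nu$ is shown concave and equal to $\val$ at the vertices, and for $\nu\le\bar\val$ you need only feasibility, $c(M_i)\le\val(S_i)$, absorbing the slack $\sum_i\lambda_i\chi_{S_i}\le x$ via monotonicity plus concavity (Jensen) of $\bar\val$; this avoids complementary slackness altogether and is arguably cleaner. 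Both arguments rest on the same integrality input (Edmonds/TDI for (poly)matroid intersection, used once for $\nu(\chi_S)=\val(S)$ and once for the vertex decomposition), and your explicit splitting for subadditivity is precisely what the paper's ``easily decompose $z=z'+z''$'' means; your concavity and monotonicity arguments for $\nu$ coincide with the paper's. Two small caveats: you lean on the envelope-minimality property and on monotonicity of $\bar\val$ for monotone $\val$, both standard but unproved (the paper asserts only the former), and your parenthetical justification of the latter is loose as stated --- raising a negative price $p_j$ to $0$ preserves feasibility in \eqref{eq:bar-val} but can only increase the objective, so truncation alone does not yield an optimal majorant with $p\ge0$; the clean argument is via the dual form in \eqref{eq:extend-dual}, where monotonicity of $\val$ lets one pad sets so that the maximum subject to $\sum_S\lambda_S\chi_S\le x$ equals the one with equality, which is exactly the assertion that $p\ge 0$ may be imposed.
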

\begin{proof}
The function $\nu$ is clearly continuous and $\nu(\mathbf{0})=0$, thus, it is a valuation function. 
We postpone the proof that $\nu$ is the concave closure, i.e. $\nu=\bar\val$ to Lemma~\ref{lemma:concaveClosure}.
Let us now show that $\nu$ is a concave extension, namely $\nu(\chi_S)=\val(S)$ for every $S\subseteq \G$. 
First, note that whenever $M'$ is a feasible matching in the definition of $\val(S)$, $\chi_{M'}$ is a feasible solution to \eqref{prog:utility} defining $\nu(\chi_S)$.
 The left hand side of the program defining $\nu(\chi_S)$ is integral, and the feasible region of 
\eqref{prog:utility} is a linear maximization problem over the intersection of two integral submodular polytopes on $\E$. 
Using the total-dual integrality of polymatroid intersection, see
\cite[Theorem 46.1]{schrijver2003combinatorial}, the existence of an integer optimal solution $z\in \Z^E$ is guaranteed. Noting that $\f(\{v\})\le 1$ for every $v\in \T$,  it follows that $z=\chi_M$ for a matching $M$, and $\delta_{\T}(M)$ is independent in $\M$. We conclude that $\nu(\chi_S)=\val(S)$.

Let us now turn to Assumption~\ref{monotonicity}. Monotonicity is immediate. Concavity is implied by Lemma~\ref{lemma:concaveClosure}, but let us also give a simple direct proof.
 Let $x,y\in \R_+$, $\lambda\in [0,1]$, and let $z$ and $z'$ be the optimal solutions in the definition of $\nu(x)$ and $\nu(y)$. Then, it is immediate that $\lambda z+(1-\lambda)z'$ is a feasible solution for the program defining $\nu(\lambda x+(1-\lambda)y)$, showing that $\nu(x)+\nu(y)\le \nu(\lambda x+(1-\lambda)y)$.

For subadditivity, if $z$ is the optimal solution in the program defining $\nu(x+y)$ for some $x,y\in [0,1]^\G$, then we can easily decompose $z=z'+z''$ such that $z'$ is feasible to the program defining $\nu(x)$ and $z''$ if feasible for $y$. Thus, $\nu(x+y)\le \nu(x)+\nu(y)$ follows.
\end{proof}
In the light of this theorem, in the rest of the paper we will denote by $v:[0,1]^\G\to \R$ the continuous Rado valuation defined in \eqref{prog:utility}.

\subsection{Simple upper bounds} We will often use the following simple bounds.
\label{subsection:upperBoundsPreliminaries}
\begin{lemma}\label{lem:productBound} 
Let $n, c \in \N$, $S \subseteq [n]$,  and $1\le w_1, \dots, w_n\le \g-1$.
For $i\in S$ let $\s_i \in \R_+$ such that $\sum_{i \in S} \s_i \le c\cdot n$.
Then
 $$\left( \prod_{i \in S} \s_i^{w_i} \right)^{1/\sum_{i=1}^n w_i} 
\le c\cdot \g \, .$$
\end{lemma}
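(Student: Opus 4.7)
The plan is to apply the weighted arithmetic--geometric mean inequality, but with a preparatory trick: pad the product over $S$ to a product over all of $[n]$ so that the exponents used in AM-GM naturally sum to one. Concretely, let $W := \sum_{i=1}^n w_i$ and define $\y_i := \s_i$ for $i \in S$ and $\y_i := 1$ for $i \in [n]\setminus S$. The coefficients $w_i/W$ then form a probability distribution on $[n]$, so weighted AM-GM yields
$$\left(\prod_{i \in S} \s_i^{w_i}\right)^{1/W} \;=\; \prod_{i=1}^n \y_i^{w_i/W} \;\le\; \frac{1}{W}\sum_{i=1}^n w_i \y_i.$$

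Next I would split the right-hand sum by membership in $S$. Using $w_i \le \g-1$ and $\sum_{i\in S}\s_i \le cn$, the contribution from $S$ satisfies $\sum_{i\in S} w_i \s_i \le (\g-1)cn$; the contribution from $[n]\setminus S$ is $\sum_{i\notin S} w_i \le W$. Dividing by $W$ and using $W \ge n$ (which holds since every $w_i \ge 1$), I obtain
$$\left(\prod_{i \in S} \s_i^{w_i}\right)^{1/W} \;\le\; \frac{(\g-1)cn}{W} + 1 \;\le\; (\g-1)c + 1 \;\le\; c\g,$$
where the last inequality uses $c \ge 1$; the degenerate case $c = 0$ is trivial, since $\sum_{i\in S}\s_i = 0$ then forces $\s_i = 0$ for all $i\in S$.

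I do not expect a genuine obstacle. The only judgment call is the choice of normalization in the AM-GM step. A naive application restricted to $S$, with weights $w_i/W_S$ where $W_S := \sum_{i\in S} w_i$, gives the bound $\bigl((\g-1)cn/W_S\bigr)^{W_S/W}$, whose supremum over $W_S$ requires a case analysis splitting at $(\g-1)c = e$ and a short calculus computation. Padding with ones absorbs the slack $(W-W_S)/W$ into an additive $+1$, which is exactly the gap between $(\g-1)c$ and $c\g$ and so avoids any case analysis.
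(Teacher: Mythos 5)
Your proof is correct and is essentially identical to the paper's: both pad the product with $1$'s over $[n]\setminus S$, apply weighted AM--GM with weights $w_i/\sum_{i=1}^n w_i$, split the resulting sum, and bound it by $(\g-1)c+1\le c\g$ using $\sum_{i\in S}\s_i\le cn$, $\sum_i w_i\ge n$, and $c\ge 1$. No differences worth noting.
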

\begin{proof}
By the (weighted) arithmetic-geometric we have:
\begin{eqnarray*}
\left( \prod_{i \in S} \s_i^{w_i} \right)^{1/\sum_{i=1}^n w_i} 
  &=& \prod_{i \in S} \s_i^{\frac{w_i}{\sum_{i=1}^n w_i}} \cdot \prod_{i \in [n]\setminus S} 1^{\frac{w_i}{\sum_{i=1}^n w_i}} \\
  &\le& \sum_{i \in S} \frac{w_i \s_i}{\sum_{i=1}^n w_i}  + \sum_{i \in [n]\setminus S} \frac{w_i }{\sum_{i=1}^n w_i} 
  \le (\g-1) \frac{\sum_{i \in S} k_i }{\sum_{i=1}^n w_i} + 1 \le c\cdot \g\, . \qedhere
\end{eqnarray*}
\end{proof}

\begin{lemma}\label{lemma:symmetricProductBound} 
Let $n, c\in \N$, $S \subseteq [n]$.
For $i\in S$ let $\s_i \in \R_+$ such that $\sum_{i \in S} \s_i \le c \cdot n$.
Then $$\left( \prod_{i \in S} \s_i \right)^{1/n} \le c \cdot e^{1/e}\,.$$
\end{lemma}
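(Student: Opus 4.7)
The plan is to reduce to the weighted AM–GM inequality after a normalization that cleanly separates the dependence on $c$ from the dependence on $|S|/n$. Write $s := |S|$; if $s=0$ the empty product equals $1\le c\cdot e^{1/e}$ since $c\ge 1$, so assume $s\ge 1$. Applying (unweighted) AM–GM to the $s$ numbers $\{\s_i\}_{i\in S}$ and using the hypothesis $\sum_{i\in S}\s_i\le cn$ gives
\[
\prod_{i\in S}\s_i \;\le\; \left(\frac{1}{s}\sum_{i\in S}\s_i\right)^{s} \;\le\; \left(\frac{cn}{s}\right)^{s}.
\]
Taking $n$-th roots and substituting $t:=s/n\in (0,1]$ yields the key one-variable inequality
\[
\left(\prod_{i\in S}\s_i\right)^{1/n} \;\le\; \left(\frac{c}{t}\right)^{t} \;=\; c^{t}\cdot t^{-t}.
\]

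The next step is to bound the two factors $c^{t}$ and $t^{-t}$ separately, rather than optimizing the joint expression $(c/t)^t$ over $t$ (which would give the weaker $\max\{c,e^{c/e}\}$). Since $c\ge 1$ and $t\in(0,1]$, the factor $c^{t}$ is monotone non-decreasing in $t$, hence $c^t\le c$. For the factor $t^{-t}=\exp(-t\ln t)$, a standard one-variable calculus argument shows that $-t\ln t$ attains its maximum on $(0,1]$ at $t=1/e$ with value $1/e$, so $t^{-t}\le e^{1/e}$. Multiplying the two bounds gives
\[
\left(\prod_{i\in S}\s_i\right)^{1/n} \;\le\; c\cdot e^{1/e},
\]
as claimed.

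There is no real obstacle here; the only subtlety worth noting is why one should \emph{not} directly maximize $(c/t)^t$ as a single function of $t$. Doing so yields a bound of $e^{c/e}$ which exceeds $c\cdot e^{1/e}$ for large $c$. Splitting the expression as $c^t\cdot t^{-t}$ and using monotonicity of $c^t$ on $(0,1]$ to replace $c^t$ by $c$ is what produces the correct linear-in-$c$ bound paralleling Lemma~\ref{lem:productBound}.
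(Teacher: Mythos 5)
Your proof is correct and essentially matches the paper's: both arguments amount to AM--GM for a fixed $|S|$ followed by maximizing the one-variable function $t^{-t}$ (the paper writes it as $\xi^{1/\xi}$ with $\xi = 1/t$, maximized at $\xi = e$), the only cosmetic difference being that you absorb the constant via $c^t \le c$ whereas the paper rescales to $c=1$ at the outset. One small inaccuracy in your closing remark: since $t\le 1$, directly maximizing $(c/t)^t$ over $t\in(0,1]$ yields $c$ when $c\ge e$ and $e^{c/e}\le c\,e^{1/e}$ when $c\le e$, so that route would not in fact give a bound worse than $c\,e^{1/e}$; your splitting is simply the cleaner way to see it.
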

\begin{proof}
We present the proof for $c=1$, the general cases easily reduces to $c=1$ by scaling.   
Without loss of generality, we assume that $\s_i \ge 1$ for $i \in S$. 
For fixed size of $S$ ($k = |S|$), the product $\prod_{i\in S} \s_i$ is maximized when all $\s_i$ are the same. 
Hence, $\left( \prod_{i \in S} \s_i \right)^{1/n} \le \left( \frac{n}{k} \right)^{k/n}$.
Let $\xi = \frac{n}{k}$ then $\left( \frac{n}{k} \right)^{k/n} = \xi^{1/\xi}$.
By the first order conditions, the value $ \xi^{1/\xi}$ achieves the maximum for $\xi = e$.
Hence, $\left( \prod_{i \in S} \s_i \right)^{1/n} \le e^{1/e}$.
\end{proof}

\begin{sloppypar}
Using the similar approach as in the proof of Lemma~\ref{lemma:symmetricProductBound},
one can replace the bound $\left( \prod_{i \in S} \s_i^{w_i} \right)^{1/\sum_{i=1}^n w_i} 
\le c\cdot \g$ in Lemma~\ref{lem:productBound} by the bound $\left( \prod_{i \in S} \s_i^{w_i} \right)^{1/\sum_{i=1}^n w_i} 
\le c\cdot O\left(\frac{\g}{\log(\g)}\right)$. 
This proof is deferred to Section~\ref{section:productBounds}, 
and the exact bound given there is always stronger than in Lemma~\ref{lem:productBound}.  
Moreover, trivially we have $\left( \prod_{i \in S} \s_i^{w_i} \right)^{1/\sum_{i=1}^n w_i} \le c\cdot n$.
Nevertheless, we will use only use Lemma~\ref{lem:productBound} for the asymmetric, and Lemma~\ref{lemma:symmetricProductBound} for the symmetric version of the problem in rest of the paper.
\end{sloppypar}

\section{Overview of the approach}
\label{sec:appraoch}
Let $\val_i$ be a continuous valuation function and $w_i>0$ be the weight for each $i\in \A$. Given a fractional allocation $x=(x_1,\ldots,x_n)\in \R_+^{\A\times \G}$, we let 
\[
\nsw(x):=\left( \prod_{i\in \A} \val_i(x_i)^{w_i} \right)^{1/\sum_i w_i}\, .
\]
Then, the asymmetric Nash social welfare program is captured by the following integer program.
\begin{tagequation}[NSW-IP]
\label{prog:NSW}
\max \nsw(x)\quad \textrm{s.t.} \sum_{i\in \A} x_{ij} \le 1
\  \forall j \in \G, x\in \{0,1\}^{E}\, .
\end{tagequation}
Let $\opt$ denote the optimum value.
The natural relaxation is
\eqref{prog:NSW} is
\begin{equation}\label{prog:vanilla-relax}
\max\nsw(x)\quad \textrm{s.t.} \sum_{i\in \A} x_{ij} \le 1
\  \forall j \in \G, x\ge 0\, .
\end{equation}
The objective is log-concave assuming the $\val_i$'s are concave functions. 
However, \citet[Lemma 3.1]{cole2015approximating} showed that this relaxation has 
unbounded integrality gap already for additive valuations.

We propose a mixed integer programming relaxation instead of~\eqref{prog:vanilla-relax}.
Consider a set of items $\Hs\subseteq \G$. Our mixed relaxation requires 
the items in $\Hs$ to be allocated integrally and the rest can be allocated fractionally.
\begin{tagequation}[Mixed relaxation]
\label{prog:relaxation}
\begin{aligned}
  && \max ~~ &\nsw(x)\\
   \text{s.t.: } \quad && \sum_{i\in \A} x_{ij} &\le 1        && \forall j \in \G \\
   && x_{ij} &\in \{ 0, 1 \}        && \forall j \in \Hs, \forall i \in \A\\
   &&x&\ge 0 \,.
\end{aligned}
\end{tagequation}
This clearly gives a relaxation of \eqref{prog:NSW}: $\opt_{\Hs}\ge \opt$ where $\opt_{\Hs}$ is optimal value of~\eqref{prog:relaxation} for any set of items $\Hs$.
Theorem~\ref{thm:main} is shown by constructing an integer allocation $x\in \{0,1\}^{\A\times \G}$ and an item set $\Hs$ such that $\nsw(x)\ge \opt_{\Hs}/(256\g^3)$. This is proved in five phases:
\begin{enumerate}[align = left, labelindent=\parindent, leftmargin=*, label={\bf Phase \Roman*}]
\item\label{phase1} Find an appropriate item set $\Hs$.
\item\label{phase2} Approximate~\eqref{prog:relaxation} by another integer program~\eqref{prog:decomposedUtility}.
\item\label{phase3} Find an approximate mixed integer solution to~\eqref{prog:decomposedUtility}.
\item\label{phase4} Find a \emph{sparse} approximate mixed integer solution to \eqref{prog:decomposedUtility}.
\item\label{phase5} Round the mixed integer solution to an integer solution.
\end{enumerate}
We note that phases are not necessarily algorithmic phases but also conceptional reductions of the problem. 
Regardless, we call it a phase for the simplicity of the presentation. 
We now give an overview of all the phases; most proofs are deferred to later sections.
\subsection{Phase I: Finding the item set $\Hs$}
We solve a maximum weight matching problem that achieves the highest Nash social welfare value under the restriction that each agent may only receive a single item.
This can be achieved by assigning an edge weight $\omega_{ij}=w_i\log (\val_{ij})$ for every $i\in \A$, $j\in \G$, and solving the maximum weight assignment problem in the complete bipartite graph between $\A$ and $\G$; we recall the notation $\val_{ij}=\val_i(\{j\})$. 
We let $\tau:\A\to \G$ denote the optimal matching represented as a mapping, i.e. $\tau(i)$ is the item matched to agent $i\in \A$. We define 
 $\Hs$ as the set of items assigned by $\tau$, i.e., $\Hs := \tau(\A)$.
We will refer to this set $\Hs$ as the \emph{set of most preferred items}.\footnote{
  Interestingly, in case of symmetric agents endowed with additive valuations the 
  set $\Hs$ contains all items with price at least one in any spending restricted equilibrium as in \cite{cole2015approximating};  
  see Section~\ref{section:connectionToSR}.
} 

The existence of $\tau$ with finite weight proves that the instance is feasible, i.e., 
there is a way of allocating one item to each agent such that agent values the assigned item positively. On the other hand, if no finite weight matching exists, 
the optimum value to  \eqref{prog:NSW}  is 0. 
 Henceforth, we assume without loss of generality that the optimal NSW is non-zero.

\subsection{Phase II: Reduction to the mixed matching relaxation}
We approximate~\eqref{prog:relaxation} by a second mixed integer program. 
We use variables $\y\in \R^{\A\times (\G\setminus \Hs)}$ representing the fractional allocations of the items in $\G\setminus \Hs$. Even though the valuation functions $\val_i$ are defined on $\R^\G$, we use $\val_i(\y_i)$ to denote $\val_i(x_i)$, where $x_i$ is obtained from $\y_i$ by setting $x_{ij} = 0$ for $j \in \Hs$ and $x_{ij} = y_{ij}$ for $j \in \G \setminus \Hs$.
\begin{tagequation}[Mixed+matching]
\label{prog:decomposedUtility}
\begin{aligned}
   &\max \quad \left( \prod_{i\in \A} \left( \val_i (\y_i) + \val_{i \sigma(i)} \right)^{w_i}  \right)^{1/\sum_i w_i}\\
   &\begin{aligned}
   \text{s.t.: }\quad && \sum_{i\in \A} \y_{ij} &\le 1      && \forall j \in \G \setminus \Hs\\
   && \y_{ij} &\ge 0      && \forall j \in \G \setminus \Hs, \forall i \in \A \\
   && \sigma: \A \to \Hs & \text{ is a matching.}      && 
   \end{aligned}
\end{aligned}
\end{tagequation}
We will refer to this program as the \emph{mixed matching relaxation}.
The program~\eqref{prog:decomposedUtility} differs from \eqref{prog:relaxation} in two respects.
Firstly, the objective differs from $\nsw(x)$: for each agent,  we evaluate the utility of each agent separately on $\Hs$ and $\G \setminus \Hs$.
Secondly, and more importantly, we require that the items in 
$\Hs$ are allocated to the agents by a matching. Unlike \eqref{prog:relaxation}, this will not be a relaxation of \eqref{prog:NSW}: the optimal integer solution may allocate multiple items in $\Hs$ to the same agent.
 We show that the effect of both these changes is limited.

Let $(\y, \sigma)$ be a feasible solution to~\eqref{prog:decomposedUtility}. 
We define $\overline \nsw(\y, \sigma)$  as the objective function value in \eqref{prog:decomposedUtility}, and let ${\overline\opt}_{\Hs}$ denote the optimum value.
Let us define $\nsw(\y,\sigma)$ as the Nash social welfare of the same allocation. Namely, 
$\nsw(\y,\sigma)=\nsw(x)$, 
where $x_{ij}=\y_{ij}$ if $j\in \G\setminus \Hs$, and for $j \in \Hs$  we have $x_{ij}=1$ if $j=\sigma(i)$, and $x_{ij}=0$ otherwise. The next lemma is an easy consequence of concavity and subadditivity.

\begin{lemma}\label{lem:obj-comp}
For any
feasible solution $(\y, \sigma)$  to~\eqref{prog:decomposedUtility}, we have
\[
\overline \nsw(\y, \sigma)\ge \nsw(\y,\sigma)\ge \frac{1}{2}\overline \nsw(\y, \sigma)\, .
\]
\end{lemma}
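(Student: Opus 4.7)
The plan is to argue the two inequalities agent-by-agent, since both quantities are weighted geometric means over agents and the per-agent factors compare in a clean way. For a fixed feasible $(\y,\sigma)$, let $\tilde\y_i$ denote the extension of $\y_i$ to $\G$ by zero on $\Hs$, so that by convention $\val_i(\y_i)=\val_i(\tilde\y_i)$. Then the vector $x_i$ used in $\nsw(\y,\sigma)$ is exactly $x_i = \tilde\y_i + \chi_{\{\sigma(i)\}}$, the sum of two vectors with disjoint supports (one supported in $\G\setminus \Hs$ and the other a unit vector on the single element $\sigma(i)\in\Hs$). In particular $x_i\in[0,1]^\G$, so Assumption~\ref{monotonicity} applies to $\val_i(x_i)$.

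For the upper bound $\nsw(\y,\sigma)\le \overline\nsw(\y,\sigma)$, I would apply subadditivity of the continuous $\val_i$ to the decomposition $x_i=\tilde\y_i+\chi_{\{\sigma(i)\}}$ and obtain
\[
\val_i(x_i)\ \le\ \val_i(\tilde\y_i)+\val_i(\chi_{\{\sigma(i)\}})\ =\ \val_i(\y_i)+\val_{i\sigma(i)}\, .
\]
Raising to the $w_i$ power, multiplying over $i\in\A$, and taking the $(\sum_i w_i)$-th root (all monotone operations on nonnegative quantities) gives the first inequality of the lemma.

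For the lower bound, I would use monotonicity rather than concavity. Since $x_i \ge \tilde\y_i$ and $x_i \ge \chi_{\{\sigma(i)\}}$ coordinatewise, monotonicity of $\val_i$ gives $\val_i(x_i)\ge \val_i(\y_i)$ and $\val_i(x_i)\ge \val_{i\sigma(i)}$, hence
\[
\val_i(x_i)\ \ge\ \max\{\val_i(\y_i),\val_{i\sigma(i)}\}\ \ge\ \tfrac{1}{2}\bigl(\val_i(\y_i)+\val_{i\sigma(i)}\bigr)\, .
\]
Taking weighted products as above pulls out a factor $(\tfrac{1}{2})^{\sum_i w_i/\sum_i w_i}=\tfrac12$ and yields $\nsw(\y,\sigma)\ge \tfrac12 \overline\nsw(\y,\sigma)$.

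There is no real obstacle here; the only subtlety is bookkeeping around the convention $\val_i(\y_i)=\val_i(\tilde\y_i)$ and verifying that $x_i\in[0,1]^\G$ so that the continuous subadditivity from Assumption~\ref{monotonicity} is applicable. Concavity of $\val_i$ is not needed for the statement, only monotonicity and subadditivity.
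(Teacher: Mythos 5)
Your proof is correct. The upper bound is exactly the paper's route: decompose $x_i=\tilde\y_i+\chi_{\{\sigma(i)\}}$ (disjoint supports, so $x_i\in[0,1]^\G$ and the continuous subadditivity of Assumption~\ref{monotonicity} applies) and sum the per-agent bounds into the weighted geometric mean. For the lower bound you deviate slightly from the paper, which presents the lemma as ``an easy consequence of concavity and subadditivity'': the concavity-based argument would use monotonicity together with $\val_i\bigl(\tfrac12(\tilde\y_i+\chi_{\{\sigma(i)\}})\bigr)\ge \tfrac12\val_i(\tilde\y_i)+\tfrac12\val_i(\chi_{\{\sigma(i)\}})$, whereas you use only monotonicity via $\val_i(x_i)\ge\max\{\val_i(\y_i),\val_{i\sigma(i)}\}\ge\tfrac12\bigl(\val_i(\y_i)+\val_{i\sigma(i)}\bigr)$. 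Your variant is marginally more elementary and shows the lemma needs only monotonicity and subadditivity, not concavity; the conclusion and the constant $\tfrac12$ are the same either way. One cosmetic point: when $\sigma$ leaves an agent unmatched (the paper allows $\sigma(i)=\emptyset$, e.g.\ in the proof of Theorem~\ref{theorem:reduction}), both inequalities hold trivially with $\val_{i\sigma(i)}=0$, which is worth a sentence but poses no difficulty.
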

Using this lemma, as well as Lemma~\ref{lem:productBound}, we can relate the optimum values and approximate solutions of \eqref{prog:relaxation} and \eqref{prog:decomposedUtility}.
\begin{restatable}{theorem}{firstReduction}\label{theorem:reduction}
Let $\Hs\subseteq \G$ with $|\Hs|\le |\A|$. 
For the optimum values $\opt_{\Hs}$ to \eqref{prog:relaxation} 
and $\overline \opt_{\Hs}$ to \eqref{prog:decomposedUtility}, we have
\[
\overline \opt_{\Hs}\ge \frac{1}{\g} \opt_{\Hs}.
\]
Let $(\y,\sigma)$ be an $\alpha$-approximate optimal solution to \eqref{prog:decomposedUtility}, 
that is, $\overline\nsw(\y,\sigma)\ge \frac{1}{\alpha} \overline\opt_{\Hs}$.
Then, $\nsw(\y,\sigma)\ge \frac{1}{{2\alpha\g}}\opt_{\Hs}$. 
If the valuation functions $\val_i$ are additive, then the stronger bound $\nsw(\y,\sigma)\ge  \frac{1}{\alpha\g} \opt_{\Hs}$ applies.
\end{restatable}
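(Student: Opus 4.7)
The second and third statements of the theorem follow from the first together with Lemma~\ref{lem:obj-comp}, so the crux is to prove $\overline\opt_{\Hs}\ge \frac{1}{\g}\opt_{\Hs}$. The strategy is to take any feasible solution $x^*$ of~\eqref{prog:relaxation} and convert it into a feasible pair $(\y,\sigma)$ for~\eqref{prog:decomposedUtility} losing at most a factor $\g$ in the objective. Since the items in $\Hs$ are allocated integrally under $x^*$, set $S_i:=\{j\in\Hs: x^*_{ij}=1\}$ for each $i\in \A$; the $S_i$ partition $\Hs$. Take $\y$ as the restriction of $x^*$ to $\G\setminus\Hs$, and define $\sigma(i):=\argmax_{j\in S_i}\val_{ij}$ for agents with $S_i\neq\emptyset$. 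Since the $S_i$ are disjoint, $\sigma$ is a (partial) matching from $\A$ to $\Hs$; we use the convention $\val_{i\sigma(i)}:=0$ when $S_i=\emptyset$.

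\textbf{Main per-agent bound.} Applying subadditivity of $\val_i$ from Assumption~\ref{monotonicity} twice---first to split $x^*_i$ along $\G\setminus\Hs$ and $\Hs$, and then to bound $\val_i(S_i)$ by the sum of its singletons---yields
\[
\val_i(x^*_i)\ \le\ \val_i(\y_i)+\val_i(S_i)\ \le\ \val_i(\y_i)+|S_i|\cdot \val_{i\sigma(i)}\ \le\ \max\{1,|S_i|\}\bigl(\val_i(\y_i)+\val_{i\sigma(i)}\bigr),
\]
where the middle inequality uses the choice of $\sigma(i)$ as the maximizer in $S_i$. Taking the weighted geometric mean over $i\in\A$,
\[
\opt_{\Hs}=\nsw(x^*)\ \le\ \Bigl(\prod_{i\in\A}\max\{1,|S_i|\}^{w_i}\Bigr)^{1/\sum_i w_i}\cdot \overline\nsw(\y,\sigma).
\]
Because $\sum_{i:\,S_i\neq\emptyset}|S_i|=|\Hs|\le|\A|$, Lemma~\ref{lem:productBound} applied with $c=1$ to the index set $\{i:S_i\neq\emptyset\}$ bounds the first factor by $\g$, which gives $\opt_{\Hs}\le \g\cdot \overline\opt_{\Hs}$, i.e.\ the first inequality of the theorem.

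\textbf{Deducing the approximation guarantee and the additive case.} For an $\alpha$-approximate solution $(\y,\sigma)$, chaining Lemma~\ref{lem:obj-comp} with the bound above gives $\nsw(\y,\sigma)\ge \tfrac{1}{2}\overline\nsw(\y,\sigma)\ge \tfrac{1}{2\alpha}\overline\opt_{\Hs}\ge \tfrac{1}{2\alpha\g}\opt_{\Hs}$, proving the general statement. For additive valuations the factor of $2$ disappears because the allocation $x$ associated with $(\y,\sigma)$ (set $x_{ij}=\y_{ij}$ on $\G\setminus\Hs$ and $x_{i\sigma(i)}=1$ on $\Hs$) satisfies $\val_i(x_i)=\val_i(\y_i)+\val_{i\sigma(i)}$ exactly by additivity, hence $\nsw(\y,\sigma)=\overline\nsw(\y,\sigma)$. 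The main obstacle is isolating a loss term on $\Hs$ that is amenable to Lemma~\ref{lem:productBound}: this is where the choice of $\sigma(i)$ as the maximum-value item of $S_i$ is crucial, since it converts the submultiplicative blow-up from subadditivity into a pure counting overhead of $|S_i|$, summing to at most $|\Hs|\le|\A|$.
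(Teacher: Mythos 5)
Your proposal is correct and follows essentially the same route as the paper's proof: restrict the optimal mixed solution to $\G\setminus\Hs$, give each agent only her most valuable item among those she received from $\Hs$, bound the per-agent loss by $|S_i|$ via monotonicity/subadditivity, and control $\bigl(\prod_i |S_i|^{w_i}\bigr)^{1/\sum_i w_i}$ with Lemma~\ref{lem:productBound} using $\sum_i |S_i|\le|\Hs|\le|\A|$, then invoke Lemma~\ref{lem:obj-comp} (with equality of the two objectives in the additive case). The only cosmetic slips are calling the $S_i$ a partition of $\Hs$ (they need only be disjoint subsets, which is all the argument uses) and writing $\opt_{\Hs}=\nsw(x^*)$ after introducing $x^*$ as merely feasible.
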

\begin{proof}
\begin{sloppypar}
We first show that $\overline \opt_{\Hs}\ge \frac{1}{\g} \opt_{\Hs}$.
Let $x$ be an optimal solution to~\eqref{prog:relaxation}.
For each agent $i$, let $K_i$ be the set of items agent $i$ 
receives from $\Hs$ under $x$; and let $\y$ be the restriction of $x$ on $\G \setminus \Hs$
defined as $\y_{ij} = x_{ij}$ for $j\in \G \setminus \Hs$ and $\y_{ij} = 0$ otherwise.
Let $k_i := |K_i|$.
Denote with $S$ the set of agents that receive at least one items from $\Hs$,
i.e., $S = \{i \in \A: k_i \ge 1\}$. 
For each agent $i\in S$ let $\sigma(i) = \max_{j \in K_i}\{ \val_{ij}\}$,
and define $\sigma(i) = \emptyset$ for $i \in \A \setminus S$. 
Then, $(\y, \sigma)$ is a feasible solution of~\eqref{prog:decomposedUtility}. 
In other words, $(\y, \sigma)$ is obtained from $x$ once each agent $i \in S$ discards all items from $K_i$ 
except the most valuable one.
By monotonicity and subadditivity, for all $i\in S$, we have
\begin{equation*}
\val_i (x_i) \le \val_i (\y) +  \sum_{j \in K_i} \val_{ij} \le k_i \cdot (\val_i (\y) + \val_{i \sigma(i)})\,.
\end{equation*}
\end{sloppypar}
Therefore,
$$
\frac{\opt_{\Hs}}{\overline \opt_{\Hs}} \le 
 \frac{\nsw(x)}{\overline{\nsw}(\y, \sigma)} 
= \left(\prod_{i \in S}  \frac{\val_i(x_i)^{w_i} }{(\val_i (\y) + \val_{i \sigma(i)})^{w_i}} \right)^{1/\sum_i w_i} 
\le \left(\prod_{i \in S} k_i^{w_i} \right)^{1/\sum_i w_i} \,.$$
Moreover, $\sum_{i \in S} k_i \le |\Hs| \le |\A| = n$.
Then, the bound follows by Lemma~\ref{lem:productBound}.
The second part of the theorem follows by Lemma~\ref{lem:obj-comp}.
\end{proof}

\subsection{Phase III: Approximating the mixed matching relaxation}
Our next goal is to find a $2$-approximation solution to \eqref{prog:decomposedUtility}; we do not know whether this problem is polynomial-time solvable.
By Theorem~\ref{theorem:reduction}, this yields a $(4\g)$-approximation to \eqref{prog:relaxation}. 

Let us first remove all items in $\Hs$. Some agents may only value positively the items $\Hs$. 
We let $\A'$ the subset of agents who have positive values for the items $\G \setminus \Hs$, that is, 
$\A' := \{ i \in \A : \val_i(\G \setminus \Hs) > 0 \}$.
Consider the ``na\"\i ve'' relaxation \eqref{prog:vanilla-relax} on the instance restricted to $\A'$ and $\G\setminus \Hs$, and taking the logarithm of the objective
\begin{tagequation}[EG]
\label{prog:EGprod}
\begin{aligned}
  & \max \quad \sum_{i\in \A} w_i \log(\val_i(y_i))\\
  &\begin{aligned}
    \text{s.t.: } \quad && \sum_{i \in \A'} \y_{ij} &\le 1    &&\quad \forall j \in \G \setminus \Hs\\
            && \y &\ge 0.
   \end{aligned}
\end{aligned}
\end{tagequation}
This is the classical Eisenberg--Gale convex program that computes an equilibrium in Fisher markets with divisible items for homogeneous concave valuation functions~\cite{Eisenberg1961}.
Given an optimal solution $\y^*\in \R_+^{\A'\times (\G\setminus \Hs)}$ of~\eqref{prog:EGprod} 
we can find an approximate solution to~\eqref{prog:decomposedUtility}.

\begin{theorem}\label{theorem:approx}
Let $\Hs\subseteq \G$ with $|\Hs|\le |\A|$. 
Let $\pi^*$ be maximum weight assignment in the complete bipartite graph between $\A$ and $\Hs$, 
with edge weights $\omega_{ij}=w_i \log \left( \val_i(\y^*_i) + \val_{ij} \right)$ for $i\in \A$, $j\in \Hs$. 
Then, $\overline \nsw(\y^*, \pi^*) \ge \frac{1}{2} \overline \opt_{\Hs}$.
\end{theorem}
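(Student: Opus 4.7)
The plan is to compare $(\y^*, \pi^*)$ to an optimal solution $(\y^{\rm OPT}, \sigma^{\rm OPT})$ of \eqref{prog:decomposedUtility} by combining two optimality statements: $\y^*$ is optimal for the Eisenberg--Gale program \eqref{prog:EGprod}, and $\pi^*$ is the optimal matching \emph{given} $\y^*$. The main obstacle is that EG optimality bounds $\prod_i \val_i(\y^*_i)^{w_i}$ from below, whereas we need a lower bound on $\prod_i (\val_i(\y^*_i)+\val_{i\pi^*(i)})^{w_i}$, so the ``bonuses'' $\val_{ij}$ from the matched items in $\Hs$ sit in the wrong place.

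First, I would observe that without loss of generality $\y^{\rm OPT}_{ij}=0$ for $i\in \A\setminus \A'$ (since removing such items only increases the objective by monotonicity and shifts no agent's value), so the restriction of $\y^{\rm OPT}$ to $\A'$ is feasible for \eqref{prog:EGprod}. Write $A_i:=\val_i(\y^*_i)$, $B_i:=\val_i(\y^{\rm OPT}_i)$, $a_i:=\val_{i\sigma^{\rm OPT}(i)}$. Since $\pi^*$ is the maximum weight matching for the weights $\omega_{ij}=w_i\log(A_i+\val_{ij})$, we have
\[
\overline{\nsw}(\y^*,\pi^*)^{\sum_i w_i}=\prod_i (A_i+\val_{i\pi^*(i)})^{w_i}\;\ge\; \prod_i (A_i+a_i)^{w_i}\,.
\]
So it suffices to show $\prod_i (A_i+a_i)^{w_i} \ge 2^{-\sum_i w_i}\prod_i (B_i+a_i)^{w_i}$.

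For this, I would use the midpoint $\tilde\y:=\tfrac12(\y^*+\y^{\rm OPT})$, which is feasible for \eqref{prog:EGprod}. By concavity of each $\val_i$ we have $\val_i(\tilde\y_i)\ge \tfrac12(A_i+B_i)$, and EG optimality of $\y^*$ on $\A'$ gives
\[
\prod_{i\in \A'} A_i^{w_i} \;\ge\; \prod_{i\in \A'}\val_i(\tilde\y_i)^{w_i} \;\ge\; \prod_{i\in \A'}\Bigl(\tfrac{A_i+B_i}{2}\Bigr)^{w_i},
\]
which rearranges to $\prod_{i\in \A'}\left(\tfrac{A_i+B_i}{A_i}\right)^{w_i}\le 2^{\sum_{i\in\A'} w_i}$.

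Finally, I would use the elementary pointwise bound $\frac{B_i+a_i}{A_i+a_i}\le \frac{A_i+B_i}{A_i}$ for every $i\in \A'$, which follows by cross-multiplication from $0\le A_i^2+a_iB_i$. For $i\in\A\setminus\A'$ we have $A_i=B_i=0$, so the ratio $\frac{B_i+a_i}{A_i+a_i}=1$ and those factors do not contribute. Combining these inequalities,
\[
\prod_i \Bigl(\tfrac{B_i+a_i}{A_i+a_i}\Bigr)^{w_i} \;\le\; \prod_{i\in\A'} \Bigl(\tfrac{A_i+B_i}{A_i}\Bigr)^{w_i} \;\le\; 2^{\sum_i w_i},
\]
which, combined with the matching inequality, yields $\overline{\nsw}(\y^*,\pi^*)\ge \tfrac12\,\overline{\opt}_{\Hs}$, as desired.
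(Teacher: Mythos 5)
Your proof is correct, but its core argument is genuinely different from the paper's. The outer reduction is the same in both: compare $\pi^*$ against the matching of an optimal solution of \eqref{prog:decomposedUtility} using the max-weight property of $\pi^*$, and then bound $\prod_i\bigl(\tfrac{B_i+a_i}{A_i+a_i}\bigr)^{w_i}$ by $2^{\sum_i w_i}$. Where you diverge is in how EG-optimality is exploited. The paper (Lemma~\ref{lem:approximatingMIP}) goes through KKT conditions and Lagrange multipliers interpreted as Gale-equilibrium prices (Claim~\ref{claim:priceBoundedByWeight}), derives the additive bound $\sum_{i\in\A''} w_i\tfrac{B_i}{A_i}\le\sum_{i\in\A''}w_i+\sum_{i\in\A'}w_i$ of Lemma~\ref{lemma:sumOfFractions}, and then needs a case split into $\A''=\{i:B_i>A_i\}$ plus a weighted AM--GM step to convert this into the multiplicative factor $2$. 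You instead get a multiplicative bound directly and purely primally: the midpoint $\tfrac12(\y^*+\y^{\rm OPT})$ is feasible for \eqref{prog:EGprod}, so concavity and optimality of $\y^*$ give $\prod_{i\in\A'}\bigl(\tfrac{A_i+B_i}{2A_i}\bigr)^{w_i}\le 1$, and the elementary pointwise inequality $\tfrac{B_i+a_i}{A_i+a_i}\le\tfrac{A_i+B_i}{A_i}$ (valid for all $i\in\A'$, no case split) absorbs the matched-item bonuses. Your route is shorter and avoids duality, supergradients and AM--GM entirely; what the paper's route buys is the equilibrium interpretation (prices/budgets, reused in Section~\ref{section:connectionToSR}) and the statement of Lemma~\ref{lem:approximatingMIP} for agent-wise $\alpha$-approximate EG solutions, which is what is actually invoked later with $\alpha=2$ after sparsification; your argument covers only $\alpha=1$ as stated, though it extends immediately since $\val_i(\y_i)+a_i\ge\tfrac1\alpha(A_i+a_i)$. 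Two degenerate points you silently use are fine but worth a word: $A_i>0$ for every $i\in\A'$ at the EG optimum (otherwise the EG objective would be $-\infty$ while a strictly positive feasible point exists), and if some $i\notin\A'$ has $a_i=0$ in the optimal solution then $\overline\opt_{\Hs}=0$ and the claim is trivial; this is the same level of rigor as the paper's own proof.
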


Theorem~\ref{theorem:approx} is an immediate consequence of the following lemma.

\begin{restatable}{lemma}{approximatingMIP}
\label{lem:approximatingMIP}
Let $\Hs\subseteq \G$ with $|\Hs|\le |\A|$. 
Let $\alpha>0$ and $\y^*$ be an optimal and $\y$ a feasible solution of~\eqref{prog:EGprod} such that 
$\val_i(\y_i) \ge \frac{1}{\alpha} \val_i(\y^*_i)$ for all $i \in \A'$.
Let $\pi$ be maximum weight assignment in the bipartite graph with colour classes $\A$ and $\Hs$, 
and edge weights $\omega_{ij}=w_i \log \left( \val_i(\y_i) + \val_{ij} \right)$ for $i\in \A$, $j\in \Hs$. 
Then,
\[
 \overline \nsw(\y, \pi) \ge \frac{1}{2\alpha} \overline \opt_{\Hs}\, .
 \]
\end{restatable}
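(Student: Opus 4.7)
Fix an optimum $(\y^{\opt},\sigma^{\opt})$ of~\eqref{prog:decomposedUtility} and set
$a_i:=\val_i(\y^*_i)$, $A_i:=\val_i(\y^{\opt}_i)$, $B_i:=\val_{i\sigma^{\opt}(i)}$, and $W:=\sum_i w_i$.
The plan is to split the loss factor $2\alpha$ into three clean pieces: one from the max-weight
choice of $\pi$, one from the pointwise guarantee $\val_i(\y_i)\ge a_i/\alpha$, and a factor $2$
saying that the EG-optimal $\y^*$ is automatically a $2$-approximation of the modified Nash
objective in which every agent is handed an additive bonus $B_i\ge 0$. Without loss of generality
$\alpha\ge 1$: if $\alpha<1$ then $\val_i(\y_i)\ge a_i/\alpha>a_i$ for all $i\in\A'$ would contradict
EG optimality of $\y^*$ outside of trivial cases.

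The first two reductions are cosmetic. Because $\pi$ maximises
$\sum_i w_i\log(\val_i(\y_i)+\val_{ij})$ over matchings and $\sigma^{\opt}$ is a valid competitor,
$\prod_i(\val_i(\y_i)+\val_{i\pi(i)})^{w_i}\ge\prod_i(\val_i(\y_i)+B_i)^{w_i}$. Combined with
$\val_i(\y_i)+B_i\ge (a_i+B_i)/\alpha$, valid because $\alpha\ge 1$ and $B_i\ge 0$, this
absorbs the factor $\alpha$ and reduces the lemma to the key inequality
\begin{equation*}
    \prod_i(a_i+B_i)^{w_i}\;\ge\;2^{-W}\prod_i(A_i+B_i)^{w_i}.
\end{equation*}

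The real work is in this last step. A direct application of EG optimality only yields the product
bound $\prod_i a_i^{w_i}\ge\prod_i A_i^{w_i}$, which is vacuous once the additive bonuses $B_i$
enter both sides. I plan to extract a sharper first-order condition by travelling along the
feasible line $\y^t:=(1-t)\y^*+t\y^{\opt}$: the concave function
$t\mapsto\sum_i w_i\log\val_i(\y^t_i)$ is maximised at $t=0$, so its right-derivative there is
non-positive; together with concavity of each $\val_i$ this yields the additive bound
$\sum_i w_i\,A_i/a_i\le W$. Given this, I estimate the weighted arithmetic mean
\begin{equation*}
    \frac{1}{W}\sum_i w_i\,\frac{A_i+B_i}{a_i+B_i}
    \;=\;1+\frac{1}{W}\sum_i w_i\,\frac{A_i-a_i}{a_i+B_i},
\end{equation*}
splitting the last sum according to the sign of $A_i-a_i$: for $A_i\ge a_i$ the estimate
$(A_i-a_i)/(a_i+B_i)\le(A_i-a_i)/a_i$ holds since $B_i\ge 0$, while for $A_i<a_i$ the term is
negative and can be dropped. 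Combined with $\sum_i w_i A_i/a_i\le W$, this bounds the arithmetic
mean by $2$, and Jensen's inequality applied to the concave map $\log$ lifts the bound to the
desired weighted geometric mean.

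The principal obstacle will be precisely this $2$-approximation step: the bare EG product inequality
is too weak, and one must invoke the first-order condition to get the sharper \emph{arithmetic}
bound $\sum_i w_i A_i/a_i\le W$, then case-split on the sign of $A_i-a_i$ to prevent the
uncontrollable ratios $a_i/(a_i+B_i)$ from entering the estimate. A minor issue is possible
non-smoothness of the valuations $\val_i$, which is harmless: concave functions always admit
well-defined one-sided derivatives, and only the right-derivative at $t=0$ is used in the argument.
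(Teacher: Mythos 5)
Your proposal is correct, but it takes a genuinely different route from the paper at the key step. The paper first passes through an intermediate matching $\pi^*$ (the max-weight matching with respect to $\y^*$) and then proves a separate technical lemma (Lemma~\ref{lemma:sumOfFractions}) via the KKT conditions of \eqref{prog:EGprod}: it introduces the Lagrange multipliers $p$ as prices, shows $p^{\top}\y^*_i\le w_i$, and, because the supergradient bound only transfers componentwise, has to compare against $\overline y_{ij}=\max\{\y^*_{ij},\y'_{ij}\}$, which is why its inequality reads $\sum_{i\in\A''}w_i\val_i(\y'_i)/\val_i(\y^*_i)\le\sum_{i\in\A''}w_i+\sum_{i\in\A'}w_i$. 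You instead compare $\pi$ directly against $\sigma^{\opt}$ and replace the price machinery by the first-order condition along the feasible segment $(1-t)\y^*+t\y^{\opt}$; together with $\val_i(\y^t_i)\ge(1-t)a_i+tA_i$ this gives the cleaner (and in fact stronger) proportional-fairness inequality $\sum_{i\in\A'}w_iA_i/a_i\le\sum_{i\in\A'}w_i$, with no subgradient or price bookkeeping and no $\max$ trick. From there both arguments coincide in spirit: a sign split on $A_i-a_i$ so that the additive terms $B_i$ only help, then weighted AM--GM, yielding the same constant $2$ (your sharper intermediate bound does not improve the final factor, since the loss comes from the averaging step). Two small points to tidy up in a written version: the index sets should be handled explicitly (agents in $\A\setminus\A'$ have $a_i=A_i=0$ and contribute ratio $1$; the first-order bound is only over $\A'$, which suffices since $\sum_{i\in\A'}w_i\le W$), and your reduction to $\alpha\ge1$ uses $B_i\ge B_i/\alpha$, i.e.\ the statement is really only meaningful for $\alpha\ge1$ — but the paper's own proof needs the same implicit assumption in the step $\overline\nsw(\y,\pi^*)\ge\frac1\alpha\overline\nsw(\y^*,\pi^*)$, and the lemma is only ever invoked with $\alpha\ge1$, so this is not a defect of your argument relative to the paper's.
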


Since valuations $\val_i$ are concave,~\eqref{prog:EGprod} is a convex program. 
For any $\varepsilon>0$, we can find an $(1-\varepsilon)$-approximate solution  in polynomial-time, 
where the running time depends on $\log(1/\varepsilon)$.
It turns out that approximation of the objective function might not be enough. 
In Lemma~\ref{lem:approximatingMIP} we require an agent-wise approximate solution:
each agent gets at least a constant fraction of her value in the optimum. 
It is not clear if finding such agent-wise approximation is possible in polynomial time for general concave valuations $\val_i$,
but as we will see in the next section we can find an exact optimal solution for Rado valuations.

The proof of Lemma~\ref{lem:approximatingMIP} is deferred to Section~\ref{sec:approximatingMatchingRelaxation}. 
It does not depend on the choice of $\Hs$ but only requires $|\Hs|\le |\A|$.

\subsection{Phase IV: A sparse approximate solution for the mixed matching relaxation}
\begin{sloppypar}
In this section we exploit the properties of Rado valuations.
Assuming the agents have Rado valuation functions, 
we can find an approximate solution of~\eqref{prog:decomposedUtility} with a strong sparsity property. 
Even though the approximation ratio is weaker then given in Theorem~\ref{theorem:approx}, 
sparsity will be essential for the rounding in \ref{phase5}.
\end{sloppypar}

\begin{restatable}{theorem}{approx-sparse}\label{theorem:approx-sparse}
Suppose the functions $\val_i$ are Rado valuations. Let $\Hs\subseteq \G$ with $|\Hs|\le |\A|$.
We can find a feasible solution $(\y,\pi)$ to \eqref{prog:decomposedUtility} such that 
\begin{enumerate}[label=(\roman*)]
\item $\overline \nsw(\y, \pi) \ge \frac{1}{4}\overline \opt_{\Hs}$,
\item $\supp(\y)\le 2|\A|+|\F^+|$ where $\F^+ = \{j \in \G\setminus \Hs: \sum_{i \in \A'} y_{ij} > 0\}$, that is, $\F^+$ is the set of allocated items in $y$.
\end{enumerate}
Moreover, for additive valuation functions, we can strengthen {\em (i)} to $\nsw(\y, \sigma)\ge \frac{1}{2}\opt_{\Hs}$ and {\em (ii)} to $\supp(\y)\le |\A|+|\F^+|$.
\end{restatable}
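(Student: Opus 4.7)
The plan is to combine Theorem~\ref{theorem:approx} with the explicit LP description of the Rado concave extension from Theorem~\ref{thm:integral}: the LP description both lets us \emph{solve} \eqref{prog:EGprod} exactly in polynomial time and provides the polyhedral handle needed for two successive sparsification steps.

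\textbf{Exact solve of \eqref{prog:EGprod}.} Introducing an auxiliary variable $u_i$ for each agent and the matching variables $z^{(i)}$ of~\eqref{prog:utility} for each $i$, the program \eqref{prog:EGprod} becomes
\begin{equation*}
\max \sum_{i\in \A'} w_i \log u_i \quad \text{s.t.}\quad u_i\le \sum_{(j,k)\in \E_i} \co^{(i)}_{jk} z^{(i)}_{jk},\ \sum_k z^{(i)}_{jk}\le y_{ij},\ \sum_{j,\,k\in T} z^{(i)}_{jk}\le \f_i(T),\ \sum_i y_{ij}\le 1,\ y,z\ge 0.
\end{equation*}
This is a concave objective over a polyhedron whose only exponential family of constraints (the matroid rank constraints) admits a polynomial-time separation oracle; so an optimum $(\y^*, z^*, u^*)$ can be computed in polynomial time, e.g., by the ellipsoid method. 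Plugging $\y^*$ into Theorem~\ref{theorem:approx} with $\alpha=1$ yields a matching $\pi^*$ with $\overline\nsw(\y^*,\pi^*)\ge \tfrac{1}{2}\overline\opt_{\Hs}$.

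\textbf{First sparsification to $|\A|+2|\F^+|$ nonzeros.} Freeze the optimal values $u_i^*$ and let $P$ be the polyhedron of all $(y,z)$ satisfying the same linear constraints together with the performance inequality $\sum_{(j,k)}\co^{(i)}_{jk}z^{(i)}_{jk}\ge u_i^*$ for every $i\in \A'$. Every point of $P$ satisfies $\val_i(y_i)\ge u_i^*$ by the LP characterization of $\val_i$, so the objective is preserved when we move from $(\y^*, z^*)$ to any vertex of $P$. Extract such a vertex $(\y, z)$ using the separation oracle. A standard counting of linearly independent tight constraints at a vertex of a polymatroid-intersection-type polytope — in which the tight matroid rank constraints form a chain for each agent and can be charged against the coupling equalities $\sum_k z^{(i)}_{jk}=y_{ij}$ and the single tight valuation constraint per agent — gives $\supp(\y)\le |\A|+2|\F^+|$.

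\textbf{Second sparsification with factor-$2$ loss.} It remains to pass from $|\A|+2|\F^+|$ to $2|\A|+|\F^+|$ nonzeros at the price of halving the objective. Let $H$ be the bipartite support graph with sides $\A$ and $\F^+$; by the previous step $|E(H)|\le |\A|+2|\F^+|$, so $H$ has small cyclomatic number. I would iteratively eliminate ``excess'' edges by pushing $y$-mass along alternating cycles and paths in $H$, selecting in each step the direction whose local log-objective drop is minimal while preserving the item constraints $\sum_i y_{ij}\le 1$. Using concavity and subadditivity from Assumption~\ref{monotonicity}, each elementary step costs at most a bounded multiplicative factor, and the totals can be organized (via an amortized or potential argument) so that the aggregate multiplicative loss in $\overline\nsw$ is at most $2$, yielding $\overline\nsw(\y,\pi)\ge \tfrac{1}{4}\overline\opt_{\Hs}$ with $\supp(\y)\le 2|\A|+|\F^+|$. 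The main obstacle will be controlling the per-agent losses through this rearrangement while ensuring that the concurrent modifications to $\pi$ (if needed) do not deteriorate the bound. For additive valuations, no auxiliary $z$-variables are required and \eqref{prog:EGprod} is linear in $y$ itself; a basic optimal solution already has $\supp(\y)\le |\A|+|\F^+|$, the second sparsification step is skipped, and the stronger bounds $\nsw(\y,\pi)\ge \tfrac{1}{2}\opt_{\Hs}$ and $\supp(\y)\le |\A|+|\F^+|$ follow.
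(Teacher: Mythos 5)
Your overall architecture matches the paper's: solve the Eisenberg--Gale program \eqref{prog:EGprod} exactly, invoke Lemma~\ref{lem:approximatingMIP}, and sparsify in two stages (a vertex/tight-constraint count down to roughly $|\A|+2|\F^+|$, then a second step to $2|\A|+|\F^+|$ at a factor-$2$ cost). Your first sparsification is essentially the paper's Lemma~\ref{lemma:n+2m} and Corollary~\ref{corollary:sparseX}, and the additive case is fine. However, your second sparsification has a genuine gap, and it is exactly where the paper's key lemma (Lemma~\ref{lem:sparseSolution}) does its work. Pushing $y$-mass along alternating cycles and claiming that ``each elementary step costs at most a bounded multiplicative factor'' is not justified: deleting a support edge $(i,j)$ can destroy an unbounded fraction of agent $i$'s value (the shared item $j$ may carry essentially all of $\val_i(\y_i)$), so there is no per-step multiplicative bound, and you give no mechanism by which an amortized or potential argument would cap the aggregate loss at $2$ over the up to $|\F^+|-|\A|$ deletions. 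Moreover, what the downstream argument actually needs is an \emph{agent-wise} guarantee $\val_i(\y_i)\ge\tfrac12\val_i(\y^*_i)$, so that Lemma~\ref{lem:approximatingMIP} can be re-applied with $\alpha=2$ (the lemma explicitly requires per-agent bounds, not objective-value bounds); an aggregate factor-$2$ bound on $\overline\nsw$ would arithmetically suffice for (i) but you prove neither. The paper's route is different and global: using the Rado LP structure it writes $\val_i(\y'_i)=\sum_j u(i,j)$ and observes that scaling $\y_{ij}=q_{ij}\y'_{ij}$ with $q_{ij}\in[0,1]$ retains value at least $\sum_j q_{ij}u(i,j)$ (this uses feasibility of the scaled $z$, not just Assumption~\ref{monotonicity}); it then sets up a feasible auxiliary linear system in the $q$-variables with constraints $\sum_{j\in D}q_{ij}u(i,j)\ge\tfrac12\sum_{j\in D}u(i,j)$ for each involved agent and $q_{ij}+q_{i'j}=1$ for each doubly-allocated item, and takes a basic feasible solution; counting tight constraints simultaneously yields the per-agent half guarantee and forces at least $|D|-|\A''|$ variables to zero, giving $\supp(\y)\le 2|\A'|+|\F^+|$. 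You would need to supply an argument of this kind (or a worked-out amortization, which seems hard) to close the gap.

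A secondary issue: you assert that the exact optimum of \eqref{prog:EGprod} ``can be computed in polynomial time, e.g., by the ellipsoid method.'' For a concave (non-linear) objective the ellipsoid method only gives $\varepsilon$-approximate optima in general, and the paper notes that approximating the objective value is not enough for Lemma~\ref{lem:approximatingMIP}. The paper closes this by proving that the optimal face is a well-described rational polytope (Lemma~\ref{lemma:rationalOptimum}, via KKT conditions, uncrossing of tight matroid constraints, and a variable substitution) and then running the ellipsoid method for rational polyhedra (Lemma~\ref{lemma:optimalSolution}); your sketch should at least acknowledge that this exactness step needs its own argument.
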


Let us start with the special case of additive valuations. 
In this case, an exact solution $\y^*$ to the Eisenberg--Gale convex program \eqref{prog:EGprod} 
can be found in
strongly polynomial time~\cite{orlin2010improved,vegh2016strongly}. 

\begin{theorem}\label{thm:linearEG}
Assuming the valuations $\val_i$ are additive, we can find an optimal solution $\y^*$ of~\eqref{prog:EGprod} in strongly polynomial time such that the support $\supp(\y^*)$ is a forest.
\end{theorem}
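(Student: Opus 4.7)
The plan is to invoke a strongly polynomial algorithm for the linear Fisher market / Eisenberg--Gale program, and then post-process the optimal allocation to make its support a forest. For additive $\val_i$, $\val_i(\y_i)=\sum_j \val_{ij}\y_{ij}$ and \eqref{prog:EGprod} is precisely the Eisenberg--Gale convex program for a linear Fisher market with budgets $w_i$ for $i\in\A'$ and unit-supply divisible goods $\G\setminus\Hs$. Orlin~\cite{orlin2010improved} and V\'egh~\cite{vegh2016strongly} give strongly polynomial algorithms that return an exact optimal allocation $\y^*$ together with optimal dual prices $p^*\in\R_+^{\G\setminus\Hs}$. Without loss of generality every good has positive value for some agent in $\A'$, hence by complementary slackness $p_j^*>0$ for all $j\in\G\setminus\Hs$.

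By the KKT conditions for \eqref{prog:EGprod}, there exist scalars $\mu_i>0$ with $\val_{ij}/p_j^*\le \mu_i$, and equality whenever $\y^*_{ij}>0$. Thus $\supp(\y^*)$ is contained in the graph of maximum bang-per-buck edges, and $\val_{ij}=\mu_i p_j^*$ on every such edge. I would use this MBB identity to eliminate cycles one by one.

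Consider the bipartite graph on $\A'\cup(\G\setminus\Hs)$ with edge set $\supp(\y^*)$, and suppose it contains a cycle $(i_1,j_1,i_2,j_2,\ldots,i_k,j_k,i_1)$. Perturb $\y^*_{i_t j_t}\mapsto \y^*_{i_t j_t}+\epsilon_t$ and $\y^*_{i_{t+1} j_t}\mapsto \y^*_{i_{t+1} j_t}-\epsilon_t$ for $t=1,\ldots,k$ (indices mod $k$); market clearing on each good is automatically preserved. Preserving each agent's utility requires
\[
\epsilon_t\val_{i_{t+1}j_t}=\epsilon_{t+1}\val_{i_{t+1}j_{t+1}},\quad\text{i.e.,}\quad \frac{\epsilon_{t+1}}{\epsilon_t}=\frac{p^*_{j_t}}{p^*_{j_{t+1}}}\, ,
\]
using the MBB identity. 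The product of these ratios around the cycle telescopes to $1$, so a consistent choice of $(\epsilon_1,\ldots,\epsilon_k)$ with a common sign exists. Push the perturbation until the first variable hits $0$: feasibility and the objective are preserved, and $|\supp(\y^*)|$ strictly decreases. Iterating at most $|\A'|\cdot|\G\setminus\Hs|$ times produces an optimal $\y^*$ whose support is a forest, in strongly polynomial time overall.

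The only delicate point is that the perturbation closes up consistently around the cycle, which is exactly the telescoping argument above and relies crucially on the MBB identity guaranteed by the KKT conditions; once this is in hand, feasibility, optimality, and termination are immediate.
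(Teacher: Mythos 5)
Your proposal is correct and takes essentially the same route as the paper: the paper also obtains an exact optimal solution via the strongly polynomial algorithms of Orlin and V\'egh for the linear Eisenberg--Gale (Fisher market) program, and notes that cycles in $\supp(\y^*)$ can be eliminated by the standard MBB cycle-canceling argument, which you have simply worked out in detail (including the price-telescoping consistency of the perturbation).
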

The claim on the support follows easily by showing that any cycles in $\supp(\y^*)$ can be eliminated, see e.g.,~\cite{cole2015approximating,Duan2016,orlin2010improved}. 
Consequently, 
$|\supp(\y^*)| \le |\A'| + |\F^+| -1$. 
Together with Lemma~\ref{lem:approximatingMIP}, this proves the statement in Theorem~\ref{theorem:approx-sparse} for additive valuations.

For Rado valuations, we first prove that an optimal solution of~\eqref{prog:EGprod} can be found in polynomial time, 
see Section~\ref{subsection:optimalEG}. We first show that this is a rational convex program, and use the variant of the ellipsoid method for rational polyhedra \cite{gls}.
\begin{restatable}{lemma}{optimalSolution}
\label{lemma:optimalSolution}
Suppose that for each agent $i\in \A$,  $\val_i$ is a Rado valuation 
given by a bipartite graph $(\G, \T_i; \E_i)$, integer costs 
$\co_i : \E_i \to \Z$ and a matroid 
$\M_i = (\T_i, \I_i)$ as in Definition~\ref{def:Rado}. Let $T=\max_{i\in \A}|\T_i|$, and $C=\max_{i\in \A}\|\co_i\|_\infty$. Let the weights $w_i>0$ be rational numbers given as quotients of two integers at most $U$. 
Assume  the matroids $\M_i$ are given by rank oracles.
Then, \eqref{prog:EGprod} has a rational solution with 
$\mathrm{poly}(|\A|,|\G|,T,\log C, \log U)$ bit-complexity, and  such a solution can be found 
in $\mathrm{poly}(|\A|,|\G|,T,\log C, \log U)$  arithmetic operations and calls to the matroid rank oracles.
\end{restatable}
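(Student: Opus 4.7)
The plan is to recast~\eqref{prog:EGprod} under Rado valuations as a concave maximization over a polytope with polynomial-time separation, apply the Gr\"otschel--Lov\'asz--Schrijver (GLS) ellipsoid framework to obtain an approximate optimum, and then round to an exact rational solution via the combinatorial structure of the optimum.

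First, I would linearize the inner LPs defining $\val_i(\y_i)$ from~\eqref{prog:utility} by introducing explicit matching variables $z^i_{jk}$ for each $i\in\A'$ and $(j,k)\in\E_i$. The resulting combined program reads
\[
\max \sum_{i \in \A'} w_i \log \Bigl( \sum_{(j,k) \in \E_i} \co_i(jk)\, z^i_{jk} \Bigr)
\]
subject to $\sum_{k \in \T_i} z^i_{jk} \le \y_{ij}$ for all $i,j$, $\sum_{j,\,k \in T} z^i_{jk} \le \f_i(T)$ for all $i$ and $T \subseteq \T_i$, $\sum_{i \in \A'} \y_{ij} \le 1$ for all $j$, and $\y, z \ge 0$. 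This feasible region has $O(|\A|\cdot |\G| + \sum_i |\E_i|)$ variables and, aside from the matroid rank constraints, only polynomially many explicit constraints. The rank constraints admit polynomial-time separation: violation is detected by minimizing the submodular function $T \mapsto \f_i(T) - \sum_{k \in T}\sum_j z^i_{jk}$ over $T \subseteq \T_i$ using the rank oracle for $\M_i$. The objective is a nonnegative combination of logarithms of linear functions in $z$, hence concave.

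Next, the polytope is bounded and has nonempty interior, so the GLS machinery for concave maximization over a well-bounded convex body with a separation oracle applies. With coefficients of bit-complexity $\mathrm{poly}(T, \log C, \log U)$, this yields an $\epsilon$-optimal primal solution in $\mathrm{poly}(|\A|, |\G|, T, \log C, \log U, \log(1/\epsilon))$ arithmetic operations and calls to the matroid rank oracles.

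The main obstacle is turning this $\epsilon$-approximate solution into an exact rational optimum with polynomial bit-complexity. I would proceed via the KKT conditions: at the optimum there exist nonnegative duals $p_j$ for the allocation constraints and a collection of tight matroid rank constraints per agent, which together specify a face of the feasible polytope on which the optimum lies. Using an a priori bit-complexity bound on the optimum (derivable from the fact that $\f_i(T) \le T$, $\co_i \in \Z_{\le C}$, and the $w_i$ have denominators at most $U$), one can choose $\epsilon$ inverse-polynomially small so that the approximate solution uniquely identifies the correct face. On this face, the optimum satisfies a polynomial-size system of linear equations from stationarity and complementary slackness whose unique solution can be computed exactly in polynomial time. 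This ``identify-and-solve'' rounding mirrors the standard technique for rational convex programs such as Eisenberg--Gale equilibria under additive utilities, and uses only arithmetic operations together with the matroid rank oracles, as required.
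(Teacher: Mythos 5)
Your setup is fine and matches the paper's starting point: writing \eqref{prog:EGprod} in the matching variables $z$ (this is \eqref{prog:explicitReduced}), separating the exponentially many matroid constraints by submodular function minimization with the rank oracles, and running the GLS ellipsoid machinery. The gap is in the last step, which is exactly where the real content of the lemma lies. You assert, in passing, an a priori polynomial bit-complexity bound on an exact optimizer and then propose an ``identify-and-solve'' rounding; neither part is justified, and as stated the second part does not work. First, rationality and polynomial encoding length of an exact optimum is not ``derivable'' from $\f_i(T)\le |\T_i|$, $\co_i\in\Z$, and the denominators of $w_i$: the objective is a sum of logarithms, and the only route to rationality is through the KKT system, which a priori involves Lagrange multipliers $\alpha_i(S)$ for exponentially many sets $S\subseteq\T_i$. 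The paper's proof (Lemma~\ref{lemma:rationalOptimum}) needs an uncrossing argument (Claim~\ref{claim:uncrossing}) to show one may take duals supported on a chain per agent, so that the KKT system becomes polynomial-size at all.

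Second, even granting polynomially many tight constraints, the system you propose to solve is not linear: stationarity reads $w_i\co_{ijk}=\bigl(p_j+\sum_{S\ni k}\alpha_i(S)\bigr)\cdot\bigl(\sum_{j',k'}\co_{ijk'}z_{ijk'}\bigr)$ on the support, which is bilinear in the duals and the primal utilities. Knowing which feasibility constraints are tight does not pin down the optimum either, since the maximizer of a concave objective is generally not a vertex of the feasible polytope (the optimal set is the intersection of the polytope with $\{\,z:\sum\co_{ijk}z_{ijk}=\val_i^*\ \forall i\,\}$, and the values $\val_i^*$ are themselves what you need to show are rational of small height). The paper resolves this with the substitution $q_j=1/p_j$, $Q^{(i)}_{jt}=1/(p_j+\sum_{b\ge t}\alpha_i(S^{(i)}_b))$, which turns the KKT conditions into a genuine LP in $(z,q,Q)$, yielding the vertex-complexity bound, and then runs the ellipsoid method for well-described polyhedra directly on the optimal face $P^*$ with gradient cuts, so no $\epsilon$-approximation-plus-face-identification step is needed. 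Finally, your claim that an inverse-polynomially small $\epsilon$ lets the approximate solution ``uniquely identify the correct face'' is unsubstantiated: a solution that is $\epsilon$-optimal in objective value need not be close to the optimal set in a way that reveals which of the (a priori exponentially many) matroid constraints are tight at an optimum; the standard escape here is Diophantine rounding as in Jain's theorem (which the paper mentions as an alternative), not face identification. To make your argument complete you would essentially have to reprove the uncrossing and linearization steps, at which point you are following the paper's proof.
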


Our next lemma shows that any feasible solution to~\eqref{prog:EGprod} 
can be sparsified by losing at most the half of the value for each agent, see Section~\ref{subsection:sparseOptimalSolution}. This is achieved in two steps, using the sparsity of basic feasible solutions to linear programs. Half of the valuation may be lost in the second step, where for the fractionally allocated items we aim to remove one of the fractional edges. The set to be deleted is identified by  writing an auxiliary linear program.
\begin{restatable}{lemma}{sparseSolution}
\label{lem:sparseSolution}
Suppose the functions $\val_i$ are Rado valuations, and let $\hat \y$ be a feasible solution to 
\eqref{prog:EGprod}. Then, in polynomial time we can find 
a feasible solution $\y$ such that
\begin{enumerate}[label=(\roman*)]
  \item $\val_i (\y) \ge \frac{1}{2} \val_i(\hat \y)$,
  \item $|\supp(\y)| \le 2|\A'| + |\F^+|$ where $\F^+ := \F^+(\y) = \{j \in \G\setminus \Hs: \sum_{i \in \A'} \y_{ij} > 0\}$.
\end{enumerate}
\end{restatable}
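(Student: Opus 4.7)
The plan is to obtain $\y$ through two sparsification passes, each exploiting the basic feasible solution structure of an auxiliary LP built on the polymatroid-intersection description of Rado valuations from Theorem~\ref{thm:integral}.

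For the first pass, I would form the LP in variables $\y\in\R_+^{\A'\times(\G\setminus\Hs)}$ together with per-agent Rado-matching variables $z_i\in\R_+^{\E_i}$: constraints are the coupling $\sum_{k:(j,k)\in\E_i} z_{i,jk}\le \y_{ij}$, the matroid rank caps $\sum_{(j,k)\in\E_i,\,k\in T}z_{i,jk}\le \f_i(T)$ for each $T\subseteq\T_i$, the item caps $\sum_i \y_{ij}\le 1$, and the per-agent utility floors $\sum_e \co_i(e)z_{i,e}\ge \val_i(\hat\y_i)$. This LP is feasible via $(\hat\y,\hat z)$, where $\hat z_i$ is an integer matroid-intersection optimum realising $\val_i(\hat\y_i)$. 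The exponentially many matroid inequalities are handled by the rank oracles as a separation routine, so a basic feasible solution $(\y^{(1)},z^{(1)})$ can be extracted in polynomial time. I would then count linearly independent tight constraints at this BFS: the $|\A'|$ utility floors, at most $|\F^+|$ tight item caps, and the tight matroid inequalities, which for each agent can be chosen to form a chain (Edmonds) and thus contribute at most one new independent constraint per additional nonzero $z$-variable. Together these yield the intermediate bound $|\supp(\y^{(1)})|\le |\A'|+2|\F^+(\y^{(1)})|$.

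For the second pass, I use the fractional structure of $\y^{(1)}$ to sparsify further at the cost of halving per-agent values. Because $\val_i$ is $M^\natural$-concave (Lemma~\ref{lemma:RadoIsMconcave}) and vertices of the polymatroid-intersection polytope are integral, each $\y^{(1)}_i$ decomposes as $\sum_\ell \lambda_{i,\ell}\chi_{S_{i,\ell}}$ with $\sum_\ell \lambda_{i,\ell}\val_i(\chi_{S_{i,\ell}})\ge \val_i(\y^{(1)}_i)$. I would now set up a second, polynomial-size auxiliary LP whose variables $\xi_{ij}$ (together with fresh Rado $z$-variables witnessing the floors) encode which fractional allocations of $\y^{(1)}$ survive, subject to the item caps $\sum_i \xi_{ij}\le 1$ and the halved per-agent targets $\val_i(\xi_i)\ge\tfrac{1}{2}\val_i(\y^{(1)}_i)$. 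Feasibility of this LP follows from $\xi=\y^{(1)}$ and the integrality-of-vertices decomposition, which guarantees that rounding up a single fractional coordinate per agent loses at most a factor two by subadditivity. A basic feasible solution then has at most $|\A'|$ strictly fractional coordinates (one per per-agent floor), all other $\xi_{ij}\in\{0,1\}$. Rounding each strictly fractional coordinate up to $1$ produces the final $\y$ with $\val_i(\y_i)\ge \tfrac{1}{2}\val_i(\hat\y_i)$ and $|\supp(\y)|\le 2|\A'|+|\F^+(\y)|$.

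The main technical obstacle is the BFS count in the first pass: one must show that appending the $|\A'|$ value floors and the item caps to the polymatroid-intersection polytope inflates BFS sparsity by at most the claimed amount, despite the exponentially many matroid constraints. This relies on a careful use of the chain structure of tight matroid inequalities at a vertex, together with the observation that any redundant tight constraint can be dropped. A secondary subtlety in the second pass is verifying that rounding up the few strictly fractional $\xi$-coordinates preserves both the item caps $\sum_i \y_{ij}\le 1$ and each agent's halved value target; one must arrange the auxiliary LP so that at most one agent holds a strictly fractional claim on any given item, and then use subadditivity of $\val_i$ to absorb the rounding jumps.
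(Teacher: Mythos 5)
Your first pass is essentially the paper's Lemma~\ref{lemma:n+2m}/Corollary~\ref{corollary:sparseX} (basic solution of the Rado LP with per-agent value floors, uncrossed chains of tight matroid constraints, flow bound on the chain lengths), and that part is fine, though note that the paper's intermediate bound keeps an extra credit $-|\F_1(\y')|$ which it later needs in the final accounting. The genuine gap is in your second pass. The claim that a basic feasible solution of your second auxiliary LP has ``at most $|\A'|$ strictly fractional coordinates, all other $\xi_{ij}\in\{0,1\}$'' is unjustified and false in general: the halved floors $\val_i(\xi_i)\ge\tfrac12\val_i(\y^{(1)}_i)$ must again be witnessed through the $z$-variables and the exponentially many matroid constraints, so a vertex of this polytope is just another vertex of the same kind of system as in pass one, and the tight-constraint count gives exactly the same bound $|\A'|+2|\F^+|$ as before --- halving the floors does not reduce the number of independent tight constraints. (The fact that the underlying independent-matching polytope is integral only says a vertex of the intersection with $|\A'|$ floor hyperplanes is a convex combination of at most $|\A'|+1$ integral points; a combination of even two integral allocations can have arbitrarily many fractional coordinates, e.g.\ the half-sum of two disjoint matchings.) So your argument never gets below $|\A'|+2|\F^+|$, whereas the lemma needs $2|\A'|+|\F^+|$; the difference is exactly what Phase~V relies on, since only $O(|\A|)$ edges may be zeroed out in a reduction. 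In addition, the final rounding-up of fractional $\xi_{ij}$ to $1$ is both unnecessary (the lemma only asks for a sparse \emph{fractional} solution) and unresolved: you note yourself that it can violate the item caps $\sum_i\y_{ij}\le 1$ when several agents hold the same item, and no mechanism is given to prevent this; also ``loses at most a factor two by subadditivity'' does not apply to rounding \emph{up}, which only increases value but breaks feasibility.

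The paper closes this gap differently: it keeps the support of the sparse solution $\y'$ fixed and introduces a tiny LP only in rescaling fractions $\pr_{ij}\in[0,1]$ on the set $D$ of items allocated to two or more agents, with one equality $\pr_{ij}+\pr_{i'j}=1$ per $j\in D$ (pairing exactly two chosen agents $D(j)=\{i,i'\}$) and floors $\sum_{j\in D}\pr_{ij}u(i,j)\ge\tfrac12\sum_{j\in D}u(i,j)$, where $u(i,j)=\sum_k \co_{ijk}z'_{ijk}$. The key structural fact making this legitimate is the ``local subadditivity'' of Rado valuations: scaling the witness variables $z'_{ijk}$ by $\pr_{ij}$ stays feasible in \eqref{prog:explicitReduced}, so $\val_i(\y_i)\ge\sum_j \pr_{ij}u(i,j)$; this, not a decomposition into integral vertices, is what certifies the factor-$\tfrac12$ guarantee. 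The LP has only $2|D|$ pair variables and $|\A''|+|D|$ constraints, so a basic feasible solution forces at least $|D|-|\A''|$ of the $\pr_{ij}$ to zero, each deleting one support edge of $\y'$; combined with the intermediate bound $|\A'|+2|\F^+|-|\F_1|$ this yields $|\supp(\y)|\le 2|\A'|+|\F^+|$, with $\y$ remaining fractional and trivially feasible since $\y\le\y'$. If you want to salvage your plan, you need a second-pass LP of this restricted kind (few variables, support-preserving, value floors expressed linearly via the fixed witnesses $u(i,j)$), rather than re-solving a full Rado system with halved floors.
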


By combining Lemmas~\ref{lem:approximatingMIP},~\ref{lemma:optimalSolution},~\ref{lem:sparseSolution},
 we obtain Theorem~\ref{theorem:approx-sparse} for Rado valuations.

\subsection{Phase V: Rounding the mixed integer solution}
For this phase of the algorithm, we require a sparse approximate solution as in Theorem~\ref{theorem:approx-sparse}, 
and exploit the choice of $\Hs$ as the set of most preferred items in \ref{phase1}. We start with a mixed integer solution $(\y,\pi)$ as in Theorem~\ref{theorem:approx-sparse}. 
By a \emph{reduction} of $(\y,\pi)$ we mean a mixed integer solution $(\ry,\pi)$ obtained as follows.
For each $j\in  \F^+$, we pick an arbitrary agent $\kappa(j)\in \A$ such that $\y_{\kappa(j)j}>0$. 
We set $\ry_{\kappa(j)j}=\y_{\kappa(j)j}$, and set $\ry_{ij}=0$ if $i\neq\kappa(j)$. 
By the bound on $\supp(\y)$, this amounts to setting $\le 2|\A|$ values $\y_{ij}$ to 0. 
The proof of the next lemma is given in Section~\ref{section:rounding}.

\begin{restatable}{lemma}{newMatchingMain}
\label{lem:newMatchingMain}
Let $\Hs$ be the set of most preferred items, and let  $(\y,\pi)$ be a solution to \eqref{prog:decomposedUtility} as in Theorem~\ref{theorem:approx-sparse}. Let $(\ry,\pi)$ be a reduction 
of $(\y,\pi)$.
Then in polynomial-time we can find a matching $\rho: \A\to\Hs$ such that 
$$ \overline \nsw(\ry, \rho) \ge \displaystyle \frac{1}{32\g^2}\overline \nsw(\y, \pi)\,.$$
Further, if the valuations are linear, then we can find a matching $\rho: \A\to\Hs$ such that 
$ \overline \nsw(\ry, \rho) \ge \frac{1}{8} \overline \nsw(\y, \pi)$.
\end{restatable}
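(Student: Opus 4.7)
The plan is to construct $\rho$ via an alternating cycle-swap argument between $\tau$ and $\pi$, leveraging the sparsity of $\y$ guaranteed by Theorem~\ref{theorem:approx-sparse} and the max-weight property of $\tau$. First I would quantify the loss incurred in going from $\y$ to $\ry$: for each agent $i$, let $L_i = \{j \in \F^+ : \y_{ij} > 0,\ \kappa(j) \ne i\}$ be the set of items whose fractional shares were discarded for $i$. Subadditivity and monotonicity of the Rado valuation give $\val_i(\y_i) \le \val_i(\ry_i) + \sum_{j \in L_i}\val_{ij}$. For $j \in L_i \subseteq \G \setminus \Hs$, substituting $\tau(i)$ by $j$ yields another matching in the complete bipartite graph between $\A$ and $\G$ (since $j \notin \tau(\A) = \Hs$), and so the max-weight property of $\tau$ forces $\val_{ij} \le \val_{i\tau(i)}$. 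Hence $\val_i(\y_i) \le \val_i(\ry_i) + |L_i|\val_{i\tau(i)}$, and the sparsity bound $|\supp(\y)| \le 2|\A| + |\F^+|$ from Theorem~\ref{theorem:approx-sparse} translates into $\sum_i |L_i| \le 2|\A|$.

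Next I would construct $\rho$ by alternating cycle swaps. Viewing $\tau$ and $\pi$ as matchings between $\A$ and $\Hs$, their union has maximum degree two and decomposes into vertex-disjoint alternating paths and even cycles. For each component $C$ there are two candidate matchings on $C$ (the $\tau$-edges or the $\pi$-edges), and I would define $\rho$ by picking, componentwise, the option maximizing $\prod_{i \in C}(\val_i(\ry_i) + \val_{i\rho(i)})^{w_i}$. The resulting $\rho$ is computable in polynomial time and is dominated by any maximum-weight matching on $\A \times \Hs$ with edge weights $w_i \log(\val_i(\ry_i) + \val_{ij})$. The key analytic lever is that on any alternating cycle $C$ the max-weight property of $\tau$ yields the product-level inequality $\prod_{i \in C}\val_{i\tau(i)}^{w_i} \ge \prod_{i \in C}\val_{i\pi(i)}^{w_i}$, which will control the $\val_{i\pi(i)}$ terms appearing in the upper bound on $a_i$.

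Finally I would combine the ingredients. Writing $a_i = \val_i(\y_i) + \val_{i\pi(i)}$ and $b_i = \val_i(\ry_i) + \val_{i\rho(i)}$, the first step gives $a_i \le \val_i(\ry_i) + |L_i|\val_{i\tau(i)} + \val_{i\pi(i)} \le 2(|L_i|+1)\bigl(\val_i(\ry_i) + \max(\val_{i\tau(i)}, \val_{i\pi(i)})\bigr)$. Taking weighted geometric means, the estimate $\sum_i (|L_i|+1) \le 3|\A|$ together with Lemma~\ref{lem:productBound} extracts one factor of $O(\g)$. The remaining ratio between $\prod_i\bigl(\val_i(\ry_i) + \max(\val_{i\tau(i)}, \val_{i\pi(i)})\bigr)^{w_i}$ and $\prod_i b_i^{w_i}$ is controlled by the cycle-swap choice of $\rho$ in conjunction with the max-weight inequality over cycles and a second application of Lemma~\ref{lem:productBound}, yielding another $O(\g)$. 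Absorbing the constants produces $\overline \nsw(\ry,\rho) \ge \overline \nsw(\y,\pi)/(32\g^2)$. In the additive case, subadditivity is tight, Theorem~\ref{theorem:approx-sparse} upgrades the sparsity bound to $\sum_i |L_i| \le |\A|$, and the additive structure of $\val_i(\y_i)$ obviates the second appeal to Lemma~\ref{lem:productBound}, recovering the stated $1/8$ bound.

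The main obstacle is the combination in the last paragraph. Neither $\rho = \tau$ nor $\rho = \pi$ is sufficient by itself: with $\rho = \tau$ the term $\val_{i\pi(i)}$ remains uncontrolled for agents where $\val_{i\pi(i)} \gg \val_{i\tau(i)}$, while with $\rho = \pi$ one cannot exploit the bound $\val_{ij} \le \val_{i\tau(i)}$ on the discarded items. The cycle-swap construction reconciles the two regimes, but the max-weight property of $\tau$ only gives a product-level inequality on each cycle and not a pointwise one; handling this product-only slack cleanly---without incurring an unbounded blow-up when individual ratios $\val_{i\pi(i)}/\val_{i\tau(i)}$ are large---is what forces the second invocation of Lemma~\ref{lem:productBound} and the extra factor of $\g$ in the Rado case.
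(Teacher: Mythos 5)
Your opening steps are sound and in fact coincide with ingredients of the paper's argument: the single-swap optimality of $\tau$ over the complete bipartite graph on $\A\times\G$ does give $\val_{ij}\le\val_{i\tau(i)}$ for every $j\in\F=\G\setminus\Hs$ (this is Claim~\ref{claim:rho=sigma}), and subadditivity bounds the reduction loss of agent $i$ by $|L_i|\,\val_{i\tau(i)}$. The genuine gap is your final combination step. You route the bound through the benchmark $\prod_i\bigl(\val_i(\ry_i)+\max(\val_{i\tau(i)},\val_{i\pi(i)})\bigr)^{w_i}$ and assert that the componentwise $\tau$-versus-$\pi$ choice, the cycle-level product optimality of $\tau$, and a second use of Lemma~\ref{lem:productBound} bound the gap between this benchmark and $\prod_i(\val_i(\ry_i)+\val_{i\rho(i)})^{w_i}$ by $O(\g)$. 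This is false. Take two agents of weight $1$ on a single alternating $2$-cycle with $\val_{1\tau(1)}=M$, $\val_{1\pi(1)}=1$, $\val_{2\tau(2)}=1$, $\val_{2\pi(2)}=M$, and all values of items in $\F$ negligible (so $\val_i(\y_i)\approx\val_i(\ry_i)\approx 0$ and $\tau$ is indeed a maximum-weight assignment). After taking geometric means the benchmark is $\approx M$, while \emph{every} matching $\rho$ gives $\overline\nsw(\ry,\rho)\approx\sqrt{M}$; the cycle inequality $\prod_{i\in C}\val_{i\tau(i)}^{w_i}\ge\prod_{i\in C}\val_{i\pi(i)}^{w_i}$ holds with equality and gives no leverage, so the loss at this step is unbounded in $M$ rather than $O(\g)$. (The lemma itself is not contradicted, since $\overline\nsw(\y,\pi)\approx\sqrt{M}$ as well; it is your intermediate benchmark that is irreparably lossy.) Hence the decomposition ``bound $\val_i(\y_i)+\val_{i\pi(i)}$ by a count times $\val_i(\ry_i)+\max(\val_{i\tau(i)},\val_{i\pi(i)})$, then pick $\rho$ against the max-benchmark'' cannot be completed by any choice of $\rho$.

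The paper instead couples the construction of $\rho$ with \emph{per-agent} guarantees that directly control the reduction loss (Lemma~\ref{lemma:rounding}): on each cycle of $\pi\Delta\tau$ it trims $\pi$ at agents with $\val_i(\y_i)>\de_i\val_{i\pi(i)}$ (each such agent loses at most half), and on each resulting alternating path it reverses to $\tau$ only when this costs at most $\prod_t(\de_{a_t}+1)^{w_{a_t}}$; for the agent left unmatched on a non-reversible path, property~\ref{harderCase}, namely $\val_{ij}\le\frac{1}{\de_i+1}(\val_i(\y_i)+\val_{i\rho(i)})$ for all $j\in\F$, is deduced by comparing $\tau$ with a hybrid assignment that shifts items along the path and hands the endpoint agent an item $j\in\F$ --- a use of $\tau$'s optimality strictly stronger than both your single-swap bound and your cycle-product inequality. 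These per-agent guarantees are what allow charging the removal of $\de_i$ items to a $\frac{1}{\de_i+1}$ fraction of $\val_i(\y_i)+\val_{i\rho(i)}$, and the two factors of $\g$ arise from applying Lemma~\ref{lem:productBound} to $\prod_i(\de_i+1)^{w_i}$ twice (once inside the rounding lemma's prefactor, once for the reduction), not from a benchmark comparison. Finally, in the additive case the paper does not merely use $\sum_i|L_i|\le|\A|$: it uses the forest support from Theorem~\ref{thm:linearEG} to \emph{choose} a reduction with $\de_i\le 1$ for every agent, which is what yields the stated factor $1/8$.
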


Such a matching $\rho$ can be found by combining the matching $\pi$ in the solution $(\y,\pi)$, and the initial matching $\tau$ from \ref{phase1} that delivers the highest NSW value such that every agent may receive only one item. We swap from $\pi$ to $\tau$ on certain alternating cycles. 

\medskip

We are ready to prove the main results. 

\main*
\begin{proof}
From Theorem~\ref{theorem:approx-sparse} and Lemma~\ref{lem:newMatchingMain}, we can obtain a solution 
an $(128\g^2)$-approximate solution $(\ry,\rho)$ to \eqref{prog:decomposedUtility} 
such that for each item  $\F^+$ there is exactly one incident edge in $\supp(\ry)$.
We can obtain a 0--1 valued solution $x$ to \eqref{prog:NSW} by assigning each item in $\Hs$ 
according to $\rho$ and each item $j\in \F^+$ to the unique agent $i$ with $\ry_{ij}>0$. 
Clearly, $\nsw(x)\ge \nsw(\ry,\rho)$. 
We obtain $\nsw(x)\ge \opt_{\Hs}/({256\g^3})\ge \opt/({256\g^3})$ using Theorem~\ref{theorem:reduction}.
For additive valuations, we use the stronger bounds in the same results.
\end{proof}

\mainSym*
\begin{proof}
The proof follows exactly as the proof of Theorem~\ref{thm:main} once we replace $\gamma$ by $e^{1/e}$.
Such a change is justified as in the symmetric case we can use Lemma~\ref{lemma:symmetricProductBound} instead of the bound given by 
Lemma~\ref{lem:productBound}.
\end{proof}

\section{Phase III: Approximating the mixed matching relaxation}
\label{sec:approximatingMatchingRelaxation}{}
\ref{phase3} presents a general way of obtaining a $2$-approximation to~\eqref{prog:decomposedUtility}. 
By Theorem~\ref{theorem:reduction}, this gives a $(4\g)$-approximation to~\eqref{prog:relaxation}, a mixed integer relaxation of the ANSW problem.
Recall that~\eqref{prog:decomposedUtility} is the following mixed integer program 
\begin{tagequation}[Mixed+matching]
\begin{aligned}
   &\max \quad \left( \prod_{i\in \A} \left( \val_i (\y_i) + \val_{i \sigma(i)} \right)^{w_i}  \right)^{1/\sum_i w_i}\\
   &\begin{aligned}
   \text{s.t.: }\quad && \sum_{i\in \A} \y_{ij} &\le 1      && \forall j \in \G \setminus \Hs\\
   && \y_{ij} &\ge 0      && \forall j \in \G \setminus \Hs, \forall i \in \A \\
   && \sigma: \A \to \Hs & \text{ is a matching.}      && 
   \end{aligned}
\end{aligned}
\end{tagequation}

In the above problem, we need to allocate items $\G$ to the agents in $\A$ in order to maximize an objective function that 
is an approximation of the NSW. Items in $\G \setminus \Hs$ can be allocated fractionally to the agents without any constraints. 
The items in $\Hs$ have to be allocated integrally via an assignment, thereby allocating exactly one item from $\Hs$ to each agent $\A$.

While the exact computational complexity of~\eqref{prog:decomposedUtility} remains unresolved, 
we show that we can $2$-approximate it.

Denote $\F = \G \setminus \Hs$. 
Let $\A'$ be the subset of agents that have positive value for the items in $\G \setminus \Hs$, $\A' := \{ i \in \A : \val_i(\G \setminus \Hs) > 0 \}$, as some agents may only have positive value for the items in $\Hs$.
Restricting~\eqref{prog:decomposedUtility} to the items $\F$ and agents $\A'$ and taking the objective  yields an instance of \eqref{prog:EGprod}:
\begin{equation*}
\begin{aligned}
  & \max \quad  \sum_{i\in \A'} w_i \log \val_i (\y_{i})  \\
  &\begin{aligned}
    \text{s.t.: } \quad && \sum_{i \in \A'} \y_{ij} &\le 1    &&\quad \forall j \in \F\\
            && \y_{ij} &\ge 0   &&\quad \forall j \in \F, \forall i \in \A'.
   \end{aligned}
\end{aligned}
\end{equation*} 
The above is a convex program whenever the valuations $\val_i(.)$ are concave, 
and we can solve it to an arbitrary precision in polynomial time if we have access to a supergradient oracle to the objective function. 

On the other hand, suppose that the variables $y$ are fixed in~\eqref{prog:decomposedUtility}.
Under the fixed $y$, we can find an optimal assignment $\sigma$.
Namely, an optimal assignment is exactly a maximum weight assignment in the bipartite graph 
$(\A, \Hs; E)$ where the weight of an edge $ij$ for $i\in \A$, $j\in \Hs$ is $\omega_{ij} := w_i \log( \val_i(\y_i) + \val_{ij})$.

Informally,~\eqref{prog:decomposedUtility} is a combination of two tractable problems. 
We show that an optimal solution $y^*$ to the restriction of the problem to $\F$ and $\A'$, 
and an optimal assignment with respect to the fixed $y^*$ gives a $2$-approximation for~\eqref{prog:decomposedUtility}.

In Section~\ref{subsection:propertiesOfEG} we discuss the restriction of the problem to $\F$ and $\A'$ and give a technical lemma. The main result of the section is presented in Section~\ref{subsection:part2phaseIII}.

\subsection{Properties of Eisenberg--Gale program}
\label{subsection:propertiesOfEG}
Let us now consider the Eisenberg--Gale program \eqref{prog:EGprod}.
An optimal solution $y^*$ and the optimal Lagrange multipliers $p_j$ for $j\in \F$ can be interpreted as the so-called 
\emph{Gale equilibrium} in the market with divisible items $\F$, agents $\A'$,
 and where agent $i$ has valuation $\val_i$ and budget $w_i$.
In particular,  $\y^*$ represent the allocations and 
 $p_j$ for $j \in \F$, specify the prices in the market equilibrium, 
see e.g.,~\cite{garg2019auction,nesterov2018computation}.\footnote{
  In case of homogeneous valuations this can be used to find a \emph{Fisher} equilibrium,
  since Fisher and Gale equilibria coincide under homogeneous valuations~\cite{Eisenberg1959,nesterov2018computation}. 
}

Our technical lemma relates the combined difference in valuations of each agent 
in the optimal solution $y^*$ and any other allocation $y'$. 
The rest of Section~\ref{subsection:propertiesOfEG} is devoted to its proof.

\begin{lemma}\label{lemma:sumOfFractions}
Let $\y^*$ be an optimal solution to~\eqref{prog:EGprod}. 
Then for any feasible solution $\y'$ and any $\A'' \subseteq \A'$ it holds 
$$\sum_{i \in \A''} w_i \frac{\val_i(\y'_i)}{\val_i(\y^*_i)}  \le \sum_{i \in \A''} w_i + \sum_{i\in \A'} w_i\,.$$
\end{lemma}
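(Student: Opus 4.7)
The plan is to use the KKT conditions for the concave program \eqref{prog:EGprod}, together with concavity of each $\val_i$, to bound each ratio $\val_i(\y'_i)/\val_i(\y^*_i)$ linearly. Let $p_j \ge 0$ denote the Lagrange multipliers associated with the constraints $\sum_{i\in\A'} \y_{ij} \le 1$ at the optimum $\y^*$. Writing down stationarity (using a supergradient if $\val_i$ is not differentiable), we obtain for every $i \in \A'$, $j\in \F$,
\[
w_i \cdot \frac{\nabla_j \val_i(\y^*_i)}{\val_i(\y^*_i)} \le p_j,
\]
with equality whenever $\y^*_{ij}>0$. Complementary slackness gives $p_j \cdot (1 - \sum_{i\in \A'} \y^*_{ij}) = 0$, so whenever $p_j > 0$ the item $j$ is fully allocated.

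Next, I invoke concavity of $\val_i$ at $\y^*_i$: for every feasible $\y'$,
\[
\val_i(\y'_i) \le \val_i(\y^*_i) + \nabla \val_i(\y^*_i)\cdot(\y'_i - \y^*_i).
\]
Dividing by $\val_i(\y^*_i)$, multiplying by $w_i$, and summing over $i\in \A''$ yields
\[
\sum_{i\in \A''} w_i \frac{\val_i(\y'_i)}{\val_i(\y^*_i)} \le \sum_{i\in \A''} w_i + \sum_{i\in \A''} w_i\frac{\nabla \val_i(\y^*_i)\cdot \y'_i}{\val_i(\y^*_i)} - \sum_{i\in \A''} w_i\frac{\nabla \val_i(\y^*_i)\cdot \y^*_i}{\val_i(\y^*_i)}.
\]
The last term is nonnegative by monotonicity of $\val_i$ (supergradients are nonnegative) so I discard it to obtain an upper bound. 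For the middle term I apply the KKT bound and the feasibility of $\y'$:
\[
\sum_{i\in \A''} w_i \frac{\nabla \val_i(\y^*_i)\cdot \y'_i}{\val_i(\y^*_i)} = \sum_j \y'_{ij}\cdot w_i\frac{\nabla_j\val_i(\y^*_i)}{\val_i(\y^*_i)} \le \sum_j p_j \sum_{i\in \A''}\y'_{ij} \le \sum_j p_j.
\]

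It remains to show $\sum_j p_j \le \sum_{i\in \A'} w_i$. Using the equality case of the KKT bound (which holds pointwise whenever $\y^*_{ij}>0$, and trivially otherwise), together with complementary slackness for the item constraints,
\[
\sum_j p_j = \sum_j p_j \sum_{i\in\A'} \y^*_{ij} = \sum_{i\in \A'} \sum_j \y^*_{ij}\cdot w_i\frac{\nabla_j\val_i(\y^*_i)}{\val_i(\y^*_i)} = \sum_{i\in \A'} w_i \frac{\nabla\val_i(\y^*_i)\cdot\y^*_i}{\val_i(\y^*_i)}.
\]
The key observation now is that concavity of $\val_i$ with $\val_i(\mathbf 0)=0$ gives $\nabla\val_i(\y^*_i)\cdot\y^*_i \le \val_i(\y^*_i) - \val_i(\mathbf 0) = \val_i(\y^*_i)$, by evaluating the concavity inequality at $\mathbf{0}$. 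Hence each summand is at most $w_i$, so $\sum_j p_j \le \sum_{i\in \A'} w_i$, and combining with the previous display proves the claim.

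The one step I expect to require some care is the use of (super)gradients and the Lagrange multiplier machinery when $\val_i$ is not smooth (e.g., for Rado valuations the concave extension is piecewise linear). This is handled by picking a supergradient of $\val_i$ at $\y^*_i$ and using a saddle-point/KKT argument for the concave program \eqref{prog:EGprod} in place of classical stationarity; the inequalities above remain valid verbatim.
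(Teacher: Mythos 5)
Your proof is correct and follows essentially the same route as the paper: KKT stationarity with a supergradient $g_i$ at $\y^*_i$, complementary slackness, and the aggregate price bound $\sum_j p_j \le \sum_{i\in\A'} w_i$ via $g_i^{\top}\y^*_i \le \val_i(\y^*_i)$; the paper differs only cosmetically by working with $\overline y_i = \max\{\y^*_i,\y'_i\}$ and monotonicity instead of splitting off and discarding the $g_i^{\top}\y^*_i$ term. One small repair: the justification ``supergradients are nonnegative by monotonicity'' is not valid at boundary points of $[0,1]^{\G}$, but the discarded term is indeed nonnegative for the KKT supergradient, since $w_i g_i = \val_i(\y^*_i)(p+\mu_i)$ with $\mu_i^{\top}\y^*_i=0$ gives $w_i\, g_i^{\top}\y^*_i = \val_i(\y^*_i)\, p^{\top}\y^*_i \ge 0$.
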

We recall some definitions and the Karush--Kuhn--Tucker (KKT) optimality conditions in terms of subgradients; 
see~\cite[Chapter 2 and Theorem 3.27]{ruszczynski2011nonlinear}.
Given a \emph{convex} function $f: \R^{M} \to \R$, we say that $g$ is a \emph{subgradient} of $h$ at
 $y^* \in \R^{M}$ if $f(y) \ge f(y^*) +  g^{\top} (y-y^*)$ for all $y \in \R^{M}$. 
The set of all subgradients at a point $y^*$ is  
called \emph{subdifferential} and denoted as $\partial f(y^*)$.
If the function is differentiable then $\partial f(y^*) = \{\nabla f(y^*)\}$.
Consider the convex program 
\begin{equation*}\label{prog:convex}
\begin{aligned}
  &\min \quad f_0 (y)\\
  &\begin{aligned}
    \text{s.t.: } && f_j (y) &\le 0     &&\quad \forall j \in \F \\
                  && y &\ge 0 \ , &&
   \end{aligned}
\end{aligned}
\end{equation*}
where $f_j$ for $j\in \{0\} \cup \F$ is convex. 
Assume that the there exists a strict feasible point (Slater's condition).
Then,  $y^*$ is a an optimal solution with the Lagrange multipliers $p$,
if and only if the following conditions hold
\begin{itemize}
    \item $f_j(y^*) \le 0$, $p_j \ge 0$ for all $j \in \F$ (primal and dual feasibility), 
    \item $0 \in \partial f_0 (y^*) + \sum_{j \in \F} p_j \partial f_j (y^*) + \{ \mu \in \R^{M}_{-} : \mu^\top y^* = 0\}$ (stationarity), and 
    \item $p_j f_j(y^*) = 0$ (complementary slackness). 
\end{itemize}

We say that $g$ is a \emph{supergradient} of the concave function $f$ if $-g$ is a subgradient of $-f$. 
The following proposition guarantees the existence of supergradients.
\begin{proposition}
The function $f: \R^{M}_+ \to \R$ is concave if and only if $\forall y^* \in \R^M_+$ it has a non-empty superdifferential at $y^*$.
In other words, there is $g\in \R^M$
such that 
$$
f(y) \le f(y^*) + g^{\top}(y-y^*)\, .
$$
\end{proposition}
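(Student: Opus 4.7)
The plan is to prove the two directions separately. The ``if'' direction is a short direct verification using the supergradient inequality; the ``only if'' direction is the substantive part and rests on a supporting hyperplane argument applied to the hypograph of $f$.

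For the ``if'' direction, I would fix $y_1, y_2 \in \R^M_+$ and $\lambda \in [0,1]$, set $y^* := \lambda y_1 + (1-\lambda) y_2 \in \R^M_+$, and pick any supergradient $g$ at $y^*$. The inequalities
\[
f(y_1) \le f(y^*) + g^\top(y_1 - y^*), \qquad f(y_2) \le f(y^*) + g^\top(y_2 - y^*)
\]
weighted by $\lambda$ and $1 - \lambda$ and summed yield $\lambda f(y_1) + (1-\lambda) f(y_2) \le f(y^*)$, since the linear terms cancel. This is Jensen's inequality, so $f$ is concave.

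For the ``only if'' direction, I would work with the hypograph $H := \{(y, t) \in \R^M_+ \times \R : t \le f(y)\}$, which is convex precisely because $f$ is concave. The point $(y^*, f(y^*))$ lies on the boundary of $H$, so the supporting hyperplane theorem produces a nonzero $(a, b) \in \R^{M+1}$ with $a^\top y + b t \le a^\top y^* + b f(y^*)$ for every $(y, t) \in H$. Taking $t \to -\infty$ along $H$ forces $b \ge 0$, and once $b > 0$ is known the substitution $t = f(y)$ rearranges the hyperplane inequality to $f(y) - f(y^*) \le (-a/b)^\top(y - y^*)$, identifying $g := -a/b$ as the desired supergradient at $y^*$.

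The main obstacle, and essentially the only delicate step, is ruling out the degenerate case $b = 0$. For $y^*$ in the interior of $\R^M_+$ this is immediate: the hypograph has nonempty interior in $\R^{M+1}$ (any point strictly below the graph of $f$ is interior), so no purely vertical hyperplane can support $H$ at a non-vertical boundary point. For $y^*$ on the boundary of $\R^M_+$, I would approximate $y^*$ by interior points $y^*_k \to y^*$, take corresponding supergradients $g_k$ (which exist by the interior case), and extract a convergent subsequence, using continuity of the finite concave function $f$ on the interior of its domain together with local boundedness of the superdifferential up to directions in the recession cone of $\R^M_+$ (which can be absorbed by adding nonnegative combinations of the normals to the coordinate hyperplanes meeting $y^*$). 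Since this proposition is a classical fact of convex analysis, the argument can alternatively be concluded by simply citing Theorem~23.4 of Rockafellar's \emph{Convex Analysis} after the hypograph derivation above.
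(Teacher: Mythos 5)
Your ``if'' direction is correct, and your supporting-hyperplane argument on the hypograph is the standard proof and works verbatim when $y^*$ lies in the interior of $\R^M_+$; note, for comparison, that the paper states this proposition without proof, as a background fact of convex analysis. The genuine gap is your treatment of boundary points, and it is not a repairable technicality: the ``only if'' direction is false at boundary points of the domain as the statement is literally written. Take $M=1$ and $f(y)=\sqrt{y}$ on $\R_+$, which is finite, continuous and concave; a supergradient at $y^*=0$ would have to satisfy $\sqrt{y}\le g\,y$ for all $y\ge 0$, i.e.\ $g\ge 1/\sqrt{y}$ for all $y>0$, so the superdifferential at $0$ is empty. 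Concretely, your limiting argument breaks at the step ``local boundedness of the superdifferential up to directions in the recession cone'': in this example the supergradients at interior points $y_k\to 0$ are $g_k=1/(2\sqrt{y_k})\to\infty$, and no normal-cone correction can produce a finite limit, because there is nothing to converge to. Citing Theorem~23.4 of Rockafellar does not close the gap either: that theorem guarantees nonemptiness of the (sub)differential only at points of the \emph{relative interior} of the effective domain, which is exactly the case you had already handled.

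So the most your argument can establish is the proposition with $y^*$ restricted to the interior of $\R^M_+$, or under an additional hypothesis such as $f$ being polyhedral (piecewise linear) concave, for which supergradients do exist at every point of the domain via LP duality; this is in fact the situation in the paper, since the valuations there are concave extensions of Rado valuations given by the linear program \eqref{prog:utility}. In short, the defect lies in the statement rather than in your interior-case derivation: either add the interior or polyhedrality hypothesis, or be explicit that at boundary points the superdifferential may be empty and handle those points separately where the proposition is invoked.
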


We can interpret the Lagrange multipliers in \eqref{prog:EGprod} as  prices; the next 
 claim states that no agent spends more that her budget in a Gale--equilibrium.
\begin{claim}\label{claim:priceBoundedByWeight}
Let $\y^*$ be an optimum and $p$ be the optimal Lagrange multipliers of~\eqref{prog:EGprod}.
For all $i \in \A'$ it holds $p^{\top} \y^*_i \le w_i$.
\end{claim}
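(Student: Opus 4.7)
The plan is to apply the KKT optimality conditions stated just above the claim, using a supergradient of the concave valuation $\val_i$. Since \eqref{prog:EGprod} is a maximization of a concave objective, I first rewrite it as the convex minimization of $f_0(\y) = -\sum_{i \in \A'} w_i \log \val_i(\y_i)$ subject to $f_j(\y) := \sum_{i \in \A'} \y_{ij} - 1 \le 0$ for each $j \in \F$, and $\y \ge 0$. Slater's condition holds (take any strictly positive $\y$ scaled down), so the KKT conditions apply with the given Lagrange multipliers $p_j \ge 0$.

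For each $i \in \A'$, let $g_i \in \R^{\F}$ be a supergradient of $\val_i$ at $\y_i^*$; then $-\frac{w_i}{\val_i(\y_i^*)} g_i$ is a subgradient of the $i$-th term of $f_0$. The stationarity condition therefore yields, coordinate-wise, a nonnegative multiplier $\mu_{ij} \ge 0$ for each $(i,j)$ with $i \in \A'$, $j \in \F$, such that
\begin{equation*}
-\frac{w_i}{\val_i(\y_i^*)} g_{ij} + p_j - \mu_{ij} = 0, \qquad \mu_{ij}\, \y_{ij}^* = 0.
\end{equation*}
Multiplying by $\y_{ij}^*$ and summing over $j \in \F$, the $\mu_{ij}\y_{ij}^*$ terms vanish by complementary slackness, giving
\begin{equation*}
\sum_{j \in \F} p_j\, \y_{ij}^* \;=\; \frac{w_i}{\val_i(\y_i^*)} \sum_{j \in \F} g_{ij}\, \y_{ij}^* \;=\; \frac{w_i}{\val_i(\y_i^*)}\, g_i^{\top} \y_i^*.
\end{equation*}

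It remains to bound $g_i^{\top} \y_i^*$. Since $g_i$ is a supergradient of the concave function $\val_i$ at $\y_i^*$, evaluating the supergradient inequality at the origin gives
\begin{equation*}
\val_i(\mathbf{0}) \;\le\; \val_i(\y_i^*) + g_i^{\top}(\mathbf{0} - \y_i^*) \;=\; \val_i(\y_i^*) - g_i^{\top} \y_i^*.
\end{equation*}
Using $\val_i(\mathbf{0}) = 0$, this rearranges to $g_i^{\top} \y_i^* \le \val_i(\y_i^*)$. Substituting back yields $p^{\top} \y_i^* = \sum_{j \in \F} p_j \y_{ij}^* \le w_i$, as desired. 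The only care needed in this proof is handling the signs and the $\mu_{ij}$ multipliers correctly when passing from a concave maximization to a convex minimization formulation; once that is set up, the argument is a direct application of the KKT conditions together with one supergradient inequality.
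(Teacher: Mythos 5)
Your proof is correct and follows essentially the same route as the paper: apply the KKT stationarity condition with a supergradient $g_i$ of $\val_i$ at $\y_i^*$, use complementary slackness to eliminate the nonnegativity multipliers after pairing with $\y_i^*$, and bound $g_i^{\top}\y_i^* \le \val_i(\y_i^*)$ via the supergradient inequality at the origin together with $\val_i(\mathbf{0})=0$. The only difference is cosmetic sign conventions from recasting the concave maximization as a convex minimization, plus your explicit (and welcome) remark on Slater's condition.
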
 

\begin{proof}
Let us apply the above KKT conditions to the concave maximization program~\eqref{prog:EGprod}. 
for each agent  $i \in \A'$
$$0 \in \partial \left( - w_i \log (\val_i (\y^*_i)) \right) + p + \{\mu_i\in \R^{\F}_{-} : \mu_i^\top \y^*_i = 0\}\,.$$
By the  composition rules for subgradients we have
$$0 \in  -\frac{w_i \partial \val_i (\y^*_i)}{\val_i(\y^*_i)} + p + \{\mu_i\in \R^{\F}_{-} : \mu_i^\top \y^*_i = 0\}\,.$$ 
Therefore, there exists a supergradient $g_i \in \partial \val_i (\y^*_i)$ such that 
$w_i g_i^{\top} = \val_i(\y^*_i) \cdot(p^{\top} + \mu^{\top}_i)$ where $\mu_i \le 0$ and $\mu_i^\top \y^*_i = 0$.

By definition of subgradient (supergradient) at $\y^*_i$,
we have that $g_i^{\top} \y^*_i \le \val_i(\y^*_i)$ for all $i \in \A'$.
It follows that $p^{\top} \y^*_i \le w_i $ for all $i\in \A'$. 
\end{proof}

\begin{proof}[Proof of Lemma~\ref{lemma:sumOfFractions}]
By the KKT conditions, for each $i\in \A'$, we have a supergradient 
$g_i \in \partial \val_i (\y^*_i)$ such that 
$
\frac{w_i g_i}{\val_i(\y^*_i)} \le p
$ holds.
By complementarity slackness, if $p_j > 0$ then  $\sum_{i\in \A'} \y^*_{ij}=1$.
Let $\overline y_{ij} = \max\{\y^*_{ij}, \y'_{ij}\}$. Then we obtain:
$$
\val_i(\y'_i) \le \val_i(\overline y_i) 
\le \val_i(\y^*_i) +  g_i^{\top} (\overline y_i - \y^*_i)  
\le \val_i(\y^*_i) +  \frac{\val_i(\y^*_i) p^{\top}}{w_i}\cdot (\overline y_i - \y^*_i) \,.
$$
The first inequality is by monotonicity, the second by the definition of the supergradient, and the third from the KKT conditions as noted above.
After rearranging we obtain
$\displaystyle \frac{w_i \val_i(\y'_i)}{\val_i(\y^*_i)} 
\le w_i + p^{\top} (\overline y_i - \y^*_i)$.
Summing the previous inequality for each agent $i \in \A''$ for a subset $\A'' \subseteq \A'$, 
and by definition of $\overline y_i$, we have
$$
\sum_{i \in \A''} \frac{w_i \val_i(\y'_i)}{\val_i(\y^*_i)} 
\le \sum_{i \in \A''}w_i + \sum_{i \in \A''} p^{\top} (\overline y_i - \y^*_i)  
\le \sum_{i \in \A''}w_i + p^{\top} \1 \,.
$$
Since $p_j > 0$ implies $\sum_{i\in \A'} \y^*_{ij}=1$ we have that 
$p^{\top} \1 =  p^{\top}\sum_{i\in \A'} \y^*_i$.
Then, by Claim~\ref{claim:priceBoundedByWeight} we have
\begin{equation*}
\displaystyle \sum_{i \in \A''} w_i\frac{\val_i(\y'_i)}{\val_i(\y^*_i)}  \le \sum_{i \in \A''} w_i  + \sum_{i\in \A'} w_i
  \qedhere
\end{equation*}
\end{proof}

\subsection{The approximation guarantee for the mixed matching relaxation}
\label{subsection:part2phaseIII}
\approximatingMIP*

\begin{proof}
\begin{sloppypar}
Let $\pi^*$ be maximum weight matching in the bipartite graph with colour classes $\A$
and $\Hs$ and with edge weights $q^*_i = w_i \log(v_i(\y^*) + v_{ij})$. 
Equivalently, $\pi^*$ is a matching maximizing 
$$\displaystyle \left( \prod_{i\in \A'} \left( \val_i (\y^*_i) + \val_{i \pi^*(i)}\right)^{w_i} \right)^{1/\sum_{i\in \A} w_i}\,.$$
We have the bounds 
\begin{equation}\label{eq:y-pi}
\overline \nsw(\y, \pi) \ge \overline \nsw(\y, \pi^*) \ge \frac{1}{\alpha} \overline\nsw (\y^*, \pi^*)\, .
\end{equation}
The first inequality is by the definition of $\pi$ as the maximum weight matching. The second inequality follows from the assumption $\val_i(\y_i) \ge \frac{1}{\alpha} \val_i(\y^*_i)$ for each $i\in \A'$. 

The rest of the proof is devoted to proving that $\overline \nsw(\y^*, \pi^*) \ge \frac{1}{2} \overline \opt_{\Hs}$; together with \eqref{eq:y-pi}, this implies the statement.
Let us introduce some notation. 
For an agent $i \in \A$, let $\ls^*_i = \val_i(\y^*_i)$ be the value agent $i$ gets from the optimal fractional bundle $\y^*$.
Then, 
$\overline \nsw (\y^*, \pi^*) = 
\left( \prod_{i \in \A'} (\ls^*_i + v_{i \pi^*(i)})^{w_i} 
\prod_{i \in A\setminus \A'} v_{i \pi^*(i)}^{w_i} \right)^{1/\sum_{i\in \A} w_i}$.

Let $(\y', \varrho)$ be an optimal solution achieving $\overline \opt_{\Hs}$.
For an agent $i \in \A$ let $\ls_i = \val_i (\y'_i)$ be the value agent $i$ gets from the fractional allocation $\y'$.
Then $\overline \opt_{\Hs} = \overline \nsw (\y', \varrho) = \left( \prod_{i \in \A} (\ls_i + \val_{i \varrho(i)})^{w_i} \right)^{1/ \sum_{i\in \A} w_i}$.
By definition of the set $\A'$, the agents in $\A \setminus \A'$ do not value the items in $\F$.
Thus, by monotonicity
\begin{equation*}\label{equation:A'}
\overline \nsw (\y', \varrho) = \left( \prod_{i \in \A'} (\ls_i + \val_{i \varrho(i)})^{w_i} 
\prod_{i \in \A \setminus \A' } \val_{i \varrho(i)}^{w_i}  \right)^{1/ \sum_{i\in \A} w_i} \,.
\end{equation*}
By the choice of $\pi^*$, we have
\begin{equation*}
   \overline \nsw(\y^*, \pi^*) \ge \overline \nsw(\y^*, \varrho) 
   = \left(\prod_{i \in \A'} (\ls^*_i + \val_{i \varrho(i)})^{w_i} \prod_{i \in \A\setminus \A'} \val_{i \varrho(i)}^{w_i} \right)^{1/\sum_{i \in \A} {w_i}} \,.
\end{equation*}
Combining the last two we have: 
$\displaystyle
    \frac{\overline \nsw(\y', \varrho)}{\overline \nsw(\y^*, \pi^*)} 
    \le \left(\prod_{i \in \A'} \left( \frac{\ls_i + \val_{i \varrho(i)}}{\ls^*_i + \val_{i \varrho(i)}} \right)^{w_i} \right)^{1/ \sum_{i\in \A} w_i} 
$.

Let $\A'' = \{i \in \A': \ls_i > \ls^*_i\}$ be the set of agents that get more value from $y'$ than $y^*$. 
Then, for $i  \in \A' \setminus \A''$ the fraction $\displaystyle \frac{\ls_i + \val_{i \varrho(i)}}{\ls^*_i + \val_{i \varrho(i)}}$ is trivially bounded by $1$. 
On the other hand, for $i \in \A''$ we have $\displaystyle  \frac{\ls_i + \val_{i \varrho(i)}}{\ls^*_i + \val_{i \varrho(i)}} \le \frac{\ls_i}{\ls^*_i}$. 
Since $\overline \opt_{\Hs} = \overline \nsw(\y', \varrho)$ it follows
\begin{equation*}
    \frac{\overline \opt_{\Hs}}{\overline \nsw(\y^*, \pi^*)} 
    \le \left(\prod_{i \in \A'} \left(\frac{\ls_i + \val_{i \varrho(i)}}{\ls^*_i + \val_{i \varrho(i)}}\right)^{w_i} \right)^{1/\sum_{i\in \A} w_i }
    \le \left(\prod_{i \in \A''} \left(\frac{\ls_i}{\ls^*_i} \right)^{w_i} \right)^{1/\sum_{i\in A} w_i } \,.
\end{equation*}
We claim that the last expression is bounded by $2$.
By Lemma~\ref{lemma:sumOfFractions} we have $\sum_{i\in \A''} w_i\frac{\ls_i}{\ls^*_i} \le \sum_{i \in \A''}w_i + \sum_{i\in \A'}  w_i$. 
Then by the weighted arithmetic-geometric mean we have 
\end{sloppypar} 
\[
\begin{aligned}
\prod_{i \in \A''}  \left( \frac{\ls_i}{\ls^*_i} \right)^{w_i/ \sum_{i\in \A} w_i } 
\le \frac{\sum_{i\in \A''} w_i \frac{\ls_i}{\ls^*_i}  +\sum_{i \in \A\setminus \A''} 1}{\sum_{i \in \A} w_i} 
\le \frac{\sum_{i \in \A''}w_i + \sum_{i\in \A'} w_i +  |\A\setminus \A''|}{\sum_{i \in \A} w_i} 
\le 2 \,.
\end{aligned}
\]
The lemma follows. 
\end{proof}

\section{Phase IV: Obtaining a sparse approximate solution}
\label{section:sparsifying}

Recall that a continuous Rado valuation is defined as an optimum of the LP \eqref{prog:utility}. For the valuation $\val_i$ of agent $i\in\A$, this is defined by
 a bipartite graph $(\G, \T_i; \E_i)$ with costs on the edges $\co_i : \E_i \to \R$,
and a matroid $ \M_i = (\T_i, \I_i)$ with a rank function $\f_i=\f_{\M_i}$.
The program~\eqref{prog:EGprod} for 
$\A'$ and $\F= \G \setminus \Hs$ can be thus written as follows.
\begin{equation*}\label{prog:explicit}
\begin{aligned}
    &\text{max} \quad \sum_{i\in \A'} w_i \log \left(\sum_{j \in \F} \sum_{k\in \T_i} \co_{ijk} z_{ijk}   \right) \\
    &\begin{aligned}
    \text{s.t.: }\quad &&\sum_{i \in \A'} \y_{ij} &\le 1          &&\quad \forall j \in \F\\
                    && \sum_{k\in \T_i } z_{ijk} &\le \y_{ij}       &&\quad \forall i\in \A' , \forall j \in \F\\
                    &&\sum_{j \in \F} \sum_{ k \in \Se} z_{ijk} &\le \f_i(\Se)           &&\quad \forall i \in \A', \forall \Se \subseteq \T_i \\
                    &&\y  \ge 0\, , \quad   z &\ge 0    \,.          &&
    \end{aligned}
\end{aligned}
\end{equation*}
Without loss of generality we can assume that the second set of constraints always holds with equality, i.e., 
$\y_{ij} = \sum_{k \in \T_i} z_{ijk}$ for $j\in \F$ and $i \in \A'$.
By  eliminating the variables $\y$, the program~\eqref{prog:EGprod} becomes:
\begin{tagequation}[EG-Rado]
\label{prog:explicitReduced}
\begin{aligned}
 &\text{max} \quad \sum_{i\in \A'} w_i \log \left(\sum_{j \in \F} \sum_{k\in \T_i} \co_{ijk} z_{ijk}   \right) \\
    &\begin{aligned}
    \text{s.t.: }\quad&& \sum_{i \in \A'} \sum_{k \in \T_i} z_{ijk} &\le 1          &&\quad \forall j \in \F\\
                    && \sum_{j \in \F}\sum_{ k \in \Se} z_{ijk}  &\le \f_i(\Se)     &&\quad \forall i \in \A', \forall \Se \subseteq \T_i \\
                    && z &\ge 0  \,,                                        && 
    \end{aligned}
\end{aligned}
\end{tagequation}
Using this formulation, we first show that the Eisenberg--Gale type convex program~\eqref{prog:EGprod} can be solved exactly in polynomial time for Rado valuations (Section~\ref{subsection:optimalEG}).
We then transform the optimal solution to a sparse approximate solution (Section~\ref{subsection:sparseOptimalSolution}).

\subsection{Solving the Eisenberg-Gale relaxation}
\label{subsection:optimalEG}

In this section, we prove the following lemma.

\optimalSolution*

As noted above,  \eqref{prog:EGprod} with Rado valuations for the set of agents $\A'$ and set of goods $\F$ is equivalent to \eqref{prog:explicitReduced}. Throughout, we assume this program is feasible, i.e. it has a solution with finite objective value. This is a mild condition only requiring the existence of at least one edge $(j,k)\in E_i$ with $c_{ijk}>0$ and $r_i(\{k\})=1$ for every $i\in \A'$.

In general, one can only expect to solve convex programs approximately: no rational solution may even exist. Vazirani \cite{Vazirani2012} defines rational convex programs where a finite optimum exists with bounded bit-complexity in the input size, where the input is described by a finite set of parameters. 
This model is not directly applicable for our program \eqref{prog:explicitReduced} as it is described with an exponential number of constraints. The bound $\mathrm{poly}(|\A|,|\G|,T,\log C, \log U)$ does not take into account the matroidal constraints; it is polynomial in the amount of information needed to describe the objective function.\footnote{We note that for exponential size linear programs, a standard way to bound the encoding size is giving bounds on facet/vertex-complexity, defined later in this section. The program \eqref{prog:explicitReduced} maximizes a concave function over a polytope that has facet complexity $O(|\A|T)$.}

We first show that the set of optimal solutions is a polytope where the vertices have polynomially bounded bit-complexity.

\begin{lemma}\label{lemma:rationalOptimum}
For an NSW problem instance with Rado valuations as in Lemma~\ref{lemma:optimalSolution}, the set of optimal solutions forms a polytope. The bit-complexity of each vertex of this polytope is bounded as $\mathrm{poly}(|\A|,|\G|,T,\log C, \log U)$.
\end{lemma}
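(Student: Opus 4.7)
The plan is to first establish that the set of optima $P^{*}$ is a polytope, and then bound the bit-complexity of its vertices.

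For the first part, I would observe that the objective $F(z) = \sum_{i \in \A'} w_i \log L_i(z)$ with $L_i(z) = \sum_{j,k} \co_{ijk} z_{ijk}$ is concave, and that since $\log$ is strictly concave and each $w_i > 0$, equality in Jensen's inequality on the segment $[z^{(1)}, z^{(2)}]$ between two optima forces each $L_i$ to be affine, and hence constant, on that segment. Hence all optima share common values $\val_i^{*} := L_i(z^{*})$, and
\[
P^{*} = \left\{ z \in P : L_i(z) = \val_i^{*} \ \ \forall i \in \A' \right\},
\]
where $P$ denotes the feasible region of~\eqref{prog:explicitReduced}. This is the intersection of a polytope with $|\A'|$ hyperplanes, hence a polytope.

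For the second part, the reduction I plan to use is: if the optimal values $\val_i^{*}$ are rational with bit-complexity $\mathrm{poly}(|\A|,|\G|,T,\log C,\log U)$, then the conclusion will follow, because every inequality defining $P^{*}$ then has polynomial bit-complexity (capacity, non-negativity, and matroid rank inequalities use $0/1$ coefficients with right-hand side at most $T$; the value equations have coefficients bounded by $C$ and right-hand side $\val_i^{*}$), and the standard bound on vertex complexity of rational polyhedra (as in~\cite[Chapter~6]{gls}) guarantees polynomial bit-complexity for every vertex of $P^{*}$.

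To control $\val_i^{*}$, I would invoke the KKT conditions. At any optimum $z^{*}$ there must exist non-negative multipliers $p_j$ (for capacity) and $\mu_{i,S}$ (for matroid constraints, supported on tight ones) satisfying the stationarity identity
\[
\frac{w_i \co_{ijk}}{L_i(z^{*})} = p_j + \sum_{S \ni k} \mu_{i,S} \qquad \text{for } (i,j,k) \in \supp(z^{*}),
\]
with $\le$ replacing $=$ off the support. The key step will be to reparametrize by $\beta_i := 1/L_i(z^{*})$, which turns these into the \emph{linear} equations $w_i \co_{ijk} \beta_i - p_j - \sum_{S \ni k} \mu_{i,S} = 0$ in the unknowns $(\beta_i, p_j, \mu_{i,S})$. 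Submodularity of each rank function $\f_i$ implies the tight matroid constraints at $z^{*}$ for agent $i$ are closed under intersection and union; by the standard uncrossing argument, I would replace them with a laminar family of size at most $|\T_i| \le T$ without changing the affine span of the active dual variables. The KKT system will then have polynomially many equations and unknowns, all with integer coefficients bounded by $C$ and $T$; applying Cramer's rule to a basic subsystem will yield $\beta_i$ as a rational of polynomial bit-complexity, and hence the same bound for $\val_i^{*} = 1/\beta_i$. The main obstacle is precisely this bit-complexity bound on $\val_i^{*}$: the reparametrization $\beta_i \mapsto 1/L_i(z^{*})$ linearizes the otherwise bilinear stationarity, and the uncrossing step is needed to tame the exponentially many tight matroid constraints; once both are in place, Cramer's rule together with the standard vertex-complexity theorem for rational polyhedra will finish the proof.
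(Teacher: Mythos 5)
Your first two steps are sound and are consistent with how the paper itself later describes the optimal face: strict concavity of $\log$ forces the values $L_i$ to be constant over all optima, so $P^*=\{z\in P:\ L_i(z)=\val^*_i\ \ \forall i\in\A'\}$ is a polytope, and if the $\val^*_i$ were known to be rationals of polynomial bit-complexity, the standard facet-to-vertex complexity bound from \cite{gls} would finish the argument. Likewise, uncrossing the tight matroid constraints into a chain/laminar family is exactly what the paper does in Claim~\ref{claim:uncrossing}.

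The gap is in the step that is supposed to control $\val^*_i$. After substituting $\beta_i=1/L_i(z^*)$ and discarding the primal variables, every constraint you retain --- the stationarity equalities on $\supp(z^*)$, the inequalities off the support, and $p,\mu\ge 0$ --- is positively homogeneous in $(\beta,p,\mu)$: the zero vector is feasible and the feasible set is a cone, invariant under scaling $(\beta,p,\mu)\mapsto t(\beta,p,\mu)$. Such a system cannot determine $\beta_i$, and ``Cramer's rule on a basic subsystem'' returns $0$ (or an arbitrary point on a ray), not $1/\val^*_i$. The datum that fixes the scale is precisely the coupling $\beta_i\, L_i(z^*)=1$, equivalently the budget identity $\sum_{j,k}z^*_{ijk}\bigl(p_j+\sum_{S\ni k}\mu_{i,S}\bigr)=w_i$; but this is bilinear and uses the numerical values of $z^*$, whose bit-complexity is exactly what the lemma is trying to establish, so it cannot be supplied as data without circularity. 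The paper linearizes in the opposite direction to avoid this: it inverts the \emph{dual} side along the uncrossed chain, setting $Q^{(i)}_{jt}=1/\bigl(p_j+\sum_{b\ge t}\alpha_i(S^{(i)}_b)\bigr)$, so that stationarity becomes $w_i\co_{ijk}Q^{(i)}_{jt}=\sum_{j',k'}\co_{ij'k'}z_{ij'k'}$, which is linear in $(z,Q)$ \emph{jointly}; together with primal feasibility and the complementary-slackness equations this yields a linear system whose every feasible point maps back to a genuine KKT pair, hence whose $z$-part is optimal. A basic feasible solution of that system is rational of polynomially bounded size, and freezing its $(q,Q)$-part gives a description of the entire optimal face $P^*$ by inequalities with polynomially bounded coefficients, which is what delivers the vertex bound. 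If you insist on the $\beta$-substitution, you must carry the primal variables and the coupling $\beta_i L_i(z)=1$ along, at which point the bilinearity you set out to remove reappears.
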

To prove the above lemma we use the KKT conditions for \eqref{prog:explicitReduced}.
Let $p_j$'s and $\alpha_i(S)$'s denote the Lagrange multipliers corresponding to the first and second sets of the constraints, respectively. 
It holds:
\begin{enumerate}[label= $(\roman*)$]
    \item\label{egKKT1} $\forall j \in \F : p_j \ge 0$.
    \item\label{egKKT2} $\forall i \in \A', \forall S\subseteq \T_i: \alpha_i(S)\ge 0$.
    \item\label{egKKT3} $\displaystyle \forall j \in \F : 
            p_j > 0 \implies \sum_{i \in \A', k\in \T_i} z_{ijk} = 1$.
    \item\label{egKKT4} $\displaystyle \forall i \in \A', \forall S\subseteq \T_i:
                \alpha_i(S) > 0 \implies \sum_{j \in \F, k \in S} z_{ijk} = \f_i(S)$.
    \item\label{egKKT5} $\displaystyle \forall i\in \A', \forall j \in \F, \forall k\in \T_i: 
                \frac{\co_{ijk}}{p_j + \sum_{S : k \in S} \alpha_i(S)} \le \frac{\sum_{j \in \F, k'\in \T_i} \co_{ijk'} z_{ijk'}}{w_i}$. 
    \item\label{egKKT6} $\displaystyle \forall i\in \A', \forall j \in \F: z_{ijk} > 0 \implies 
            \frac{\co_{ijk}}{p_j + \sum_{S : k \in S} \alpha_i(S)} 
            = \frac{\sum_{j \in \F, k\in \T_i} \co_{ijk'} z_{ijk'}}{w_i}$.
\end{enumerate}
In \ref{egKKT5} and \ref{egKKT6}, we have divided the conditions by  $p_j + \sum_{S : k \in S} \alpha_i(S)$ and multiplied by $\sum_{j \in \F, k'\in \T_i} \co_{ijk'} z_{ijk'}$. By the feasibility assumption, both these must be positive.

We say that $(p,\alpha)$ are \emph{optimal Lagrange multipliers} if they satisfy \ref{egKKT1}--\ref{egKKT6} together with any optimal solution $z$ to \eqref{prog:explicitReduced}.
\begin{claim}\label{claim:uncrossing}
There exists an optimal solution $z$ with optimal Lagrange multipliers $(p,\alpha)$ with the following property: 
for every agent $i\in \A'$, the support of the vector $\alpha_i$ is a chain of sets 
$S^{(i)}_1\subset \dots \subset S^{(i)}_{h_i} \subseteq \T_i$ for some $h_i \in \N$. 
\end{claim}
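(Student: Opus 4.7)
The natural approach is a standard uncrossing argument applied to the matroid-rank dual multipliers $\alpha_i$. Starting from any optimal triple $(z^{*}, p^{*}, \alpha^{*})$ satisfying \ref{egKKT1}--\ref{egKKT6}, suppose that for some agent $i \in \A'$ there exist two incomparable subsets $S, T \subseteq \T_i$ with $\alpha^{*}_i(S), \alpha^{*}_i(T) > 0$. Set $\varepsilon := \min\{\alpha^{*}_i(S), \alpha^{*}_i(T)\}$ and define $\alpha'_i$ by shifting mass of size $\varepsilon$ from $\{S, T\}$ to $\{S \cap T, S \cup T\}$, leaving every other entry of $\alpha^{*}_i$ (and every $\alpha^{*}_{i'}$ for $i' \neq i$) unchanged. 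The plan is to show that $(z^{*}, p^{*}, \alpha')$ remains an optimal primal--dual triple, and that iterating this modification yields a chain support for every agent.

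Preservation of the KKT conditions for $\alpha'$ rests on two observations. First, the identity $\chi_S + \chi_T = \chi_{S \cap T} + \chi_{S \cup T}$ implies that for every $k \in \T_i$ the column sum $\sum_{U \ni k} \alpha'_i(U)$ coincides with $\sum_{U \ni k} \alpha^{*}_i(U)$, so \ref{egKKT5} and \ref{egKKT6}, which depend on $\alpha_i$ only through these column sums, remain valid with the same $z^{*}$ and $p^{*}$. Second, for the new complementary slackness \ref{egKKT4} at $S \cap T$ and $S \cup T$, submodularity of $r_i$ together with tightness at $S$ and $T$ gives
\[
r_i(S \cap T) + r_i(S \cup T) \le r_i(S) + r_i(T) = \sum_{j,\, k \in S} z^{*}_{ijk} + \sum_{j,\, k \in T} z^{*}_{ijk} = \sum_{j,\, k \in S \cap T} z^{*}_{ijk} + \sum_{j,\, k \in S \cup T} z^{*}_{ijk},
\]
where the last equality again uses $\chi_S + \chi_T = \chi_{S \cap T} + \chi_{S \cup T}$. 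Since each summand on the right-hand side is bounded by the corresponding matroid rank, equality holds throughout, so both $S \cap T$ and $S \cup T$ are tight under $z^{*}$. Thus $(z^{*}, p^{*}, \alpha')$ is optimal and, by the choice of $\varepsilon$, at least one of $\alpha'_i(S), \alpha'_i(T)$ vanishes.

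For termination the plan is to maximize the potential $\Phi(\alpha) := \sum_{i \in \A'} \sum_{S \subseteq \T_i} \alpha_i(S) \cdot |S|^{2}$ over the set of optimal dual multipliers. The identity
\[
|S \cap T|^{2} + |S \cup T|^{2} - |S|^{2} - |T|^{2} = 2\bigl(|S| - |S \cap T|\bigr)\bigl(|T| - |S \cap T|\bigr)
\]
is strictly positive whenever $S, T$ are incomparable, so each uncrossing step strictly increases $\Phi$. A $\Phi$-maximizer therefore cannot admit any incomparable pair in the support of any $\alpha^{*}_i$, which is precisely the chain property claimed.

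The main obstacle I anticipate is making the ``maximize $\Phi$'' step rigorous, since the values $\alpha_i(S)$ attached to sets containing matroid elements unsaturated by $z^{*}$ are not a priori bounded above by the KKT conditions (only bounded below via \ref{egKKT5}). I expect to handle this by restricting attention to multipliers $\alpha_i$ supported on the finite family of sets tight at $z^{*}$, which by the above submodular inequality is closed under $\cap$ and $\cup$; on this lattice, equality in \ref{egKKT6} at saturated coordinates constrains $\alpha_i$ to a bounded polytope, and a $\Phi$-maximizer on this polytope exists and has chain support. Alternatively, one can dispense with compactness entirely and give a direct finite-iteration argument, observing that each uncrossing step can be chosen so as to either reduce $|\operatorname{supp}(\alpha_i)|$ by at least one, or to strictly decrease a lexicographic potential on the multiset of support sizes.
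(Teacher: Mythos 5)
Your proposal is correct and is essentially the paper's own proof: the same uncrossing step shifting $\varepsilon$ from an incomparable pair to its union and intersection, the same verification that the column sums $\sum_{S\ni k}\alpha_i(S)$ are preserved (so \ref{egKKT5}--\ref{egKKT6} survive) and that submodularity plus tightness of $X$ and $Y$ forces tightness of $X\cap Y$ and $X\cup Y$ (so \ref{egKKT4} survives), and the same potential $\sum_{i}\sum_{S}|S|^2\alpha_i(S)$ maximized over the optimal multipliers. The compactness issue you flag is handled in the paper simply by observing that, for a fixed optimal $z$, the set of multipliers satisfying \ref{egKKT1}--\ref{egKKT6} is a polytope (conditions \ref{egKKT3}, \ref{egKKT4} and \ref{egKKT6} give boundedness), so a $\varphi$-maximizer exists; your fallback finite-iteration argument would also suffice.
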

\begin{claimproof}
We use a standard uncrossing argument.
Let $z$ be an optimal solution to \eqref{prog:explicitReduced}. 
Let us consider the set of optimal Lagrange multipliers $(p,\alpha)$. 
For a fixed $z$, the set of vectors $(p,\alpha)$ satisfying the constraints  \ref{egKKT1}--\ref{egKKT6} forms a polytope, since each constraint can be equivalently written as a linear constraint, and \ref{egKKT3}, \ref{egKKT4}, and \ref{egKKT6}
imply boundedness.
Thus, there exists a solution $(p,\alpha)$ that maximizes the objective 
\[
\varphi(p,\alpha):=\sum_{i\in \A'}\sum_{S\subseteq \T_i}|S|^2\alpha_i(S)\, .
\]
We claim that such a solution satisfies the conditions. 
This follows by showing that for each $i\in \A'$, if $\alpha_i(X), \alpha_i(Y)>0$ then either $X\subseteq Y$ or $Y\subseteq X$. 

For a contradiction, assume $X\setminus Y, Y\setminus X\neq\emptyset$, and let $\varepsilon:=\min\{\alpha_i(X),\alpha_i(Y)\}>0$. 
Let us define $\alpha'$ as follows:
\begin{itemize}
\item  $\alpha'_i(X\cup Y)=\alpha_i(X\cup Y)+\varepsilon$;
\item  $\alpha'(X) = \alpha(X) - \varepsilon$ and $\alpha'(Y) = \alpha(Y)-\varepsilon$;
\item if $X\cap Y\neq\emptyset$, then $\alpha'_i(X\cap Y)=\alpha_i(X\cap Y)+\varepsilon$;
\item if $S\notin \{X,Y,X\cup Y, X\cap Y\}$ then $\alpha'_i(S)=\alpha_i(S)$; and
\item if $j\neq i$ then $\alpha'_j(S)=\alpha_j(S)$ for all $S$.
\end{itemize}
We claim that $(p,\alpha')$ are also optimal Lagrange multipliers. 
This gives a contradiction, since
$\varphi(p,\alpha')>\varphi(p,\alpha)$. 
Constraints \ref{egKKT1}--\ref{egKKT3} are immediate. 
Constraints \ref{egKKT5} and \ref{egKKT6} follow since $\sum_{S : k \in S} \alpha'_i(S)=\sum_{S : k \in S} \alpha_i(S)$
holds for all $i\in \A'$ and all $k\in \T_i$. 
Finally,  \ref{egKKT4} follows by observing that for any $i\in \A'$ and any $j\in \F$,
\[
\begin{aligned}
&\sum_{j \in \F, k \in X} z_{ijk} +\sum_{j \in \F, k \in Y} z_{ijk} = \f_i(X)+\f_i(Y)
\ge \f_i(X\cup Y)+\f_i(X\cap Y)\\
&\ge \sum_{j \in \F, k \in X\cap Y} z_{ijk} +\sum_{j \in \F, k \in X\cup Y} z_{ijk}
=\sum_{j \in \F, k \in X} z_{ijk} +\sum_{j \in \F, k \in Y} z_{ijk}\, , 
\end{aligned}
\]
using the submodularity of $\f_i$. 
We must have equality throughout, implying \ref{egKKT4} for $S=X\cup Y$ and $S=X\cap Y$.
\end{claimproof}

\begin{proof}[Proof of Lemma~\ref{lemma:rationalOptimum}]
Let $z$ be any optimal solution to \eqref{prog:explicitReduced} and let $(p, \alpha)$ be any optimal Lagrange multipliers as in Claim~\ref{claim:uncrossing}, with $\alpha_i$ supported on the chain $S^{(i)}_1\subset S^{(i)}_2\subset\ldots\subset S^{(i)}_{h_i}$.

Let $\F'\subseteq \F$ be the subset of goods with $p_j>0$, and let $\E'_i\subseteq \E_i$ be the set of edges $(j,k)$ for which  ${\co_{ijk}}/{(p_j + \sum_{S : k \in S} \alpha_i(S))}$ is maximized. Clearly, $z_{ijk}>0$ only if $(j,k)\in \E'_i$.

We perform the following variable substitution:
\begin{equation}\label{eq:substitute}
q_j := \frac{{1}}{p_j}\quad\forall j\in \F,\qquad \mbox{and}\qquad  
Q^{(i)}_{j\tcnt} := \frac{1}{p_j + \sum_{b = \tcnt}^{h_i} \alpha_i\left(S^{(i)}_b\right)}\quad\forall i\in \A', \ \forall\tcnt \in [h_i]\, .
\end{equation}

We show that, provided the supports $\F'$, $\E'_i$, we can define a linear program in the variables
 $q_j$'s, $Q^{(i)}_{j\tcnt}$'s, and $z_{ijk}$ as follows. We include all feasibility constraints on $z_{ijk}$ from \eqref{prog:explicitReduced} and the following additional constraints:
\begin{equation*}
  \begin{aligned}
  \sum_{i \in \A', k\in \T_i} z_{ijk} &= 1 &&\quad \forall j\in \F'\\
  \sum_{j \in \F, k \in S} z_{ijk} & = \f_i(S)&&\quad \forall i\in A',\forall S\subseteq \T_i\\
  w_i\co_{ijk}Q^{(i)}_{j\tcnt}&\le \sum_{j \in \F, k'\in \T_i} \co_{ijk'} z_{ijk'}
  &&\quad \forall i\in \A', \forall (j,k)\in\E_i,
\mbox{ and } \tcnt \mbox{ s.t. } k\in S^{(i)}_{\tcnt}\setminus S^{(i)}_{\tcnt-1}\\
 w_i\co_{ijk}Q^{(i)}_{j\tcnt}&= \sum_{j \in \F, k'\in \T_i} \co_{ijk'} z_{ijk'}
  &&\quad \forall i\in \A', \forall (j,k)\in\E_i',
  \mbox{ and } \tcnt \mbox{ s.t. } k\in S^{(i)}_{\tcnt}\setminus S^{(i)}_{\tcnt-1} \\
  Q^{(i)}_{j\tcnt}&\le Q^{(i)}_{j(\tcnt+1)}&&\quad \forall i\in \A', j\in \F', \tcnt\in [h_i-1]\\
  q_j&=0&&\quad \forall j\in \F\setminus \F'\\
 z_{ijk}&=0&&\quad \forall i\in \A', (j,k)\in\E_i\setminus \E'_i\\ 
Q,q&\ge 0
    \end{aligned}
    \end{equation*}
    Let $P\in \R^{(\sum_{i\in \A'} |\E_i|)\times \F'\times (\sum_{j\in F'} h_i)}$ be the set of feasible solutions to this LP.
According to \ref{egKKT1}--\ref{egKKT6}, $(z,q,Q)\in P$, where $(q,Q)$ is obtained from $(p,\alpha)$ as in \eqref{eq:substitute}.
Conversely, if $(z',q',Q')\in P$, then  we can map $(q',Q')$ to a nonnegative $(p',\alpha')$ such that \eqref{eq:substitute} holds and $(z',p',\alpha')$ satisfy  \ref{egKKT1}--\ref{egKKT6}.

Since all coefficients in the system are rational numbers from the input, and the feasible region $P$ is bounded, it follows that $P$ is a polytope where all basic feasible solutions are rational vectors with encoding size polynomially bounded in the input. 

Let us fix $(q',Q')$ in a basic feasible solution, and let $P''=\{z'': (z'',q',Q')\in P\}$. Then, $z''\in P''$ if and only if $z''$ is optimal with respect to \eqref{prog:explicitReduced}.
Further, $P''$ is a polytope  defined by linear constraints with polynomially bounded coefficients. Thus, the claim follows.
\end{proof}

\paragraph{The Ellipsoid Method for Rational Polyhedra}
We quickly recall some relevant concepts for the Ellipsoid Method from the book \cite{gls} by Gr\"otschel, Lov\'asz, and Schrijver. A \emph{strong separation oracle} for the convex set $K\subseteq \R^n$ takes as input a vector $x\in \R^n$, and either returns the answer $x\in K$, or returns a vector $a\in\R^n$ such that $\langle a, x\rangle>\max\{\langle a,z\rangle:\, z\in K\}$.

Let us recall the definitions of facet and
vertex complexity. We only include the definitions for polytopes, instead of general polyhedra.

\begin{definition}[{\cite[Definition (6.2.2)]{gls}}]
Let $P\subseteq \R^n$ be a polytope.
\begin{enumerate}
\item We say that $P$ has
\emph{facet-complexity at most $\varphi$}, if $P$ can be defined by a
system of linear inequalities with rational coefficients such that each inequality has encoding
length at most $\varphi$.  If $P=\R^n$, we require $\varphi\ge n+1$.
The triple $(P; n,\varphi)$ is
called a \emph{well-described polytope}.
\item We say that $P$  has \emph{vertex-complexity at most $\nu$}, if
$P$ is the convex hull of a finite set of rational vectors, all having encoding length at most $\nu$.
  $P=\emptyset$, then we require $\nu\ge n$.
\end{enumerate}
\end{definition}
\begin{lemma}[{\cite[Lemma (6.2.4)]{gls}}]\label{lem:vertex-facet} If $P$ has vertex-complexity at most $\nu$, then $P$ has facet-complexity at most $3n^2\nu$.
\end{lemma}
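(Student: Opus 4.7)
The plan is to compute facet-defining inequalities directly from the vertex data by linear algebra and bound the bit-complexity of the resulting coefficients using Hadamard's inequality. Suppose $P = \conv(v_1,\ldots,v_k)$ with each $v_i \in \Q^n$ of encoding length at most $\nu$. First I would clear denominators: there is a common denominator $D$ of encoding length at most $n\nu$ such that each $Dv_i \in \Z^n$ has integer entries of absolute value bounded by $2^{\nu}$. This reduces everything to integer matrices whose entries have encoding length $O(\nu)$.

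Next, consider any facet $F$ of $P$. In the full-dimensional case, $F$ is the intersection of $P$ with a hyperplane passing through $n$ affinely independent vertices $v_{i_1},\ldots,v_{i_n}$. The facet-defining inequality $a^\top x \le b$ is characterized by $a^\top v_{i_j} = b$ for every $j$ together with the sign requirement $a^\top v_\ell \le b$ for the remaining vertices. Solving the defining linear system by Cramer's rule expresses each coefficient $a_\ell$ and $b$ as an $n\times n$ determinant of a matrix whose entries are the coordinates of the chosen vertices (possibly augmented with a column of ones). By Hadamard's inequality, the absolute value of such a determinant is at most $n^{n/2}\cdot 2^{n\nu}$, so its encoding length is $O(n\nu)$. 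Summing over the $n+1$ coefficients of the single inequality, and including the bit length of the common denominator $D^n$ introduced when returning to rational form, yields encoding length at most $3n^2\nu$.

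The main obstacle is the case when $P$ is not full-dimensional. Then the description of $P$ requires implicit equality constraints coming from the affine hull of $P$, in addition to the facet-defining inequalities within that affine hull. I would handle this by first picking $d+1$ affinely independent vertices spanning the affine hull (with $d=\dim(P) \le n$); the equalities cutting out this affine hull can be read off as $(d+1)\times(d+1)$ determinants by exactly the argument above, giving the same $O(n^2\nu)$ bound. Within the affine hull, each facet of $P$ is then obtained by the same Cramer's rule / Hadamard argument applied to $d$ affinely independent vertices lying on the facet.

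The only subtlety is tightening the constant to the stated value $3n^2\nu$. This follows by careful accounting: each individual coefficient of a facet inequality is a ratio of two determinants whose numerator and denominator each have bit length at most $n\nu + \tfrac{n}{2}\log n$; after clearing the common denominator across the $n+1$ coefficients and adding the $O(1)$ overhead for encoding signs and separators, a standard calculation (as in \cite[Sec.~6.2]{gls}) shows that the total encoding length fits within $3n^2\nu$. No new ideas are needed beyond bookkeeping; the crux of the argument is the Cramer/Hadamard combination used above.
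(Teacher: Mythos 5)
The paper does not prove this statement at all --- it is quoted verbatim as Lemma (6.2.4) of Gr\"otschel--Lov\'asz--Schrijver --- and your argument is essentially the proof given in that source: a facet inequality is spanned by affinely independent vertices, its coefficients are expressed as determinants via Cramer's rule, and their encoding length is bounded by Hadamard-type determinant estimates, with affine-hull equalities handling the lower-dimensional case. Two bookkeeping points to tighten: a single common denominator $D$ clearing all (possibly exponentially many) vertices need not have encoding length $n\nu$ (you only need it per facet, i.e.\ for the $n$ chosen vertices), and after multiplying by $D$ the integer entries are bounded by roughly $2^{(n+1)\nu}$ rather than $2^{\nu}$, so each determinant has encoding length $O(n^2\nu)$ per coefficient only after the more careful accounting you defer to; likewise, in the non-full-dimensional case a facet spanned by $d<n$ vertices does not determine a hyperplane of $\R^n$, so one must explicitly extend it (e.g.\ by unit directions outside the affine hull) and check the coefficients stay small. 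Neither issue changes the conclusion, and the overall route is the same as the cited proof.
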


\begin{theorem}[{\cite[Theorems (6.4.9), (6.5.7)]{gls}}]\label{thm:well-described} For a well-described polyhedron $(P;n,\varphi)$ given by a strong separation oracle, there exists  oracle-polynomial time algorithm that either returns a vertex solution $x\in P$, or concludes that $P=\emptyset$. Given a linear objective function $\langle c,x\rangle$, if $P\neq \emptyset$ then there exists an oracle-polynomial time algorithm that finds an optimal vertex solution to $\max~\langle c,x\rangle$ s.t. $x\in P$.
\end{theorem}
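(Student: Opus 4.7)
The plan is to derive both parts of Theorem~\ref{thm:well-described} from the Central-Cut Ellipsoid Method combined with simultaneous Diophantine approximation for exact rounding and a dimension-reduction step for vertex extraction. First I would establish the converse of Lemma~\ref{lem:vertex-facet}: if $P$ has facet-complexity $\varphi$, then every vertex of $P$ has encoding length at most $\nu = \mathrm{poly}(n,\varphi)$, and every rational point of $P$ lying in the affine hull of a face has denominator at most $2^\nu$. This yields an outer bounding ball $P \subseteq B(0, 2^\nu \sqrt{n})$ to serve as the starting region for the ellipsoid.

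The main obstacle is that $P$ need not be full-dimensional, so the Ellipsoid Method cannot certify non-emptiness of $P$ directly. I would handle this by inflating $P$ to $P_\varepsilon := \{ x : \mathrm{dist}(x, P) \le \varepsilon \}$ with $\varepsilon = 2^{-\mathrm{poly}(n,\varphi)}$: the set $P_\varepsilon$ is empty exactly when $P$ is, and otherwise contains a ball of radius $\varepsilon/2$. A strong separation oracle for $P$ extends to one for $P_\varepsilon$ by relaxing each returned halfspace $\langle a,\cdot\rangle \le \beta$ to $\langle a,\cdot\rangle \le \beta + \varepsilon \|a\|$. Running the Ellipsoid Method in $B(0, 2^\nu \sqrt{n}+1)$ then terminates after $\mathrm{poly}(n,\varphi)$ oracle calls and arithmetic operations, either returning a point $\tilde x \in P_\varepsilon$ or certifying $P_\varepsilon = \emptyset$. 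From $\tilde x$ I would recover an exact rational point $x^* \in P$ via the Lenstra--Lenstra--Lov\'asz simultaneous Diophantine approximation algorithm: since every rational point of $P$ on each face has denominators bounded by $2^\nu$ and $\tilde x$ lies within $\varepsilon \ll 2^{-2\nu}$ of such a point, LLL recovers $x^*$ in polynomial time. To convert a general $x^* \in P$ into a vertex, I would iteratively reduce dimension: pick a direction $d$ not in the affine span of the current face, slide $x^*$ along $d$ and $-d$ using the oracle until a new facet is encountered, restrict to that facet, and recurse; after at most $n$ such shrinkings a vertex is reached.

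For the optimization statement, I would reduce to feasibility. The optimum value $\lambda^* = \max\{\langle c, x\rangle : x \in P\}$ is attained at a vertex and is therefore a rational number of polynomially bounded encoding. A binary search on the polytope $\{ x \in P : \langle c, x\rangle \ge \lambda \}$, invoking the feasibility routine above at each step with the augmented oracle, locates $\lambda^*$ in polynomially many iterations; then the vertex-extraction procedure applied within the face of optima yields an optimal vertex solution. The technical heart of the argument, and where I expect the most care to be needed, is the Diophantine rounding step: tuning $\varepsilon$ against $\nu$ so that LLL is guaranteed to recover a genuine rational point of $P$, rather than merely an approximation thereof, is what makes the overall running time polynomial in the facet-complexity $\varphi$ and independent of $1/\varepsilon$.
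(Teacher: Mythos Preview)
The paper does not prove Theorem~\ref{thm:well-described}; it is quoted verbatim from Gr\"otschel--Lov\'asz--Schrijver~\cite{gls} as a black box and used without proof in the argument for Lemma~\ref{lemma:optimalSolution}. There is therefore no ``paper's own proof'' to compare your proposal against.

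Your sketch is a reasonable outline of the GLS argument. One technical point to watch: the claim that a strong separation oracle for $P$ yields a strong separation oracle for $P_\varepsilon$ by relaxing each returned halfspace is not quite right---if the query point $x$ lies in $P_\varepsilon \setminus P$, the oracle for $P$ returns a separating hyperplane, and relaxing it may fail to separate $x$ from $P_\varepsilon$. What you actually obtain is a \emph{weak} separation oracle for $P_\varepsilon$, which is still enough for the ellipsoid method but requires the more delicate weak-to-strong machinery of GLS (their Chapter~6) rather than the clean strong-oracle version. The rest of your outline---Diophantine rounding via LLL, dimension reduction to a vertex, and binary search for optimization---matches the standard route.
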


An oracle-polynomial time algorithm means that the number of arithmetic operations and calls to the strong separation oracle is bounded as $\textrm{poly}(\varphi)$; note that $\varphi\ge n$.

\begin{proof}[Proof of Lemma~\ref{lemma:optimalSolution}]
Let $P$ be the set of feasible solutions and $P^*$ the set of optimal solutions to \eqref{prog:explicitReduced}. We note that $P\neq\emptyset$ since $z=0$ is a feasible solution. Further, $P^*\neq \emptyset$ since $P$ is bounded.
Lemma~\ref{lemma:rationalOptimum} asserts that this is a nonempty polytope with vertex-complexity $\mathrm{poly}(|\A|,|\G|,T,\log C, \log U)$; thus $(P^*,\sum_{i\in \A}|E_i|,\varphi)$ is a well-described polytope for some $\varphi\in \mathrm{poly}(|\A|,|\G|,T,\log C, \log U)$ by Lemma~\ref{lem:vertex-facet}.

 We now describe  the strong separation oracle to $P^*$.
For a vector  $z\in \R^{\times_{i\in \A} E_i}$, we first check whether $z\in P$. Checking the first set of $|\A|$ constraints is straightforward. The submodular constraints can be verified by solving  $|\A|$ submodular function minimization problems. We either conclude $z\in P$, or obtain a separating hyperplane for $z$ and $P$ that is also a separating hyperplane for $z$ and $P^*$.

If $z\in P$, the we compute the gradient $\nabla f(z)$, where $f(z)$ denotes the objective function. 
We then solve the linear optimization problem $\max \langle\nabla f(z),x\rangle$ s.t. $x\in P$. 
 $(P^*,\sum_{i\in \A}|E_i|,\sum_{i\in \A}|E_i|+\log T)$ is a well-described polytope since all coefficients are 0 and 1 and the left hand side values are at most $T$. Using the strong separation oracle for $P$ we just described, the second half of Theorem~\ref{thm:well-described} shows that we can find an optimal solution $x^*\in P$ in time  $\mathrm{poly}(|\A|,|\G|,T,\log C, \log U)$.

If $\max \langle\nabla f(z),x^*\rangle=\max \langle\nabla f(z),z\rangle$, i.e., if $z$ itself is an optimal solution, then we conclude that $z\in P^*$. Otherwise,  $\langle \nabla f(z),x\rangle>\langle \nabla f(z),z\rangle$ is a valid separating hyperplane.

Thus,  by the first half of Theorem~\ref{thm:well-described}, we can find an optimal solution $x\in P^*$ in time  $\mathrm{poly}(|\A|,|\G|,T,\log C, \log U)$.

This method requires the implementation of the ellipsoid method for linear optimization inside the separation oracle. We now show that this can be easily avoided by always using the hyperplane $\langle \nabla f(z),x\rangle>\langle\nabla f(z),z\rangle$, without solving the LP.
If  $z\in P\setminus P^*$, then this is always valid, but if $z\in P^*$, then this holds with equality instead of strict inequality.

Nevertheless, we can run the ellipsoid method using the gradients as separating directions (without solving the LP). This ultimately leads to concluding $P^*=\emptyset$, since the algorithm  returns a separating hyperplane for every $z\in \R^{\times_{i\in \A} E_i}$. At this point, we consider the feasible solution $z\in P$ with the largest objective value $f(z)$ visited by the algorithm, and conclude that this solution must have been optimal. This is true since if no optimal solutions would have been visited, then every separating hyperplane we used would be a valid strong separator for $P^*$, and thus, we could not have reached the false conclusion $P^*=\emptyset$.
\end{proof}

\begin{remark}
We note that a similar argument was used by Jain~\cite[Theorem 12]{Jain2007}, showing that whenever a convex set is given with a strong separation oracle and is guaranteed to contain a point of bit-complexity at most $\nu$, then a feasible solution can be found in polynomial time, using simultaneous Diophantine approximation. Our proof leverages the stronger property that the optimal solution set $P^*$ is a well-described polytope.
\end{remark}
\subsection{Sparse solutions to Eisenberg-Gale relaxation}
\label{subsection:sparseOptimalSolution}
In this section we prove Lemma~\ref{lem:sparseSolution}. Recall that the polytope $P^*$ is the set of optimal solutions to \eqref{prog:explicitReduced} as in Lemma~\ref{lemma:rationalOptimum}. In 
 Lemma~\ref{lemma:n+2m} and Corollary~\ref{corollary:sparseX},  we show that the solution of every vertex solution of $P^*$ is sparse.
In Lemma~\ref{lem:sparseSolution} we further sparsify such a solution by losing at most half of the value for each agent.
The arguments in both steps rely on bounding the number of non-zero variables in particular linear systems. 

Consider an optimal solution $z$ for~\eqref{prog:explicitReduced} that is also a basic solution to $P^*$. According to Theorem~\ref{thm:well-described}, we can require that the optimal solution found in Lemma~\ref{lemma:optimalSolution} is a basic solution.
We define $\val^*_i :=\sum_{k \in \T_i}c_{ijk} z_{ijk}$  as the optimum utility value attained by agent $i\in \A'$; by strict convexity of the objective, these values are the same for all optimal solutions.

\begin{lemma}
\label{lemma:n+2m}
Every optimal solution $z\in P^*$ satisfies
 $|\supp (z')| \le |\A'| + 2|\F^+(z')| - |R_1| - |R_2|$, where
        \begin{equation*}
        \begin{aligned}
            \F^+ (z) &= \left\{j \in \F: \sum_{i \in \A'} \sum_{k \in \T_i} z_{ijk} > 0\right\}, \\
            R_1 &= \{ j \in \F : \exists ! \, i \in \A' \text{ such that } 0 < \sum_{k \in V_i} z_{ijk} < 1 \}, \\
            R_2 &= \{ j \in \F : \exists ! \, i \in \A' \text{ such that } z_{ijk} = 1 \text{ for some } k\in \T_i \}\, . 
        \end{aligned}
        \end{equation*}
\end{lemma}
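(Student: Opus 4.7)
My plan is to count the rank of tight constraints at a vertex $z$ of $P^*$ and identify $|R_2|$ redundancies between budget and matroid constraints once they are restricted to the support. Since $z$ is a vertex of $P^*$, which sits in $\R^N$ with $N := \sum_i |\E_i|$, there are $N$ linearly independent tight constraints at $z$, of which $N - |\supp(z)|$ are non-negativity constraints for the zero-valued coordinates. Restricting the remaining tight constraints (the $|\A'|$ value equalities, the tight budgets, and the tight matroid constraints) to the support coordinates, their combined rank equals exactly $|\supp(z)|$. The value equalities $c_i^{\top} z_i = v_i^*$ contribute rank $|\A'|$, since each $v_i^* > 0$ forces a positive support variable to appear in every such equality and the equalities live on disjoint agents.

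For the budget count, items $j \notin \F^+$ have a trivially slack budget and $j \in R_1$ satisfies $\sum_{i,k} z_{ijk} = y_{i^* j} < 1$, so the number $b$ of tight budget constraints is at most $|\F^+| - |R_1|$. For the matroid count, an uncrossing argument mirroring Claim~\ref{claim:uncrossing} applied to the tight matroid family for each agent $i$ yields a chain $S_1^{(i)} \subsetneq \cdots \subsetneq S_{h_i}^{(i)}$ with strict rank increase. Setting $u_{ik} := \sum_j z_{ijk}$, telescoping along the chain gives
\begin{equation*}
h_i \;\le\; \sum_{\ell=1}^{h_i}\bigl(r_i(S_\ell^{(i)}) - r_i(S_{\ell-1}^{(i)})\bigr) \;=\; r_i(S_{h_i}^{(i)}) \;=\; \sum_{k \in S_{h_i}^{(i)}} u_{ik} \;\le\; \sum_j y_{ij}\,,
\end{equation*}
so the rank $m := \sum_i h_i$ of tight matroid constraints is at most $\sum_{j \in \F^+}\sum_i y_{ij} \le |\F^+|$.

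For the $-|R_2|$ refinement, pick any $j_0 \in R_2$ with unique support edge $(i^*, j_0, k^*)$ and $z_{i^* j_0 k^*} = 1$. The sandwich $1 = z_{i^* j_0 k^*} \le u_{i^* k^*} \le r_{i^*}(\{k^*\}) \le 1$ forces $u_{i^* k^*} = r_{i^*}(\{k^*\}) = 1$ and $z_{i^* j k^*} = 0$ for $j \ne j_0$. Hence the singleton matroid constraint $\sum_j z_{i^* j k^*} = r_{i^*}(\{k^*\})$ is tight, and when restricted to the support it becomes exactly $z_{i^* j_0 k^*} = 1$, identical to the restriction of the budget constraint for $j_0$. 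Since the $|R_2|$ such singleton matroid constraints are supported on pairwise disjoint coordinates, they are linearly independent and may be included in a rank-$m$ basis of the tight matroid constraints. Each coincident pair (budget for $j_0$, singleton for $k^*$) then collapses to a single vector in the restricted-to-support rank count, producing $|R_2|$ redundancies and yielding
\begin{equation*}
|\supp(z)| \;\le\; |\A'| + b + m - |R_2| \;\le\; |\A'| + 2|\F^+| - |R_1| - |R_2|\,.
\end{equation*}

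The main obstacle is handling the basis-selection step precisely: one must verify that the uncrossing argument of Claim~\ref{claim:uncrossing}, which was stated for optimal Lagrange multipliers, applies equally well to the tight matroid family at a vertex (the same submodular-uncrossing works verbatim), and that the $|R_2|$ singleton constraints can always be extended to a rank-$m$ matroid basis without inflating $m$. Both amount to standard manipulations of the distributive lattice of tight submodular sets, but need to be spelled out to make the final rank count airtight.
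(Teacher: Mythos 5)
Your proof is correct and takes essentially the same route as the paper: bound the support of a vertex of $P^*$ by counting (restrictions to the support of) linearly independent tight constraints, with at most $|\A'|$ value constraints, at most $|\F^+|-|R_1|$ tight budget constraints, per-agent chains of tight matroid constraints of total size at most $|\F^+|$ via the tightness/flow argument, and $|R_2|$ redundancies from the singleton matroid constraints. The only cosmetic difference is the $R_2$ step---the paper drops the budget row of $j\in R_2$ using its dependence on two consecutive chain sets, whereas you collapse it with the tight constraint $\f_{i^*}(\{k^*\})=1$ restricted to the support---and your flagged worry about extending the singletons to a rank-$m$ basis is not actually needed, since it suffices that each such budget row lies in the span of the tight matroid rows.
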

The set $\F^+$ is the set of allocated items in $\F$ by $z$;
$R_1$ is the set of items in $\F$ each of which is allocated to one agent only, but the item is not fully allocated; and
$R_2$ is the set of items in $\F$ each of which is fully allocated to agent via single edge of the graph $(\G, \T_i; E_i)$. 
Obviously, $R_1$ and $R_2$ are disjoint.
\begin{proof}[Proof of Lemma~\ref{lemma:n+2m}]
The following LP gives a description of $P^*$. We note that this is a different description from the extended system in the proof of Lemma~\ref{lemma:rationalOptimum}: here, we can make use of the optimal values $\val^*_i$ and thus do not require the dual variables. Note that the notion of vertex solutions is independent of the describing system.
\begin{equation*}
\begin{aligned}
    && \sum_{j \in \F, k\in \T_i}    \co_{ijk} z_{ijk} &\ge \hat \val^*_i       &&\quad \forall i \in \A'\\
    && \sum_{i \in \A', k \in \T_i} z_{ijk} &\le 1          &&\quad \forall j \in \F\\
    && \sum_{j \in \F, k \in \Se} z_{ijk}  &\le \f_i(\Se)       &&\quad \forall i \in \A', \forall \Se \subseteq \T_i \\
    && z &\ge 0 \,.                                     &&
\end{aligned}
\end{equation*}
In order to prove the bound on the support of a vertex (basic feasible) solution to $P^*$, 
we upper-bound the number of linearly independent \emph{tight} constraints.
Trivially, there are at most $|\A'|$ tight constraints of the first type.
By definition of sets $\F^+$ and $R_1$ there are at most $|\F^+| - |R_1|$ tight constraints of the second type.

Let us bound the maximal number of tight submodular constraints.
By Claim~\ref{claim:uncrossing}, for each agent $i \in \A'$, 
the maximal set of linearly independent tight submodular constraints forms a chain. 
Formally, for $i \in \A'$ there exist sets $S^{i}_1 \subset S^{i}_2 \subset \dots \subset S^{i}_{h_i} \subseteq \T_i$,
such that the set of constraints $\{\sum_{j \in \G, k \in S^{i}_{t}} z_{ijk}  \le \f_i(S^{i}_t) \}_{t=1}^{h_i}$
generates all the tight submodular constraints for agent $i$.
All together, there are at most $|\A'| + |\F^+| - |R_1| + \sum_{i \in \A'} h_i$ tight constraints.

Now, let us consider an element $j \in R_2$ and let $i$ be the agent such that $z_{ijk} = 1$ for some $k \in \T_i$.
Since $\f_i$ is rank function we have $z_{ijk} = 1 = \f_i(\{k\})$.
Let $S^{i}_{b}$ be the smallest set in the $i$-th chain containing $k$.
Since $\{k\}$ is also tight we can assume that $k = S^{i}_{b} \setminus S^{i}_{b-1}$. 
Therefore, the tight inequalities corresponding to  $S^{i}_b, S^{i}_{b-1}$ and $z_{ijk} \le 1 $ 
(or equivalently $\sum_{k \in \T_i} z_{ijk} \le 1$) are not linearly independent and 
we can drop the inequality corresponding to $z_{ijk} \le 1$
from the minimal set of linearly independent tight inequalities. 
In other words, we do not have to count the inequality corresponding to $j$, for $j \in R_2$ and we can replace the term $|\F_+|$ by $|\F_+| - |R_2|$.

Further, by flow conservation we have $\displaystyle |\F^+|\ge \sum_{i\in \A',j\in \F, k\in \T_i} z_{ijk} \ge \sum_{i \in \A'} \f_i(S^{i}_{h_i}) \ge \sum_{i\in \A'} h_i$.
Thus, 
\[
 |\supp(z)| \le |\A'| + 2|\F^+| - |R_1| - |R_2| \,. \qedhere
\]
\end{proof}

\begin{corollary}
\label{corollary:sparseX}
Consider an optimal vertex  solution $y$ of~\eqref{prog:EGprod} for Rado valuations.
Then,
    $|\supp (y) | \le |\A'| + 2|\F^+(y)| - |\F_1(y)|$, where
        \begin{equation*}
        \begin{aligned}
            \F^+ (y) &= \{j \in \F: \sum_{i \in \A'}y_{ij} > 0\}, \\
            \F_1 (y) &= \{j \in \F: \exists ! i \in \A' \text{ such that } y_{ij} > 0\}.
        \end{aligned}
        \end{equation*}
\end{corollary}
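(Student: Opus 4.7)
The approach is to lift $y$ to a basic optimal solution $z$ of the extended Rado program~\eqref{prog:explicitReduced} via the relation $y_{ij}=\sum_{k\in \T_i}z_{ijk}$, apply Lemma~\ref{lemma:n+2m} to $z$, and then carefully account for what is lost when projecting the support of $z$ back onto that of $y$.

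First, I would argue that any vertex $y$ of the polytope of optimal solutions to~\eqref{prog:EGprod} is the image of some vertex $z$ of the optimal polytope $P^*$ of~\eqref{prog:explicitReduced} under the linear projection $z\mapsto(\sum_{k}z_{ijk})_{ij}$, a standard fact about projections of polytopes. Lemma~\ref{lemma:n+2m} applied to $z$ then gives
\[
|\supp(z)|\;\le\; |\A'|+2|\F^+(z)|-|R_1|-|R_2|,
\]
and $\F^+(z)=\F^+(y)$ is immediate since $\sum_i y_{ij}=\sum_{i,k}z_{ijk}$.

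The main combinatorial step is to relate $\F_1(y)$ to $R_1\cup R_2$. Let $i(j)$ denote the unique agent receiving $j\in \F_1(y)$, and split
\[
\F_1^{\mathrm{frac}}=\{j\in \F_1:y_{i(j)j}<1\},\qquad \F_1^{\mathrm{full}}=\{j\in \F_1:y_{i(j)j}=1\}.
\]
The equality $R_1=\F_1^{\mathrm{frac}}$ and the inclusion $R_2\subseteq \F_1^{\mathrm{full}}$ should follow straight from the definitions, since the capacity constraint $\sum_i y_{ij}\le 1$ rules out a second agent with positive allocation on such $j$. Together this partitions $\F_1$ into three disjoint pieces $R_1$, $R_2$ and $\F_1^{\mathrm{full}}\setminus R_2$.

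The delicate set is $\F_1^{\mathrm{full}}\setminus R_2$, consisting of $j$ for which $y_{i(j)j}=1$ is realised by at least two positive entries $z_{i(j)jk}$. Each such $j$ contributes a single pair $(i(j),j)$ to $\supp(y)$ but at least two triples to $\supp(z)$, and since the responsible pairs $(i(j),j)$ differ across different $j$, the extra triples are disjoint. This yields the sparsity drop $|\supp(y)|\le|\supp(z)|-|\F_1^{\mathrm{full}}\setminus R_2|$. Combining with the bound on $|\supp(z)|$ from Lemma~\ref{lemma:n+2m} and using $|R_1|+|R_2|+|\F_1^{\mathrm{full}}\setminus R_2|=|\F_1|$ gives the claimed $|\supp(y)|\le |\A'|+2|\F^+(y)|-|\F_1(y)|$. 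The main obstacle is precisely this final accounting: one must notice that $R_1\cup R_2$ can be strictly smaller than $\F_1$, and make up the deficit by tracking the multi-edge integral allocations in $z$.
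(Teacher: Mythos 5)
Your proposal is correct and follows essentially the same route as the paper: lift the vertex $y$ to a vertex $z$ of $P^*$ via $y_{ij}=\sum_{k}z_{ijk}$, apply Lemma~\ref{lemma:n+2m}, note $\F^+(y)=\F^+(z)$, and recover the missing $|\F_1|-|R_1|-|R_2|$ by observing that each item of $\F_1$ fully allocated to a single agent through at least two edges loses a unit of support in the projection. Your slightly finer bookkeeping (identifying $R_1=\F_1^{\mathrm{frac}}$ exactly and spelling out the vertex-lifting as a projection fact) is just a more explicit version of the paper's argument.
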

\begin{proof}
The optimal vertex solution $y$ can be written as $y_{ij}=\sum_{(i,k)\in \E_i}z_{ijk}$ for a vertex solution $z$ of~$P^*$. We have  
$\left|\supp\left(z\right)\right| \le |\A'| + 2|\F^+| - |R_1| - |R_2|$.
The first condition holds by definition of $\y$.
By construction we also have $\F^+ (y) =\F^+(z) =: \F^+$.
Moreover, $R_1, R_2 \subseteq \F_1$. 

By definition of $\F_1$, $R_1$ and $R_2$; 
we have $j \in \F_1\setminus (R_1 \cup R_2)$ if and only if $j$ is allocated fully to a unique agent $i$ 
and there exist different $k_1, k_2 \in \T_i$  with $z_{ijk_1}>0$ and $z_{ijk_2}>0$.
Both variables $z_{i j k_1}$ and $z_{i j k_2}$ contribute that $\y_{ij} > 0$ for the same $i,j$. 
Thus,
\begin{equation*}
|\supp(y)| \le |\A'| + 2|\F^+| - |R_1| - |R_2| - |\F_1\setminus (R_1 \cup R_2)| = |\A'| + 2|\F^+| - |\F_1|.
\end{equation*}
\end{proof}

\paragraph{Further sparsification} We showed that any basic optimal solution to~\eqref{prog:EGprod} under Rado valuations has support of size $|\A'| + 2|\F_+| - |\F_1|$.
Next, we show that any such sparse solution can be further sparsified by losing a fraction of valuation of each agent.
The main observation is that given a feasible allocation for a Rado valuation function, all ``sub-allocations'' behave in a ``locally subadditive'' way, as explained next.

Let $y'$ be a feasible allocation and $z'$ its corresponding representation in~\eqref{prog:explicitReduced}. Our argument will scale down $\y_{ij} = \pr_{ij} \y'_{ij}$ for some $\pr_{ij}\in[0,1]$. 
We have $\val_i(\y'_i) = \sum_{j\in \F, k \in \T_i} \co_{ijk} z'_{ijk}$.
Therefore, we can write $\val_i (\y'_i) = \sum_{j \in \F} u(i,j)$ where 
$u(i,j) = \sum_{k \in \T_i} \co_{ijk} z'_{ijk}$ is the value agent $i$ gets from good $j$. 
Hence, we can represent $\y_{ij} = \pr_{ij} \y'_{ij}$ as
$\y_{ij} = \pr_{ij} \sum_{k \in \T_i} z'_{ijk}$.
Assuming $\pr_{ij} \in [0,1]$ we have
$$\val_i (\y_i) \ge \sum_{j\in \F, k \in \T_i} \pr_{ij}\cdot  \co_{ijk} z'_{ijk}
 = \sum_{j \in \F} \pr_{ij} \cdot u(i,j) \, ,$$ 
where we use the fact that whenever 
$z'$ is feasible for~\eqref{prog:explicitReduced} then so is the allocation given by $\pr_{ij} z'_{ijk}$ for $j\in \F, k\in \T_i$.
In particular, this justifies the notation $\y_{ij} = \pr_{ij} \y'_{ij}$ for $\pr_{ij}\in[0,1]$ and it holds that $\val_i(\y_i) \ge \sum_{j\in \F} \pr_{ij} u(i,j)$.
Such a property is used to prove the following lemma.

\sparseSolution*

Given a $\hat y$, 
we can transform it to a vector $\y'$ with $|\supp (\y') | \le |\A'| + 2|\F^+(\y')| - |\F_1(\y')|$ by Corollary~\ref{corollary:sparseX}.
Then, the idea is to exhibit $\pr$ such that the vector $y$ defined as $y_{ij} = \pr_{ij} \y'_{ij}$ satisfies the lemma. 
Such $\pr$ needs to preserve at least half of the value for each agent and should set at least $|\F^+| - |\F_1| - |\A'| $ values of $\y'_{ij}$ to $0$.
We can find such a $\pr$ as a basic feasible solution of a system of linear (in)equalities.  

\begin{proof}
Let $\y'$ be a solution of~\eqref{prog:EGprod} with $|\supp(\y')| \le |\A'| + 2|\F^+(\y')| - |\F_1(\y')|$, given by Corollary~\ref{corollary:sparseX}.
Let $D = \{j \in \F^+(\y') : \exists i, i', i \neq i' \text{ such that } \y'_{ij} > 0 \text{ and } \y'_{i'j} >0 \}$,
i.e., 
$D$ is the set of items in $\F^+(\y')$ allocated to at least two different agents by $\y'$.
Hence, $|D| = |\F^+(\y')| - |\F_1(\y')|$.
For each $j\in D$, let $D(j)$ be a set containing two different agents $i, i'$ getting the item $j$ in $\y'$.
Such two agents are picked arbitrarily, but fixed throughout the proof for each $j$.
Let $\A'' = \cup_{j \in D} D(j)$.

We consider the following linear system with variables $\pr$. 
The value $\pr_{ij}$ represents the fraction of $\y'_{ij}$ agent $i$ keeps.
By the above, if agent obtained $u(i,j)$ value from $\y'_{ij}$ units of $j$
then agent receives $\pr_{ij} u(i,j)$ value from $\pr_{ij} \y'_{ij}$ units of good $j$ whenever $\pr_{ij} \in [0,1]$.

\begin{equation*}
\begin{aligned}
    && \sum_{j \in D}    \pr_{ij}  u(i,j) &\ge \frac{1}{2} \sum_{j \in D}     u(i,j)         && \forall i \in \A''\\
    && \pr_{ij} + \pr_{i'j} &= 1         && \forall j \in D , \, \{i, i'\} = D(j) \\
    && \pr &\ge 0 \,.                                     &&
\end{aligned}
\end{equation*}

Let us define $\y$: set $\y_{ij} = 0$ if $\pr_{ij}=0$ and $y_{ij}=\y'_{ij}$ for all other values.
Then for any feasible $\pr$ we have
\begin{itemize}
    \item The second set of constraints together with non-negativity of $\pr$ guarantees $q_{ij} \in [0,1]$ and hence 
          we can treat the values $\val_i(\y_i)\ge \pr_{ij}\val_i(\y'_i)$ as described before the statement of the lemma.
    \item By the first set of constraints and definition of $\y$, we have \\
    $$ \val_i(\y_i) \ge \sum_{j \in D} \pr_{ij} u(i,j) + \sum_{j \in \F \setminus D} u(i,j) \ge \frac{1}{2} \sum_{j \in D} u(i,j) + \frac{1}{2} \sum_{j \in \F \setminus D} u(i,j)\ge \frac{1}{2} \val_i(\y') \,.$$    
\end{itemize}

Therefore, any feasible solution of the linear system in $\pr$ gives an allocation that satisfies the first condition of the lemma. 
Let us show that the system is indeed feasible.
Namely, setting $\pr_{ij} = \frac{1}{2}$ for all $i\in \A''$ and all $j \in D$
we see that the above system is feasible. 
Since, the system is feasible we can also find a basic feasible solution $\pr$.
By counting the number of tight constraints we show that there are at least $|\F^+(\y')| - |\F_1 (\y')| - |\A''|$ zeros in $\pr$.
Thus, allocation $\y$ defined as $\y_{ij} = \pr_{ij} \y'_{ij}$ will have support smaller by at least $|\F^+(\y')| - |\F_1(\y')| - |\A''|$.

The maximum number of (tight) constraints is obviously $|\A''| + |D|$.
Therefore, $|\supp(\pr)| \le |\A''| + |D|$.
Crucially, by the second constraint we have $\F^+(\y) = \F^+(\y')$.
Hence, we only need to compare $|\supp(y')|$ and $|\supp(y)|$.
The allocation $\y'$ has exactly $2|D|$ positive variables when restricted on $D$ and $\A''$.
On the other hand, $\pr$ and therefore $\y$ take at most $|D| + |\A''|$ non-zero values on $D$ and $\A''$. 
It follows that $\y$ has at least $|D| - |\A''|$ less positive variables than $\y'$, 
i.e., $|\supp(\y)| \le |\supp(\y')| - (|D| - |\A''|)$.
By Corollary~\ref{corollary:sparseX} and since $|\A''| \le |\A'|$ we have $|\supp(\y)|  \le  2|\A'| + 2|\F^+| - |\F_1(\y')| - |D|$.
By recalling that $|D| = |\F^+| - |\F_1(\y')|$ we get $|\supp(\y)| \le 2|\A'| + |\F^+|$.
\end{proof}

\section{Phase V: Rounding the mixed solution}
\label{section:rounding}
We present the rounding for a sparse solution of~\eqref{prog:decomposedUtility}.
We recall that by \emph{sparse} we mean a feasible solution $(\y, \pi)$ of~\eqref{prog:decomposedUtility}
satisfying:
$$\supp(\y)\le 2|\A|+|\F^+| \text{ where } \F^+ = \left\{j \in \G\setminus \Hs: \sum_{i \in \A'} y_{ij} > 0\right\}\,.$$

Such a sparse solution is rounded by setting $2|\A|$ positive variables in $y$ to $0$, i.e., a reduction of $(\y, \pi)$ and allocating the items according to the support of the reduction. 
Formally, by a \emph{reduction} of $(\y,\pi)$ we mean a mixed integer solution $(\ry,\pi)$ obtained as follows 
(see Figure~\ref{figure:reduction}).
For each item $j$ a fraction of which is allocated by $y$ (i.e., $j\in  \F^+$), 
we pick an arbitrary agent $\kappa(j)$ getting the item (i.e., $\y_{\kappa(j)j}>0)$. 
We set $\ry_{\kappa(j)j}=\y_{\kappa(j)j}$, and set $\ry_{ij}=0$ if $i\neq\kappa(j)$. 
In words, the agent $\kappa(j)$ keeps getting the same amount in reduction and no other agent receives any part of item $j$.
By the bound on $\supp(\y)$, this amounts to setting $\le 2|\A|$ values $\y_{ij}$ to 0. 
Looking at the \emph{reduction} from the agents perspective:
let $\de_i$ be the number of items agent $i$ lost by reduction, i.e., the number of items $j$ for which $\y_{ij} > 0$ 
and $\ry_{ij} = 0$. 
Then, $\sum_{i\in \A'} \de_i \le 2|\A|$.

\begin{figure}[h]
    \centering
    \includegraphics[width=0.53\textwidth]{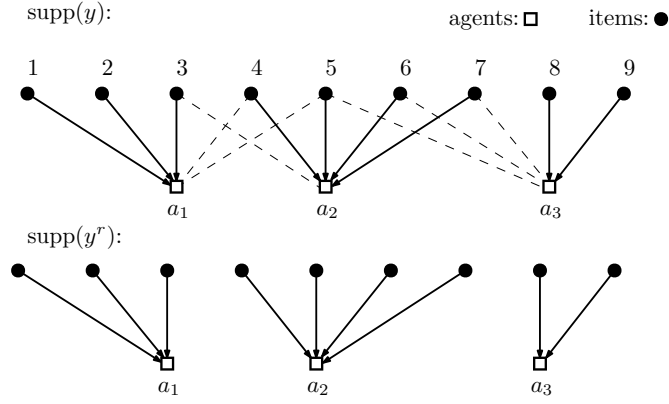}
    \caption{Support graph of an allocation $\y$. Support graph of reduction $\ry$
    obtained by $\kappa(1)=\kappa(2)=\kappa(3)= a_1$, $\kappa(4)=\kappa(5)=\kappa(6)=\kappa(7)=a_2$,  and $\kappa(8)=\kappa(9)=a_3$.
    It follows that $\de_{a_1} = 2$, $\de_{a_2}=1$ and $\de_{a_3} = 3$.}
    \label{figure:reduction}
\end{figure}

The reduction $(\ry, \pi)$ might have an arbitrarily worse objective value than $(\y, \pi)$ (e.g., if for agent $i$ we have $\val_{i\pi(i)} = 0$ and reduction sets $\ry_i = 0$),
but we show that we can find a different assignment $\rho$ such that $(\ry, \rho)$ is only worse by a constant factor than $(\y, \pi)$, 
no matter how the reduction is carried out.
The assignment $\rho$ is obtained as a combination of $\tau$ (the assignment obtained in Phase I) and $\pi$.

For a fixed reduction and the values $\de_i$, $\rho$ and its properties are given by the following lemma.

\begin{lemma}[Key rounding lemma]\label{lemma:rounding}
Let $\Hs$ be the set of most preferred items, 
$(y, \pi)$ a feasible solution to~\eqref{prog:decomposedUtility}, and let $d_i \in \N,  (d_i \ge 1)$ for each $i \in \A$.
In $O(|\A|)$ time, we can find an assignment $\rho$ such that 
$$ \overline \nsw(\y, \rho) \ge 
\frac{1}{2} \left( \prod_{i \in \A}(\de_i + 1)^{-w_i} \right)^{1/\sum_{i \in \A} w_i}
\overline \nsw(\y, \pi)$$
 and for each $i\in \A$ it holds either 
\begin{enumerate}[label = (\alph*)]
  \item\label{easyCase} $\val_{i \rho(i)} \ge \frac{1}{\de_i} \val_i(\y_i)$, or
  \item\label{harderCase} for each $j \in \F$ it holds 
        $\val_{ij} \le \frac{1}{\de_i + 1} ( \val_i(\y_i) + \val_{i \rho(i)})$.
\end{enumerate}
\end{lemma}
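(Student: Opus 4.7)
Since $|\Hs|=|\A|$ and both $\tau,\pi:\A\to\Hs$ are perfect matchings, the edge set $\tau\cup\pi$ decomposes the bipartite graph on $(\A,\Hs)$ into a disjoint union of alternating cycles, and within each cycle the only valid matchings are all-$\tau$ or all-$\pi$; the construction of $\rho$ will decide this per cycle. Two consequences of the optimality of $\tau$ as the max-weight matching for the weights $w_i\log\val_{ij}$ drive the analysis: (P1) \emph{single-swap:} $\val_{ij}\le\val_{i\tau(i)}$ for every $i\in\A$ and every $j\in\F$ (obtained by swapping $\tau(i)$ for the unmatched $j$); (P2) \emph{cycle:} $\sum_{i\in C}w_i\log\val_{i\tau(i)}\ge\sum_{i\in C}w_i\log\val_{i\pi(i)}$ for every alternating cycle $C$ of $\tau\cup\pi$ (obtained by swapping $\tau|_C$ for $\pi|_C$, which keeps the matching inside $\Hs$).

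By (P1), choosing $\rho(i)=\tau(i)$ always satisfies the lemma's condition (a) or (b): if $\val_{i\tau(i)}\ge\val_i(\y_i)/\de_i$ we get (a); otherwise $\val_{ij}\le\val_{i\tau(i)}\le\val_i(\y_i)/\de_i$ for every $j\in\F$, and rearranging yields $(\de_i+1)\val_{ij}\le\val_i(\y_i)+\val_{i\tau(i)}$, i.e.\ (b). Thus the algorithm keeps $\rho|_C=\pi|_C$ whenever every $i\in C$ still satisfies (a)/(b) under $\pi$---checkable in $O(1)$ per agent by using (P1) to upper-bound $\max_{j\in\F}\val_{ij}$ by $\val_{i\tau(i)}$---and otherwise switches to $\rho|_C=\tau|_C$. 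Enumerating cycles and performing the checks costs $O(|\A|)$ total; feasibility of (a)/(b) holds by construction.

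On unchanged cycles the objective ratio is $1$; on a switched cycle $C$ the contribution is $\prod_{i\in C}[(c_i+b_i)/(c_i+a_i)]^{w_i}$, where I write $a_i=\val_{i\pi(i)},\ b_i=\val_{i\tau(i)},\ c_i=\val_i(\y_i)$. Splitting $C$ into \emph{bad} agents $B:=\{i:a_i<c_i/\de_i\}$ and the rest, I obtain per-agent bounds: for $i\in B$, from $c_i+a_i\le c_i(\de_i+1)/\de_i$ and $c_i+b_i\ge c_i$, the ratio is $\ge\de_i/(\de_i+1)\ge 1/2$; for $i\in C\setminus B$, from $c_i+a_i\le(\de_i+1)a_i$ and $c_i+b_i\ge b_i$, the ratio is $\ge b_i/((\de_i+1)a_i)$.

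The \textbf{main obstacle} is that the second bound can be arbitrarily small when $b_i\ll a_i$, so the argument cannot conclude agent by agent. Here the fact that the cycle was \emph{forced} to switch is crucial: the triggering agent $i^{\star}\in B\cap C$ satisfies $b_{i^{\star}}>(c_{i^{\star}}+a_{i^{\star}})/(\de_{i^{\star}}+1)$, which combined with $a_{i^{\star}}<c_{i^{\star}}/\de_{i^{\star}}$ forces $b_{i^{\star}}>a_{i^{\star}}$ and creates a surplus in the cycle inequality (P2), $\prod_{i\in C}(b_i/a_i)^{w_i}\ge 1$. Distributing this surplus against the $(\de_i+1)^{-1}$ losses on the good agents, and exploiting the $(\de_i+1)$ slack per agent afforded by the target bound, yields the claimed $\tfrac{1}{2}\prod_i(\de_i+1)^{-w_i/\sum_i w_i}$ factor after careful cycle-wise bookkeeping.
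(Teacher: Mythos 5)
There is a genuine gap, and it is not just a matter of missing details: the construction you propose (decide \emph{per cycle} between all-$\tau$ and all-$\pi$, switching the whole cycle whenever one agent fails your check) does not satisfy the claimed bound. Your premise that ``the only valid matchings on a cycle are all-$\tau$ or all-$\pi$'' is an unnecessary restriction---the paper's $\rho$ deliberately mixes $\tau$-edges, $\pi$-edges and \emph{unassigned} agents ($\rho(i)=\emptyset$) within one cycle---and it is exactly this restriction that breaks the NSW bound. The ``surplus'' you invoke from (P2) is a statement about the raw values $b_i=\val_{i\tau(i)}$, $a_i=\val_{i\pi(i)}$, namely $\prod_i(b_i/a_i)^{w_i}\ge 1$, but the objective compares $c_i+b_i$ with $c_i+a_i$ where $c_i=\val_i(\y_i)$. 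An agent with $b_i\gg a_i$ but enormous $c_i$ realizes a ratio $(c_i+b_i)/(c_i+a_i)\approx 1$, so the (P2) surplus can be entirely absorbed and cannot pay for an undamped loss at an agent with $c_i=0$ and $a_i\gg b_i$; no bookkeeping can recover it, because the loss is real, not an artifact of the estimates.

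Concretely, take three agents, $w_i=1$, $\de_i=1$, items $\Hs=\{1,2,3\}$ with $\tau(i)=i$ and $\pi(1)=3,\pi(2)=1,\pi(3)=2$. Set $\val_{11}=1$, $\val_{13}=0.9$, $c_1=1$ with a single $\F$-item valued $0.96$ by agent $1$ (so agent $1$ genuinely fails both (a) and (b) under $\pi$ and triggers your switch); $\val_{22}=B$, $\val_{21}=1$, $c_2=B^2$ from many tiny-valued $\F$-items; $\val_{33}=0.96M/B$, $\val_{32}=M$, $c_3=0$, and agent $3$ values nothing in $\F$. One checks that $\tau$ is the maximum-weight one-item-per-agent assignment (the binding exchange is agent $1\to\F$-item, $2\to$ item $1$, $3\to$ item $3$, which forces $\val_{33}\ge 0.96M/B$, and this is tight), so all of (P1), (P2) and the trigger conditions hold. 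Switching the whole cycle to $\tau$ gives
\begin{equation*}
\prod_{i}\frac{c_i+\val_{i\tau(i)}}{c_i+\val_{i\pi(i)}}
=\frac{2}{1.9}\cdot\frac{B^2+B}{B^2+1}\cdot\frac{0.96M/B}{M}\approx\frac{1.01}{B}\,,
\end{equation*}
which is arbitrarily smaller than the required $\bigl(\tfrac12\bigr)^{3}\prod_i(\de_i+1)^{-1}=1/64$ as $B\to\infty$. The paper avoids this by trimming $\pi$ at \emph{every} agent $i$ with $\val_i(\y_i)>\de_i\val_{i\pi(i)}$ (here agents $1$ and $2$), splitting the cycle into alternating paths, reversing a path only when the explicit ratio $\varphi(C,k,r)$ certifies the loss is at most $\prod(\de+1)^{w}$, and otherwise leaving only the path's starting agent unassigned; condition (b) for that agent is then derived from the optimality of $\tau$ against a \emph{path-plus-$\F$-item} exchange (Claim~\ref{cl:upper-u} and the end of the proof of Lemma~\ref{lemma:preserve}), not from the full-cycle exchange (P2). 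In the example above, the paper's rule never takes item $2$ away from agent $3$, which is precisely what your all-or-nothing switch does.
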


Intuitively, the above lemma states that starting with a feasible allocation $y$, 
we can find an assignment $\rho$ that might have smaller $\overline \nsw(y, \rho)$ than $\overline \nsw(y, \pi)$ 
but has the following nice property for each agent $i\in \A$:

\begin{itemize}
\item In case~\ref{easyCase}, $i$ values the item $\rho(i)$ at least as she values a $1/d_i$ fraction of $y_i$ 
(and thus at least a $1/(d_i + 1)$ fraction of  $\val_i(\y_i) + \val_{i \rho(i)}$).
Hence, agent $i$ keeps a $1/(d_i+1)$-fraction of her value just by keeping $\rho(i)$ even if we can take away all items $i$ gets from $\F$.
\item In case~\ref{harderCase}, every item $\F$ has a small value for $i$ when compared to the combined value of $y_i$ and $\rho(i)$.
That is, $i$ values $\y_i$ and $\rho(i)$ significantly more than any $d_i$ items combined from $\F$.
Looking at it from the other side, even if we were to take away any $d_i$ in $\F$ items from $i$ she will still keep a fraction of the value. 
\end{itemize}
The essence of both cases is that the reduction will not hurt the agent too much. 
Before we present the proof of Lemma~\ref{lemma:rounding}, we show that this is enough to prove Lemma~\ref{lem:newMatchingMain}.

\newMatchingMain*

\begin{proof}[Proof of Lemma~\ref{lem:newMatchingMain}]
We first prove the lemma for the general case.
Let $y^r$ be any reduction of $y$ and let $d_i$ be the number items agent $i$ lost in reduction. 
By sparsity in Theorem~\ref{theorem:approx-sparse} we have $\sum_{i\in \A} \de_i \le 2|\A|$.

We use Lemma~\ref{lemma:rounding} to obtain $\rho$.
Note that Lemma~\ref{lemma:rounding} requires $d_i \ge 1$ so we define $\overline \de_i = \max\{1, d_i\}$. Thus, now we have the bound 
$\sum_{i\in \A} (\overline \de_i+1) \le 4|\A|$.
Let $\rho$ be the matching obtained by Lemma~\ref{lemma:rounding} given  $\overline \de_i$'s and $y$.
By Lemma~\ref{lem:productBound} we have that 
$$
\left( \prod_{i \in \A}(\overline \de_i + 1)^{-w_i} \right)^{1/\sum_{i \in \A} w_i} \ge \frac{1}{4 \g} \,.
$$
Thus, $\overline \nsw (\y, \rho) \ge \frac{1}{8 \g} \overline \nsw (\y, \pi)$.
By the same inequality, it suffices to show that 
$\overline \nsw (\ry, \rho) \ge  \left( \prod_{i \in A}(\overline \de_i + 1)^{-w_i} \right)^{\sum_{i \in \A} w_i} \overline \nsw (\y, \rho)$.
We do so, by showing that for each $i\in \A$ it holds $\val_i(\ry_i) + \val_{i \rho(i)} \ge \frac{1}{\overline \de_i + 1} ( \val_i(\y_i) + \val_{i \rho(i)})$. By Lemma~\ref{lemma:rounding} for agent $i$ we have either~\ref{easyCase} or~\ref{harderCase}.

\begin{enumerate}
\item[~~\ref{easyCase}] In this case we have $\overline \de_i \val_{i \rho(i)} \ge \val_i(\y_i)$.
Thus, $\val_{i \rho(i)} \ge \frac{1}{\overline \de_i + 1} ( \val_i(\y_i) + \val_{i \rho(i)})$.
Consequently, $\val_i(\ry_i) + \val_{i \rho(i)} \ge \frac{1}{\overline \de_i + 1} ( \val_i(\y_i) + \val_{i \rho(i)})$.
\item[~\ref{harderCase}] We have $\val_{ij} \le \frac{1}{\overline \de_i + 1} (\val_i(\y_i) + \val_{i \rho(i)})$
for all $j\in \F$.
Denote with $\De_i$ the set of $\de_i$ items $j$ for which $\y_{ij} > 0$ and $\ry_{ij} = 0$.
By subadditivity $\val_i({\De_i}) \le \sum_{j \in \De_i} \val_{ij}$.
Therefore, $\val_i({\De_i}) \le \frac{\de_i}{\overline \de_i + 1} (\val_i(\y_i) + \val_{i \rho(i)}) \le 
\frac{\overline \de_i}{\overline \de_i + 1} (\val_i(\y_i) + \val_{i \rho(i)})$.
Hence, $ \val_i(\y_i)- \val_i({\De_i}) + \val_{i \rho(i)} \ge \frac{1}{\overline \de_i + 1} (\val_i(\y_i)+ \val_{i \rho(i)} )$.
By subadditivity and monotonicity we have $\val_i(\ry_i) \ge \val_i(\y_i) - \val_i({\De_i})$, 
proving in this case as well that $\val_i(\ry_i) + \val_{i \rho(i)} \ge \frac{1}{\overline \de_i + 1} ( \val_i(\y_i) + \val_{i \rho(i)})$.
The lemma follows.
\end{enumerate}

For additive valuations, we recall Theorem~\ref{thm:linearEG}.
It gives us an optimal solution of~\eqref{prog:EGprod} that is supported on a forest in which each tree contains an agent. 
In particular, this implies a nice property for the reductions of $\y$.
Namely, we can choose a reduction $\ry$ in which $\de_i \le 1$ for each agent $i\in \A$.
Such a reduction is obtained by rooting each tree of the forest at an arbitrary agent and letting $\kappa(j)$ to be the parent agent of item $j$.
Informally, each agent loses at most one item. Therefore, $\overline \de_i = 1$ for all $i\in \A$.
The lemma follows by Lemma~\ref{lemma:rounding}.
\end{proof} 

The proof of Lemma~\ref{lemma:rounding} is presented in the following section. 

\subsection{Constructing the new matching}
\label{section:newRematching}
Recall~\ref{phase1} where we defined $\tau$ as an assignment maximizing 
$\left( \prod_{i\in \A} \val^{w_i}_{i{\tau(i)}} \right)$
and  $\Hs$ the set 
of items assigned  by $\tau$.
We number the agents $\A=\{1,2,\ldots, n\}$, and
renumber the items $\Hs=\{1,2,\ldots,n\}$
such that $\tau = \{(i,i):\, i\in A\}$.
In other words, $\tau$ assigns item $i\in \G$ to agent $i\in \A$.

\paragraph{Intuition} 
We are given a feasible solution $(\y, \pi)$ of~\eqref{prog:decomposedUtility} and $\tau$.
For the sake of illustration assume that by using the matching $\tau$ instead of $\pi$ we don't lose too much in the objective, i.e.,
$$ \overline \nsw(\y, \tau) \ge \left( \prod_{i \in \A}(\de_i + 1)^{-w_i} \right)^{1/\sum_{i \in \A} w_i} \overline  \nsw(y, \pi) \,.$$
In this case, each agent $i$ gets the item $i$ from $\Hs$. 
Let us show that under the above assumption we can set $\rho = \tau$, i.e., 
that for each agent $i$ either~\ref{easyCase} or~\ref{harderCase} holds.
\begin{claim}\label{claim:rho=sigma}
Let $i \in \A$. Then either $\val_{i i} > \frac{1}{\de_i} \val_i(\y_i)$  or for any $j\in \F$
it holds $ \val_{ij} \le \frac{1}{\de_i + 1} (\val_{ii}+\val_i(\y_i))$
\end{claim}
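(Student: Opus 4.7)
The plan is to exploit the optimality of $\tau$ as a weight-maximizing single-item assignment, combined with the fact that items in $\F = \G \setminus \Hs$ are unassigned by $\tau$. The key observation is that for any $j \in \F$, modifying $\tau$ by reassigning agent $i$ from item $i$ to item $j$ still yields a valid single-item assignment (no conflict arises because $j \notin \tau(\A) = \Hs$). By the optimality of $\tau$ with respect to $\sum_k w_k \log \val_{k\tau(k)}$, this swap cannot increase the objective, which immediately yields $w_i\log\val_{ii} \ge w_i\log\val_{ij}$, i.e.\ $\val_{ii} \ge \val_{ij}$ for every $j \in \F$.

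Given this inequality, the dichotomy in the claim follows by a short calculation. Suppose the first alternative fails, so $\val_{ii} \le \frac{1}{\de_i}\val_i(\y_i)$, equivalently $\val_i(\y_i) \ge \de_i \val_{ii}$. Then for any $j \in \F$,
\[
\val_{ii} + \val_i(\y_i) \ge \val_{ii} + \de_i \val_{ii} = (\de_i+1)\val_{ii} \ge (\de_i+1)\val_{ij},
\]
where the last inequality uses $\val_{ii} \ge \val_{ij}$ established above. Dividing by $\de_i+1$ gives precisely the second alternative, $\val_{ij} \le \frac{1}{\de_i+1}(\val_{ii}+\val_i(\y_i))$.

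The only subtlety to verify is that the ``swap'' argument is legitimate, which hinges on two facts already built into the setup: first, that $\tau$ is defined as a \emph{maximum-weight} assignment over all matchings $\A \to \G$ with edge weights $\omega_{ij}=w_i\log\val_{ij}$ (and has finite weight, so $\val_{ii}>0$ and the logarithm is well-defined), and second, that $\F \cap \Hs = \emptyset$ by the definition $\Hs := \tau(\A)$. I do not anticipate any real obstacle here; the argument is essentially a one-step local optimality check applied to the assignment $\tau$ from \ref{phase1}.
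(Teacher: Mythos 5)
Your proposal is correct and follows essentially the same route as the paper: the paper likewise invokes the optimality of $\tau$ (your swap argument, valid since $j\in\F$ is unmatched by $\tau$) to get $\val_{ii}\ge\val_{ij}$ for all $j\in\F$, and then performs the identical case split and arithmetic $(\de_i+1)\val_{ij}\le(\de_i+1)\val_{ii}\le\val_{ii}+\val_i(\y_i)$ when the first alternative fails. Your explicit justification of the single-swap step is simply a spelled-out version of what the paper states in one line.
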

\begin{claimproof}
By the optimality of $\tau$ it then holds $\val_{ii} \ge \val_{ij}$ for all $j\in \F$. 
If $\val_{ii} \ge \frac{1}{d_i} \val_i(y_i)$ then~\ref{easyCase} holds.
Otherwise, we have that $\de_i\val_{ii} < \val_i(y_i)$. 
Combining it with $\val_{ij} < \val_{ii}$, we have that 
\[(\de_i+1) \val_{ij} \le (\de_i + 1) \val_{ii} < \val_i(y_i) + \val_{ii}
= \val_i(y_i) + \val_{i\tau(i)}\,.\qedhere \]
\end{claimproof}

Therefore, our goal is to construct $\rho$ by ``replacing'' as much of $\pi$ with $\tau$ without losing too much in the objective.
By Claim~\ref{claim:rho=sigma} for any agent for which $\rho(i) = \tau(i)$ we will have either~\ref{easyCase} and~\ref{harderCase}.
We formalize this idea below, and give a way of constructing $\rho$ such that even when $\rho(i) \neq \tau(i)$ still we have either~\ref{easyCase} and~\ref{harderCase}. 

\paragraph{Algorithm} Let $(\y, \pi)$ be a feasible solution of~\eqref{prog:decomposedUtility}.
We denote with $\ls_i$ the value agent $i$ gets in $\y$, i.e., $\ls_i = \val_i(\y_i)$. 
We construct new assignment $\rho$ by combining $\pi$ and $\tau$.
In particular, whenever $\pi(i) = \tau(i)$ then we set $\rho(i) := \pi(i) = \tau(i)$
and otherwise exactly one of the following will be the case: 
$\rho(i) = \tau(i)$, $\rho(i) = \pi(i)$ or $\rho(i) = \emptyset$. 
Notation $\rho(i) = \emptyset$ represents the case that $i$ is not allocated any item from $\Hs$.
(Formally, we can allocate one item to each agent since $|\Hs|= |\A|$ but as some agents might value some items at $0$ 
it is simpler to say that agent is not allocated an item by $\rho$.)

Consider the symmetric difference of the two assignments $\pi \Delta \tau$. 
Each component is an alternating cycle; we consider the components one-by-one.
Take any component $C$ of $\pi \Delta \tau$ with $\kc$ agents and $\kc$ items. 
Let the agents in the component be $a_1,a_2,\ldots,a_\kc$. 
The numbering is modulo $\kc$: $a_{\kc+k}=a_{k}$ for all $k\in \mathbb{Z}$. 
By the convention on the numbering, 
the corresponding items are also numbered
$a_1,a_2,\ldots,a_\kc$, and $(a_k,a_k)\in \tau$ for all $k\in[\kc]$. 
We order the agents around the cycle such that $(a_k,a_{k-1})\in \pi$ for all $k\in [\kc]$. 
Let $\B := \B(C) = \{t \in [\kc] :  \ls_{a_t} > \de_{a_t} \val_{a_t a_{t-1}}\}$.
We consider two cases based on the size of $\B$:

\begin{enumerate}
\item[$|\B| = 0$.] In this case we set $\rho(a_t) = \pi(a_t) = a_{t-1}$ for all $t \in [c]$.

\item[$|\B| \ge 1$.]
First, we \emph{trim} $\pi$ by setting $\pi(a_t) = \emptyset$ for each $t \in \B$.
We have $\frac{\ls_{a_t} + \val_{a_t a_{t-1}}}{\ls_{a_t}} \le 2$ for each $t\in \B$ since $d_{a_t} \ge 1$.
In words, each agent losses at most half of her value.

After trimming $\pi$, the connected component $C$ decomposes into several alternating paths, see Figure~\ref{figure:reversing}.
Consider one such path, starting in agent $a_k$ and ending in item $a_r$.
It follows that $k \in \B$ and $t\not \in \B$ for all $k < t  \le r$.
We consider the following ratio that measures the change in the objective value by augmenting $\pi$ over the previously mentioned path:
\begin{equation*}\label{eq:ratio2}
    \varphi(C,k,r):=  \left( \frac{\ls_{a_k}}{\val_{a_k a_k}+\ls_{a_k}} \right)^{w_{a_k}}
                        \prod_{t=k+1}^{r} \left( \frac{\val_{a_t a_{t-1}}+\ls_{a_t}}{\val_{a_t a_t}+\ls_{a_t}} \right)^{w_{a_t}}\, .
\end{equation*}
    If it holds that $\varphi(C,k,r) \le \prod_{t=k}^{r-1} (\de_{a_t} + 1)^{w_{a_t}}$ 
    then we say that the \emph{interval} $[k,r]$ is \emph{reversible}.
    Moreover, we set $\rho(a_t) = \tau(a_t) = a_t$ for all $k \le t  \le r$.
    If $[k,r]$ is \emph{not} reversible then we set $\rho(a_k) = \emptyset$ and  $\rho(a_t) = \pi(a_t) = a_{t}$ for all $k< t\le r$.
    We do the same for every augmenting path. 
\end{enumerate}

\begin{figure}[h]
    \centering
    \includegraphics[width=0.9\textwidth]{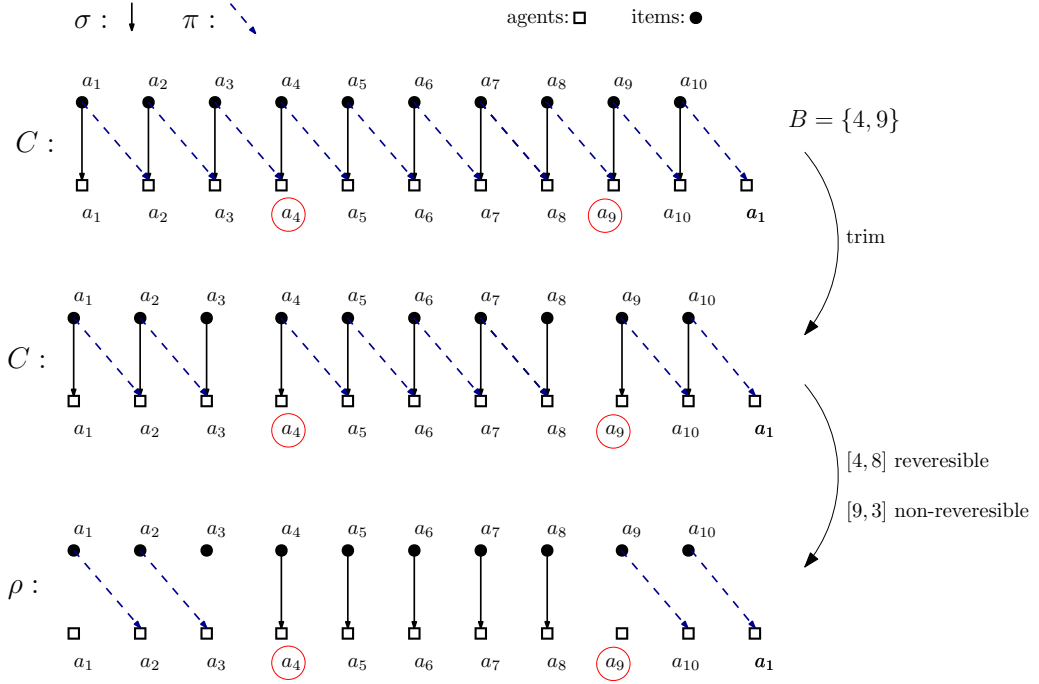}
    \caption{Assignments $\tau, \pi$, and $\rho$ resulting from $\B = \{4, 9\}$ and reversible interval $[4,8]$.}
    \label{figure:reversing}
\end{figure}

To prove Lemma~\ref{lemma:rounding}, we first show that by changing the assignment from $\pi$ to $\rho$ 
the objective value of~\eqref{prog:decomposedUtility} cannot decrease by too much.

\begin{lemma}
\label{lemma:matchingTau}
The assignment $\rho$ can be constructed in linear time (in $n$), 
and it holds $$\displaystyle \frac{ \overline \nsw(\y, \pi) }{ \overline \nsw(\y, \rho) } \le 
 2 \cdot \left( \prod_{i \in \A}(\de_i + 1)^{w_i} \right)^{1/\sum_{i \in \A} w_i}\,.$$
\end{lemma}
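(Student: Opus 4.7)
The plan is to analyze the log-ratio $\log(\overline\nsw(y,\pi)/\overline\nsw(y,\rho))$ agent by agent, exploiting that $\rho$ agrees with $\pi$ outside the symmetric difference $\pi\Delta\tau$. Since for every agent $i$ with $\pi(i)=\tau(i)=i$ we set $\rho(i)=\pi(i)$, that agent contributes a factor of $1$ to
\[
\frac{\overline\nsw(y,\pi)^{\sum_i w_i}}{\overline\nsw(y,\rho)^{\sum_i w_i}}
=\prod_i\left(\frac{\val_i(y_i)+\val_{i\pi(i)}}{\val_i(y_i)+\val_{i\rho(i)}}\right)^{w_i}.
\]
Hence it suffices to analyze each connected component $C$ of $\pi\Delta\tau$ independently; the running-time claim is then immediate, since the construction scans each cycle once.

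For a component $C$ with $B(C)=\emptyset$, the algorithm sets $\rho=\pi$ on $C$, contributing $1$. The interesting case is $|B(C)|\ge 1$, and the plan is to interpolate through the trimmed matching $\pi'$ obtained by unassigning every $a_t$ with $t\in B$, writing
\[
\frac{\overline\nsw(y,\pi)}{\overline\nsw(y,\rho)}
=\frac{\overline\nsw(y,\pi)}{\overline\nsw(y,\pi')}\cdot\frac{\overline\nsw(y,\pi')}{\overline\nsw(y,\rho)}.
\]
For the first factor, only the indices $t\in B$ contribute, and the definition $\ls_{a_t}>d_{a_t}\val_{a_t a_{t-1}}\ge \val_{a_t a_{t-1}}$ (valid since $d_{a_t}\ge 1$) gives
$(\val_{a_t a_{t-1}}+\ls_{a_t})/\ls_{a_t}<2$, so the per-component trimming loss is at most $\prod_{t\in B(C)} 2^{w_{a_t}}$. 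For the second factor, $\pi'$ decomposes $C$ into alternating paths with one endpoint at some agent $a_k$ (with $k\in B(C)$) and the other at some item $a_r$; on each path either the reversibility test succeeds and the contribution equals $\varphi(C,k,r)\le \prod_{t=k}^{r-1}(d_{a_t}+1)^{w_{a_t}}$, or the path is left untouched and contributes $1$.

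Multiplying the per-component bounds yields
\[
\frac{\overline\nsw(y,\pi)^{\sum_i w_i}}{\overline\nsw(y,\rho)^{\sum_i w_i}}
\le \prod_{i\in B}2^{w_i}\cdot\prod_{\text{reversible }[k,r]}\prod_{t=k}^{r-1}(d_{a_t}+1)^{w_{a_t}},
\]
where $B:=\bigcup_C B(C)\subseteq \A$. The key combinatorial observation is that the intervals $[k,r-1]$ over all reversible paths (across all components) are pairwise disjoint subsets of $\A$, so the double product is bounded by $\prod_{i\in \A}(d_i+1)^{w_i}$. Taking the $(\sum_i w_i)$-th root and using $\bigl(\prod_{i\in B}2^{w_i}\bigr)^{1/\sum_i w_i}\le 2$ gives the stated bound.

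The only delicate point is verifying that the construction correctly matches the bookkeeping: one must check (i) that $\pi'$ on $C$ indeed splits into paths of the form claimed, so that $\varphi(C,k,r)$ is exactly the $\pi'\!\to\!\rho$ ratio on a reversible path, and (ii) that each agent's interval $[k,r-1]$ used in bounding $\varphi$ appears in at most one reversible path, which is where the $-1$ in the upper limit of the product in the definition of reversibility is essential. Both follow from tracing the cycle structure, and after these verifications the product bounds combine as described.
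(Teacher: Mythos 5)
Your proposal is correct and follows essentially the same route as the paper's proof: a component-by-component analysis in which each trimmed agent $a_t$, $t\in\B$, costs a factor less than $2^{w_{a_t}}$ (using $\ls_{a_t}>\de_{a_t}\val_{a_t a_{t-1}}\ge\val_{a_t a_{t-1}}$), each reversible path costs at most $\varphi(C,k,r)\le\prod_t(\de_{a_t}+1)^{w_{a_t}}$, non-reversible paths cost nothing beyond the trim, and the disjointness of the paths' agent intervals plus $\bigl(\prod_{i\in\B}2^{w_i}\bigr)^{1/\sum_i w_i}\le 2$ gives the stated bound; your interpolation through the trimmed matching $\pi'$ is just a clean rewriting of the paper's per-interval computation. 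One small remark: the $-1$ in the upper limit of the reversibility product is not actually essential for the bookkeeping, since the full agent intervals $[k,r]$ of distinct paths are already pairwise disjoint (indeed the paper's own proof only uses the weaker bound $\varphi(C,k,r)<\prod_{t=k}^{r}(\de_{a_t}+1)^{w_{a_t}}$).
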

\begin{proof}
It suffices to prove the lemma for each of the connected components $C$ of $\pi \Delta \tau$.
For $|\B| = 0$ the lemma holds trivially. 
So assume that $|\B| \ge 1$ for the rest of the proof.

The procedure terminates in linear time, as we only require one pass through the agents and items in $C$.
To prove the bound on $\frac{ \overline \nsw(\y, \rho) }{ \overline \nsw(\y, \pi)}$,
we show that for every interval $[k,r]$ the objective value ``before averaging'' decreases at most by factor 
$2^{w_{a_k}}\prod_{t=k}^{r} (\de_{a_t} + 1)^{w_{a_t}}$.

If interval $[k,r]$ is not reversible, 
then the change in the objective function is captured by $\left(\frac{\val_{a_k a_{k-1} } + \ls_{a_k}}{\ls_{a_k}}\right)^{w_{a_k}}$,
as for every agent $a_t$ with $t\in [k+1,r]$, we have $\rho(a_t) = \pi(a_t)$, and $\rho(a_k) = \emptyset$.
Since $k\in \B$, it follows that $\ls_{a_k} > d_{a_{k}} \val_{a_k a_{k-1}} \ge \val_{a_k a_{k-1}}$.
Thus, $\left(\frac{\val_{a_k a_{k-1}} + \ls_{a_k}}{\ls_{a_k}}\right)^{w_{a_k}} < 2^{w_{a_k}}$.

\medskip
If, on the other hand, $[k, r]$ is reversible, then the difference in the objectives is captured by 
\[
    \left(\frac{\val_{a_k a_{k-1}} +  \ls_{a_k} }{\val_{a_k a_k}+\ls_{a_k}} \right)^{w_{a_k}} 
    \prod_{t=k+1}^{r} \left( \frac{\val_{a_t a_{t-1}}+\ls_{a_t}}{\val_{a_t a_t}+\ls_{a_t}} \right)^{w_{a_t}} =
    \left(\frac{\val_{a_k a_{k-1}} +  \ls_{a_k} }{\ls_{a_k}} \cdot \frac{\ls_{a_k} }{\val_{a_k a_k}+\ls_{a_k}} \right)^{w_{a_k}} 
    \prod_{t=k+1}^{r} \left( \frac{\val_{a_t a_{t-1}}+\ls_{a_t}}{\val_{a_t a_t}+\ls_{a_t}} \right)^{w_{a_t}} 
\]
As $[k,r]$ is reversible $\varphi(C, k, r) = \left(\frac{\ls_{a_k}}{\val_{a_k a_k}+\ls_{a_k}} \right)^{w_{a_k}}
    \cdot \prod_{t=k+1}^{r} \left( \frac{\val_{a_t a_{t-1}}+\ls_{a_t}}{\val_{a_t a_t}+\ls_{a_t}} \right)^{w_{a_t}} 
    < \prod_{t=k}^{r} (\de_{a_t} + 1)^{w_{a_t}}$.
    Since $k \in \B$ and $d_{a_k} \ge 1$ we again have 
    $\frac{\val_{a_k a_{k-1}} + Y_{a_k}}{\ls_{a_k} } < 2$.
    Hence, the change in the objective value is  bounded by 
    $ 2^{w_{a_k}} \cdot \prod_{t=k}^{r} (\de_{a_t} + 1)^{w_{a_t}}$.
\end{proof}

It is left to show that for each agent $i$ either~\ref{easyCase} or~\ref{harderCase} holds.
Recall that $\ls_i = \val_i(y_i)$. 
\begin{lemma}\label{lemma:preserve}
Let $i \in \A$. Then we either have 
\begin{enumerate}
    \item[~\ref{easyCase}] $\val_{i \rho(i)} \ge \frac{1}{\de_i} \val_i(\y_i)$, or
    \item[~\ref{harderCase}] for each $j \in \F$ it holds 
        $\val_{ij} \le \frac{1}{\de_i + 1} ( \val_i(\y_i) + \val_{i \rho(i)})$.
\end{enumerate}
\end{lemma}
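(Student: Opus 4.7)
The plan is a case analysis on the value $\rho(a_t)$ assigned by the construction in Section~\ref{section:newRematching} to an agent $i = a_t$ lying in a cycle component $C$ of $\pi \Delta \tau$. Writing $\alpha_t = \val_{a_t a_t} = \val_{a_t \tau(a_t)}$ and $\beta_t = \val_{a_t a_{t-1}} = \val_{a_t \pi(a_t)}$, the uniform fact I will use throughout is that since $\tau$ is a maximum-weight single-item matching on the complete bipartite graph $(\A, \G)$, for any $j \in \F = \G\setminus \Hs$ the matching obtained from $\tau$ by swapping $\tau(a_t)$ with $j$ is valid (because $j\notin \Hs = \tau(\A)$), whence $\val_{a_t j} \le \alpha_t$.

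The two easy cases are short. When $\rho(a_t) = a_{t-1} = \pi(a_t)$---which happens either when $|\B(C)| = 0$ or when $t$ lies strictly inside a non-reversible interval $(k,r]$---one has $t\notin \B$, so by the definition of $\B$, $\ls_{a_t} \le d_{a_t}\beta_t = d_{a_t}\val_{a_t\rho(a_t)}$, which is \ref{easyCase}. When $\rho(a_t) = a_t = \tau(a_t)$, i.e.\ $t$ is in a reversible interval, either $\alpha_t \ge \ls_{a_t}/d_{a_t}$ (yielding \ref{easyCase} at once), or else $d_{a_t}\alpha_t < \ls_{a_t}$ and for each $j \in \F$ the opening observation gives $(d_{a_t}+1)\val_{a_t j} \le (d_{a_t}+1)\alpha_t < \ls_{a_t} + \alpha_t = \val_i(\y_i) + \val_{i\rho(i)}$, which is \ref{harderCase}.

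The remaining and substantive case is $\rho(a_t) = \emptyset$, occurring only when $t \in \B$ and the interval $[t, r]$ containing $t$ is non-reversible. Since $\val_{i\rho(i)} = 0$, condition \ref{easyCase} holds only when $\ls_{a_t} = 0$; otherwise I must establish \ref{harderCase}, and by the opening observation it suffices to prove $(d_{a_t}+1)\alpha_t \le \ls_{a_t}$. I plan to argue by contradiction: assuming $(d_{a_t}+1)\alpha_t > \ls_{a_t}$ immediately yields $\ls_{a_t}/(\alpha_t + \ls_{a_t}) < (d_{a_t}+1)/(d_{a_t}+2)$. Plugging this bound into the non-reversibility inequality $\varphi(C,t,r) > \prod_{s=t}^{r-1}(d_{a_s}+1)^{w_{a_s}}$, and using $\ls_{a_s}\le d_{a_s}\beta_s$ for $s \in (t, r]$ (since those indices lie outside $\B$) to bound each factor $(\beta_s+\ls_{a_s})/(\alpha_s+\ls_{a_s}) \le (d_{a_s}+1)\beta_s/\alpha_s$, the inequality should reduce to a lower bound on $\prod_{s=t+1}^{r}(\beta_s/\alpha_s)^{w_{a_s}}$. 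I will then try to contradict this lower bound using the cycle-swap optimality inequality $\prod_{s\in C}\alpha_s^{w_{a_s}} \ge \prod_{s\in C}\beta_s^{w_{a_s}}$, which follows because swapping $\tau$ with $\pi$ along all of $C$ produces another valid single-item perfect matching and $\tau$ is globally optimal.

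Closing the final step is where I anticipate the main difficulty: the cycle-swap inequality controls the product of $\beta_s/\alpha_s$ over all of $C$ but not the sub-product over $(t, r]$. If the two global inequalities do not suffice, the backup plan is to bring in further consequences of $\tau$-optimality (e.g.\ 2-swap inequalities involving $a_t$ and a neighbouring item in $\Hs$, which are also valid single-item modifications of $\tau$), or to sharpen the bound on $\val_{a_t j}$ for $j \in \F$ via the Eisenberg--Gale KKT conditions for $\y$ in the spirit of Claim~\ref{claim:priceBoundedByWeight}, going beyond the crude estimate $\val_{a_t j}\le \alpha_t$.
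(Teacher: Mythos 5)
Your handling of the cases $\rho(i)=\pi(i)$ and $\rho(i)=\tau(i)$ is correct and essentially identical to the paper's (Claim~\ref{claim:rho=sigma}). The gap is in the case $\rho(i)=\emptyset$, i.e.\ $i=a_k$ with $k\in\B$ and $[k,r]$ not reversible. There you replace \ref{harderCase} by the stronger target $(\de_i+1)\val_{i\tau(i)}\le \val_i(\y_i)$ (via the lossy bound $\val_{ij}\le\val_{i\tau(i)}$) and plan a contradiction from its negation together with non-reversibility, the Claim~\ref{cl:upper-u}-type bounds, and the full-cycle swap inequality $\prod_{t}\val_{a_ta_t}^{w_{a_t}}\ge\prod_{t}\val_{a_ta_{t-1}}^{w_{a_t}}$. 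This cannot be completed, because the stronger target is simply false in general and all of your ingredients are simultaneously satisfiable with $(\de_i+1)\val_{i\tau(i)}>\val_i(\y_i)$. Concretely, take a two-element cycle with unit weights, $\de_{a_1}=\de_{a_2}=1$, $\val_{a_1a_1}=100$, $\val_{a_1a_2}=0.01$, $\ls_{a_1}=10$, $\val_{a_2a_2}=0.01$, $\val_{a_2a_1}=100$, $\ls_{a_2}=1$, and $\val_{a_ij}\le 0.01$ for all $j\in\F$ (consistent, since $\y_i$ may be spread over many items of $\F$). Then $\tau$ is optimal ($100\cdot 0.01\ge 0.01\cdot 100$, and no item of $\F$ helps), $\B=\{1\}$, the trimmed path is $[1,2]$, and $\varphi(C,1,2)=\frac{10}{110}\cdot\frac{101}{1.01}\approx 9.1>2=(\de_{a_1}+1)$, so the interval is not reversible and $\rho(a_1)=\emptyset$; yet $(\de_{a_1}+1)\val_{a_1a_1}=200\gg 10=\ls_{a_1}$, while \ref{harderCase} itself holds because every $\val_{a_1j}\le 0.01$. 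So no contradiction is available from the facts you assemble, and the reduction to a bound on $\val_{i\tau(i)}$ is the wrong move.

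The missing idea is the one you flagged but did not supply: an optimality certificate for $\tau$ that controls exactly the sub-product over the trimmed interval and keeps $\val_{a_kj}$ in play. The paper uses a path-shift: for $j\in\F$, reassigning $j$ to $a_k$ and giving item $a_{t-1}$ to agent $a_t$ for $k<t\le r$ is again a feasible one-item-per-agent assignment (it uses $j\notin\Hs$ and frees item $a_r$), so optimality of $\tau$ yields
\begin{equation*}
1\le\left(\frac{\val_{a_ka_k}}{\val_{a_kj}}\right)^{w_{a_k}}\prod_{t=k+1}^{r}\left(\frac{\val_{a_ta_t}}{\val_{a_ta_{t-1}}}\right)^{w_{a_t}}.
\end{equation*}
Combining this with non-reversibility, $\varphi(C,k,r)>\prod_{t=k}^{r-1}(\de_{a_t}+1)^{w_{a_t}}$, and Claim~\ref{cl:upper-u} (applicable since $t\notin\B$ for $k<t\le r$) cancels the factors $(\de_{a_t}+1)^{w_{a_t}}$ for $t>k$ and gives directly $\ls_{a_k}>(\de_{a_k}+1)\val_{a_kj}$ for every $j\in\F$, which is \ref{harderCase} with $\val_{i\rho(i)}=0$. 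Your proposed backups do not produce this certificate: pairwise ($2$-swap) modifications of $\tau$ compare only two agents and introduce values such as $\val_{a_ta_{t+1}}$ that never enter the argument, and the Eisenberg--Gale KKT conditions constrain the fractional allocation $\y$, not the assignment of $\Hs$; what is needed is the multi-element shift along the alternating path above.
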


To prove the lemma we use the following simple claim, which can applied to any agent $i\not \in B$:
\begin{claim}\label{cl:upper-u}
For any agent $i\in \A$, if $\ls_{i}\le \de_i \val_{i \pi(i) }$, then 
$
\displaystyle \frac{\val_{i \pi(i)}+\ls_i}{\val_{ii}+\ls_i} \le \frac{(\de_i + 1)\val_{i  \pi(i)}}{\val_{ii}}\, .
$
\end{claim}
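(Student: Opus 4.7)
The claim is a direct estimate and my plan is to bound the numerator and denominator of the left-hand side separately.

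First I would bound the numerator using the hypothesis $\ls_i \le \de_i\,\val_{i\pi(i)}$: adding $\val_{i\pi(i)}$ to both sides gives
\[
\val_{i\pi(i)} + \ls_i \;\le\; \val_{i\pi(i)} + \de_i\,\val_{i\pi(i)} \;=\; (\de_i+1)\,\val_{i\pi(i)}.
\]
Next I would bound the denominator from below using nonnegativity of $\ls_i$ (which follows from $\val_i$ being a nonnegative valuation with $\val_i(\mathbf{0}) = 0$ and monotonicity): $\val_{ii} + \ls_i \ge \val_{ii}$. Combining the two estimates,
\[
\frac{\val_{i\pi(i)} + \ls_i}{\val_{ii} + \ls_i} \;\le\; \frac{(\de_i+1)\,\val_{i\pi(i)}}{\val_{ii}},
\]
as claimed. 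Note $\val_{ii} > 0$ because $i = \tau(i)$ and we argued in Phase~I that the matching $\tau$ has finite weight, i.e.\ $\log \val_{ii}$ is finite; otherwise the entire NSW optimum would be $0$ and the claim would be vacuous.

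The only minor subtlety, and so the thing I would flag, is the positivity of $\val_{ii}$ needed to divide by it on the right-hand side; this is handled by the standing assumption from Phase~I that the optimum NSW is non-zero, which forces $\val_{i\tau(i)} = \val_{ii} > 0$ for every agent $i$.
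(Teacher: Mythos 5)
Your proof is correct: bounding the numerator by the hypothesis $\ls_i\le \de_i\val_{i\pi(i)}$ and the denominator from below by $\val_{ii}$ (using $\ls_i\ge 0$) is exactly the elementary estimate intended by the paper, which states the claim as a ``simple claim'' without writing out a proof. Your remark that $\val_{ii}=\val_{i\tau(i)}>0$ follows from the Phase~I standing assumption that the optimal NSW is non-zero (so $\tau$ has finite weight) correctly settles the only point needing care.
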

\begin{proof}[Proof of Lemma~\ref{lemma:preserve}]
If $\rho(i)=i$, that is, agent $i$
receives the same item in $\rho$ as in $\tau$ then the lemma follows by Claim~\ref{claim:rho=sigma}.
For the rest of the proof we assume $\rho(i) \neq i$.
Hence, either $\rho(i) = \pi(i)$ or $\rho(i) = \emptyset$.

We consider the component $C$ of $\tau \Delta
\pi$ containing an agent $i$.
We use the notation as before, denoting the
agents in $C$ by $a_1,a_2,\ldots,a_\kc$, and letting $i=a_k$. 

\medskip
If $\rho(a_k) = \pi(a_k) = a_{k-1}$ then for $i$ it holds~\ref{easyCase}. 
Namely, $\rho(a_k) = a_{k-1}$ implies that $k \not \in \B$ as otherwise this would be trimmed.
Thus $\ls_{a_k} \le d_{a_k} \val_{a_k a_{k-1}}$; 
or equivalently $\val_{a_k a_{k-1}} \ge \frac{1}{d_{a_k}} \ls_{a_k}$. 

\medskip
If on the other hand $\rho(a_k) = \emptyset$, we have that $k \in \B$ 
and also that the interval $[k, r]$ with starting and $k$ and ending in $r$
that corresponds to some alternating path in $C$ is \emph{not} reversible (otherwise, $\rho(a_k) = a_k$).
Therefore, $\varphi(C,k,r) > \prod_{t=1}^{r} (\de_{a_t} + 1)^{w_{a_t}}$. 
Recall that for each such considered interval we have $k\in \B$ and $t\not \in B$.
Starting with $ \prod_{t=k}^{r} (\de_{a_t} + 1)^{w_{a_t}} < \varphi(C,k,r)$ and then by Claim~\ref{cl:upper-u} we obtain
    \[
    \begin{aligned}
    1 &< \prod_{t=k}^{r-1} (\de_{a_t} + 1)^{-w_{a_t}}
    \cdot \left( \frac{\ls_{a_k}}{\val_{a_k a_{k}}+\ls_{a_k}} \right)^{w_{a_k}}
    \cdot \prod_{t=2}^{r} 
    \left( \frac{\val_{a_t a_{t-1}}+\ls_{a_t}}{\val_{a_t a_{t}}+\ls_{a_t}} \right)^{w_{a_t}} \\
    &\le
    (\de_{a_k} + 1)^{-w_{a_k}}\cdot 
    \left( \frac{\ls_{a_k}}{\val_{a_k a_{k}}+\ls_{a_k}} \right)^{w_{a_k}}
    \cdot \prod_{t=2}^{r} 
    \left( \frac{\val_{a_t a_{t-1}}}{\val_{a_t a_{t}}} \right)^{w_{a_t}}\, .
    \end{aligned}
    \]
We further bound
    \[
    1<(\de_{a_k} + 1)^{-w_{a_k}}\cdot 
    \left(  \frac{\ls_{a_k}}{\val_{a_k j}} 
    \cdot  \frac{\val_{a_k j}}{\val_{a_k a_k}} \right)^{w_{a_k}} 
    \cdot \prod_{t=2}^{r} 
    \left(\frac{\val_{a_t a_{t-1}}}{\val_{a_t a_{t}}} \right)^{w_{a_t}}\, .
    \]
By the optimal choice of $\tau$, for every $j \in \F$ we have
    $$1 \le \left( \frac{\val_{a_k a_k}}{\val_{a_{k} j}} \right)^{w_{a_k}}
    \cdot
    \prod_{t=2}^{r}\left( \frac{\val_{a_t a_{t}}}{\val_{a_t a_{t-1}}} \right)^{w_{a_t}} \,. $$ 
Combining the last two inequalities, we obtain $\ls_{a_k}>(\de_{a_k} + 1) \val_{a_k j}$.
Hence, in this case~\ref{harderCase} holds, by  recalling that $i=a_k$ and $\rho(a_k)=\emptyset$.
\end{proof}

\section[{Rado valuations and M-natural-concave functions}]{Rado valuations and M\nat-concave functions}
\label{section:utilityDetails}

In this section, we prove Lemma~\ref{lemma:RadoIsMconcave} showing that Rado valuations are M\nat-concave and Lemma~\ref{lemma:concaveClosure} showing that 
the function $\nu$ as defined in Theorem~\ref{thm:integral} is indeed the continuous closure of the Rado valuation. We also discuss related conjectures on constructive characterizations of M\nat-concave functions.

\subsection[Rado valuations are M-natural-concave]
{Rado valuations are M\nat-concave}
Rado valuations turn out to be a special case of a more
general construction described in \cite{Murotabook}, called \emph{`transformation by networks'}. We now present it in the special case when the network is a bipartite graph {\em (instead of a directed graph)} with linear edge costs {\em (instead of concave functions)}, and the functions are restricted to the binary domain {\em (instead of the nonnegative integer lattice)}.

\begin{theorem}[{\cite[Chapter 9.6.1]{Murotabook},\cite[Section 6.2]{murota2016discrete}}]
Let 
$H = (\G, \T'; \E')$ be a bipartite graph with cost function  $\co': \E' \to \R_+$.
Given an M\nat-concave function $g : \{0,1\}^{\T'} \to \R \cup \{- \infty\}$ the following  function $\tilde g:\{0,1\}^\G \to \R$ is also M\nat-concave:
\begin{equation}\label{prog:transformationByBipartite}
\tilde g(x) = \max_{y\in \{0,1\}^{\T'}, z\in \{0,1\}^{{\G}\times \T'}} 
\left\{ g(y) + \co'(z): 
\sum_{k \in \T'} z_{kj} = x_j \,, \forall j\in \G 
\text{ and } \sum_{j\in \G} z_{kj} = y_k \,, \forall k \in \T' \right\} \,.
\end{equation}
\end{theorem}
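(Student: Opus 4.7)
My plan is to verify the M\nat-concavity exchange axiom for $\tilde g$ directly, reducing it to the exchange axiom for $g$ on the $\T'$-side via alternating-path swaps in the bipartite graph $H$. Fix $X, Y \subseteq \G$ and $x \in X \setminus Y$; we must find $Z \subseteq Y \setminus X$ with $|Z| \le 1$ satisfying
\[
\tilde g(X) + \tilde g(Y) \le \tilde g((X \setminus \{x\}) \cup Z) + \tilde g((Y \setminus Z) \cup \{x\}).
\]
Take optimal solutions $(y^X, z^X)$ and $(y^Y, z^Y)$ achieving $\tilde g(X)$ and $\tilde g(Y)$ respectively. Since $x \in X$, the matching $z^X$ contains a unique edge $(x, k_x)$ with $k_x \in y^X$, while $z^Y$ contains no edge incident to $x$.

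Consider $M := z^X \triangle z^Y$, a bipartite subgraph of $H$ in which every vertex has degree at most $2$; it decomposes into vertex-disjoint alternating paths and even cycles. Because $z^Y$ does not cover $x$, the vertex $x$ has $M$-degree exactly $1$, so the component $P$ of $M$ containing $(x, k_x)$ is an alternating path beginning at $x$; let $w$ denote its other endpoint. I swap the $z^X$- and $z^Y$-edges along $P$ to obtain new assignments $\tilde z^X$ and $\tilde z^Y$. Every internal vertex of $P$ is incident to exactly one $z^X$-edge and one $z^Y$-edge of $P$, so it remains matched in both $\tilde z^X$ and $\tilde z^Y$, only to a different neighbour; therefore both are valid matchings and the total cost is preserved, $\co'(\tilde z^X) + \co'(\tilde z^Y) = \co'(z^X) + \co'(z^Y)$, as each edge of $P$ simply changes which matching it belongs to.

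A parity analysis of $|E(P)|$ now yields two cases. If $|E(P)|$ is even, then $w$ lies on the $\G$-side; since the last edge of $P$ belongs to $z^Y$ and $w$ has $M$-degree $1$, we get $w \in Y \setminus X$. The $\T'$-supports are unchanged ($\tilde y^X = y^X$, $\tilde y^Y = y^Y$), and $\tilde z^X, \tilde z^Y$ cover $(X \setminus \{x\}) \cup \{w\}$ and $(Y \setminus \{w\}) \cup \{x\}$; taking $Z = \{w\}$ gives the desired exchange inequality immediately, using cost preservation and $g(\tilde y^X) + g(\tilde y^Y) = g(y^X) + g(y^Y)$. If instead $|E(P)|$ is odd, then $w \in \T'$ and $w \in y^X \setminus y^Y$; the swap produces $\tilde y^X = y^X \setminus \{w\}$, $\tilde y^Y = y^Y \cup \{w\}$ together with $\tilde z^X$ covering $X \setminus \{x\}$ and $\tilde z^Y$ covering $Y \cup \{x\}$, so $Z = \emptyset$ and it suffices to show $g(y^X) + g(y^Y) \le g(y^X \setminus \{w\}) + g(y^Y \cup \{w\})$.

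The main obstacle is the odd-length case: M\nat-concavity of $g$ applied to $(y^X, y^Y, w)$ only furnishes some $Z' \subseteq y^Y \setminus y^X$ with $|Z'| \le 1$ achieving $g(y^X) + g(y^Y) \le g((y^X \setminus \{w\}) \cup Z') + g((y^Y \setminus Z') \cup \{w\})$; if $Z' = \emptyset$ is optimal we are done. If instead $Z' = \{k'\}$ with $k' \in y^Y \setminus y^X$, the single swap along $P$ does not realise the required $\T'$-side exchange, and I would proceed by induction on $|E(M)|$. Since $k'$ has $M$-degree $1$ it is an endpoint of some other component $P^*$, and I perform the additional swap along $P^*$: this augments the $\T'$-side change to the pair $((y^X \setminus \{w\}) \cup \{k'\},\, (y^Y \setminus \{k'\}) \cup \{w\})$, while the $\G$-side change on $P^*$ amounts to moving its other endpoint between $X$ and $Y$ (if that endpoint lies in $\G$) or leaves the $\G$-side untouched (if it lies in $\T'$). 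A careful case analysis on the type of that endpoint shows either that the combined swap directly yields a valid exchange with the corresponding choice of $Z \in \{\emptyset\} \cup (Y \setminus X)$, or that the resulting pair of matchings has strictly smaller symmetric difference, at which point the induction hypothesis applies after re-invoking the M\nat-concavity of $g$. The subcase $k_x \in y^X \cap y^Y$ is absorbed into this framework, since $k_x$ is then simply an internal vertex of $P$ of $M$-degree $2$.
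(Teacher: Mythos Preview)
The paper does not prove this theorem: it is quoted as a known result from Murota's book and used as a black box to derive Lemma~\ref{lemma:RadoIsMconcave}. So there is no ``paper's own proof'' to compare against; I can only assess your argument on its own terms.

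Your alternating-path strategy is the natural one, and the even-length case as well as sub-case B1 (where the second path $P^*$ terminates in a $\G$-vertex) go through exactly as you describe. The gap is in the remaining sub-case, where $P^*$ terminates at a $\T'$-vertex $w''\in y^X\setminus y^Y$. After swapping both $P$ and $P^*$, the $\T'$-supports become $(y^X\setminus\{w,w''\})\cup\{k'\}$ and $(y^Y\setminus\{k'\})\cup\{w,w''\}$, and you now need
\[
g\bigl((y^X\setminus\{w,w''\})\cup\{k'\}\bigr)+g\bigl((y^Y\setminus\{k'\})\cup\{w,w''\}\bigr)\ \ge\ g(y^X)+g(y^Y),
\]
which is a \emph{double} exchange on the $\T'$-side and does not follow from a single application of the M\nat-exchange axiom for $g$. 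Re-invoking M\nat-concavity on the intermediate pair $((y^X\setminus\{w\})\cup\{k'\},(y^Y\setminus\{k'\})\cup\{w\})$ with element $w''$ may return yet another element $k''$, and the chase continues. Crucially, your stated induction parameter $|E(M)|$ does \emph{not} decrease: swapping along a component of $M$ merely relabels which matching each edge belongs to, so $\tilde z^X\triangle\tilde z^Y=z^X\triangle z^Y$ as edge sets. Likewise $|y^X\triangle y^Y|$ is preserved under these swaps. Without a well-founded measure that strictly drops, the recursion is not an induction, and the argument as written does not terminate.

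A correct proof along combinatorial lines needs either a more global augmenting structure (tracking a sequence of paths simultaneously and appealing to a ``no-shortcut'' property of M\nat-concave functions), or---as Murota does---a reduction to the convolution theorem: write $\tilde g(x)=\max_y\{g(y)+h(x,y)\}$ where $h$ is the optimal-assignment value, observe that $h$ is M\nat-concave jointly in $(x,-y)$, and use that projection of an M\nat-concave function is M\nat-concave. That machinery is what the cited chapters supply.
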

We use the transformation to show that the Rado valuation functions are M\nat-concave.

\RadoIsMconcave*
\begin{proof}
Consider a Rado valuation $\val$ as in Definition~\ref{def:Rado},
given by a bipartite graph $(\G, \T; \E)$ with a cost function $\co : \E \to \R$ on the edges, and a matroid $ \M = (\T, \I)$ with rank function $\f$.

Let us define $\T'=\T\cup D$ with a set $D$ of $|\G|$ dummy nodes. Let $\E'$ be the union of $\E$ and a perfect matching of dummy edges between $\G$ and $D$. Let $\co'_e=\co_e$ for all $e\in \E$ and $\co'_e=0$ on all new dummy edges.
Let us define a matroid $\M'=(\T',\I')$ where a set $S\in \I'$ if and only if $S\cap \T\in \I$; that is, we add the dummy elements in $D$ freely to the matroid.
We define 
\begin{equation}\label{def:y}
g(y) = 
\left\{
    \begin{array}{ll}
        0  & \mbox{if } y=\chi_S \mbox{ for some }S\in \I'\, , \\
        -\infty & \mbox{otherwise.} 
    \end{array}
\right.
\end{equation} 
As the indicator function of the independent sets of a matroid, $g$ is well-known to be M\nat-concave (see e.g. \cite[Section 4.7]{Murotabook}).
We claim that $\tilde g$ defined in \eqref{prog:transformationByBipartite}  equals the Rado valuation $\val$.

First, for $S\subseteq \G$, let $M$ be the maximum cost matching in the definition of $\val(S)$. We can extend this to a perfect matching $M'$ with $\delta_{S}(M')=S$ by adding dummy edges incident to the nodes in $S\setminus \delta_S(M)$.
We then define $z=\chi_{M'}$ and $y=\chi_{\delta_{\T'}(M')}$. Thus, $\tilde g(\chi_S)\ge g(y)+c'(z)=0+c(M)=\val(S)$. Conversely, consider the optimal $(y,z)$ in the definition of 
$\tilde g(\chi_S)$. By the above bound, we know that $\tilde g(\chi_S)\ge \val(S)\ge 0$ is finite, and therefore $g(y)=0$. The set of (non-dummy) edges $e\in \E$ with $z_e=1$ thus form a matching with $\delta_{\T}(M)\in \I$, $\delta_{\G}(M)\subseteq S$, and $c(M)=\tilde g(\chi_S)$, showing that $\tilde g(\chi_S)\le \val(S)$.
\end{proof}

\subsection{Concave closure of Rado valuations}
We now complete the proof of Theorem~\ref{thm:integral}, showing that the function $\nu$ defined in \eqref{prog:utility} is indeed the continuous extension of the Rado valuation $\val$.

The value $\overline g (x)$ for $x\in [0,1]^m$ is defined by a linear program \eqref{eq:bar-val}.\footnote{For M\nat-concave functions defined over the lattice $\Z^n$, the definition of the extension includes a constraint for every lattice point, thus, the system is not finite. Still, it can be described by a `local' linear program, see \cite[(3.64) and Theorem 6.42]{Murotabook}.}
 In the proof, we will use the dual LP:
\begin{equation}\label{eq:extend-dual}
\begin{aligned}[t]
    \bar \val(x)=&\min \quad p^\top x+\alpha\\
    &\begin{aligned}
        \text{s.t.:} \, && p(S)+\alpha &\ge \val(S) && \forall S\subseteq \G\\
        && (p,\alpha)   &\in \R^{m+1} &&
    \end{aligned}
\end{aligned}
\qquad\qquad
\begin{aligned}[t]
    &\max \quad  \sum_{S\subseteq G} \lambda_S \val(S)\\
    &\begin{aligned}
        \text{s.t.:} \, &&\sum_{S\subseteq G} \lambda_S \chi_S &=x &&\\
        &&\sum_{S\subseteq G} \lambda_S &=1 &&\\
        &&\lambda & \ge 0 &&
    \end{aligned}
\end{aligned}
\end{equation}

\begin{lemma}\label{lemma:concaveClosure}
Let $\val$ be a Rado valuation given by a bipartite graph $(\G, \T; \E)$ with costs on the edges $\co : \E \to \R$,
and a matroid $ \M = (\T, \I)$ with a rank function $\f=\f_\M$ as in Definition~\ref{def:Rado}. Let $\nu(x)$ be the function defined in \eqref{prog:utility}, that is,
\begin{equation*}
\begin{aligned}
   \nu(x)&:= \quad \max \quad \sum_{(j,k)\in \E}    \co_{jk} z_{jk}   \\
   &\begin{aligned}
    \text{s.t.: }   && \sum_{k\in \T } z_{jk} &\le x_j     &&\quad \forall j \in \G \\
                    &&\sum_{j \in \G, k \in T} z_{jk} &\le \f(T)        &&\quad \forall T \subseteq \T \\
                  &&z &\ge 0  \,.      &&
    \end{aligned} 
\end{aligned}
\end{equation*}

 Then, $\overline \val (x)= \nu (x)$ holds  for every $x\in \R^m_+$.
\end{lemma}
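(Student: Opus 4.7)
The plan is to prove the equality $\bar v = \nu$ via two separate inequalities, leveraging what has already been established in the proof of Theorem~\ref{thm:integral}. The direction $\bar v \le \nu$ is immediate: that proof shows $\nu$ is concave on $\R_+^\G$ and satisfies $\nu(\chi_S) = v(S)$ for every $S \subseteq \G$, so the characterization of $\bar v$ recalled after \eqref{eq:bar-val}---namely, $\bar v \le f$ for every concave $f$ with $f(\chi_S) \ge v(S)$---applied to $f = \nu$ gives $\bar v(x) \le \nu(x)$ for every $x$.

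For the reverse inequality $\nu \le \bar v$, I will use LP duality on \eqref{eq:bar-val}. Treating $(p, \alpha)$ as unrestricted primal variables, the LP dual of \eqref{eq:bar-val} reads
\[
\bar v(x) \;=\; \max\Bigl\{\, \textstyle\sum_{S \subseteq \G} \lambda_S\, v(S) \;:\; \lambda \ge 0,\; \sum_S \lambda_S = 1,\; \sum_S \lambda_S \chi_S = x \,\Bigr\}
\]
for every $x \in [0,1]^\G$ (strong duality holds since the primal is feasible and bounded on $[0,1]^\G$). Thus it suffices to exhibit a probability distribution $\lambda$ on $2^\G$ whose barycentre is $x$ and whose expected valuation $\sum_S \lambda_S v(S)$ is at least $\nu(x)$; I will construct such a $\lambda$ from an optimal solution $z^*$ of \eqref{prog:utility}.

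The obstacle is that for fractional $x_j$ the feasible region of \eqref{prog:utility} need not have integer vertices, so $z^*$ cannot be decomposed directly into integer matchings. I circumvent this by reusing the dummy-node construction from the proof of Lemma~\ref{lemma:RadoIsMconcave}: extend $\T$ by new vertices $D = \{d_j : j \in \G\}$, add zero-cost edges $(j, d_j)$ for each $j \in \G$, and extend $\M$ to a matroid $\M'$ on $\T \cup D$ by declaring a set independent iff its intersection with $\T$ is independent in $\M$, so $\f'(T) = \f(T \cap \T) + |T \cap D|$. Set $\tilde z_{jk} := z^*_{jk}$ for $(j,k) \in \E$ and $\tilde z_{j,d_j} := x_j - \sum_{k \in \T} z^*_{jk}$, using up the slack in the $x$-constraint. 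Then $\tilde z$ lies in the polytope $P' := \{z \ge 0 : \sum_{k \in \T \cup D} z_{jk} \le 1 \text{ for all } j,\ \sum_{j,\, k \in T} z_{jk} \le \f'(T) \text{ for all } T \subseteq \T \cup D\}$, which is the intersection of two polymatroid polytopes and hence has $\{0,1\}$-valued vertices by the total dual integrality of polymatroid intersection \cite[Theorem 46.1]{schrijver2003combinatorial}.

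By Carath\'eodory, $\tilde z = \sum_M \mu_M \chi_M$ with $\mu \ge 0$, $\sum_M \mu_M = 1$, and each $M$ a matching in the extended graph with $\delta_{\T \cup D}(M) \in \I'$. Define $\lambda_S := \sum_{M :\, \delta_\G(M) = S} \mu_M$; this yields a probability distribution on $2^\G$, and for every $j \in \G$ one has $\sum_S \lambda_S \chi_S(j) = \sum_M \mu_M [j \in \delta_\G(M)] = \sum_{k \in \T \cup D} \tilde z_{jk} = x_j$, so $\sum_S \lambda_S \chi_S = x$. For each such $M$, the restriction $M \cap \E$ is a matching feasible for $v(\delta_\G(M))$, giving $v(\delta_\G(M)) \ge \co(M \cap \E)$, which equals the full extended cost of $M$ because dummy edges cost zero. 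Summing against $\mu_M$ yields $\sum_S \lambda_S v(S) \ge \sum_M \mu_M \co(M \cap \E) = \co \cdot z^* = \nu(x)$, completing the argument. The main conceptual step, and essentially the only nontrivial one, is the dummy-node extension: it converts the fractional $x$-capacities into unit capacities without changing the objective and thereby makes polymatroid-intersection TDI applicable; the remainder is LP duality and Carath\'eodory bookkeeping.
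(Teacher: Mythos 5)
Your proof is correct, but in the crucial direction $\nu\le\bar\val$ it takes a genuinely different route from the paper. The paper decomposes an optimal $z$ of \eqref{prog:utility} directly into integral independent matchings and then invokes the LP dual of \eqref{prog:utility} together with complementary slackness to argue that every matching $M_S$ in the support of the decomposition satisfies $c(M_S)=\nu(\chi_S)=\val(S)$, before feeding the resulting weights into the dual formulation \eqref{eq:extend-dual}. You instead recycle the dummy-node construction from the proof of Lemma~\ref{lemma:RadoIsMconcave}: zero-cost edges $(j,d_j)$ absorb the slack $x_j-\sum_k z^*_{jk}$, the augmented point lies in a unit-capacity polymatroid-intersection polytope whose vertices are independent matchings (the same integrality ingredient, \cite[Theorem 46.1]{schrijver2003combinatorial}, that the paper uses in Theorem~\ref{thm:integral}), and then only the trivial inequality $\val(\delta_\G(M))\ge \co(M\cap\E)$ is needed---no dual of \eqref{prog:utility} and no complementarity. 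This buys two things: the argument is shorter, and the dummy edges make the barycentre of your distribution equal to $x$ exactly, so dual feasibility in \eqref{eq:extend-dual} is immediate, whereas the paper's decomposition only gives a barycentre $\le x$ coordinatewise and its write-up glosses over this point (the complementary-slackness step yields the stronger but unneeded fact that each $M_S$ is itself an optimal matching). Your easy direction also differs: the paper builds a feasible solution of \eqref{prog:utility} from a dual-optimal $\lambda$, while you simply combine the concavity of $\nu$ and $\nu(\chi_S)=\val(S)$ (both established independently in Theorem~\ref{thm:integral}, so there is no circularity) with the envelope property stated after \eqref{eq:bar-val}. Finally, your explicit restriction to $x\in[0,1]^\G$ is the right scope: the paper's own proof implicitly assumes it (decomposing $z$ into matchings needs $x_j\le 1$, and the max formulation in \eqref{eq:extend-dual} is infeasible outside the cube), so this is a discrepancy with the lemma's literal wording, not a gap in your argument.
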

\begin{proof}
We let ${\cal M}(x)$ denote the set of feasible solutions of~\eqref{prog:utility}.
Fix any $x\in \R^m$. 
We first show that $\overline \val(x)\le \nu(x)$.

Consider an optimal solution $\lambda$ for the dual LP in
\eqref{eq:extend-dual} such that
$\overline \val(x)=\sum_{S\subseteq \G} \lambda_S \val(S)$. 
For every $S\subseteq \G$, 
we have an integral allocation $M_S$ of the goods in  ${\cal M}(\chi_S)$
that is optimal in the linear program~\eqref{prog:utility} 
defining $\nu(\chi_S)=\val(S)$; these two are equal using Theorem~\ref{thm:integral}.
It is easy to see that $\sum_{S\subseteq \G}\lambda_S M_S \in {\cal M}(x)$.
Thus, $\overline \val(x)\le \nu(x)$.

For the other direction $\overline \val(x)\ge \nu(x)$,  let $z$ be the optimal solution
defining $\nu(x)$ in~\eqref{prog:utility}.
By the integrality of the bipartite matching polytope, we can write the fractional matching $z$ as a convex combinations 
of integral allocations 
$M_S$ for $S\subseteq \G$, i.e.,
$z=\sum_{S\subseteq \G}\lambda_S M_S$ for some $\lambda\ge 0$ with $\sum \lambda_S=1$.
The dual of~\eqref{prog:utility} is
\begin{equation*}
\begin{aligned}[t]
    &\min~ \sum_{j\in \G} x_j\pi_j + \sum_{T \subseteq \T} \rho_{T}\\
    &\begin{aligned}
    \text{s.t.:} \quad && \pi_j + \sum_{T: k\in T} \rho_{T} &\ge \co_{jk}        &&\forall j\in \G, \forall T \subseteq \T\\
    && \pi \in \R^{\G}_+,\quad \rho&\in\R^{2^{\T}}_+ \,.&&
    \end{aligned}
\end{aligned}
\end{equation*}
Consider an optimal dual solution $(\pi,\rho)$. By complementarity,
$\pi_i+\sum_{\Se: k\in T} \rho_{T} = \co_{jk}$ for every $(j,k) \in \supp(z)$; if
$\rho_{T}>0$ for $T \subseteq \T$ then $z(\delta(T))=\f(T)$, and if $\pi_j>0$ for
$j\in \G$ then $z(\delta(j))=x_j$.

Since $z=\sum_S \lambda_S M_S$, we have $M_S\subseteq \supp(z)$, and 
$\delta_{M_S}(\Se)=\f(\Se)$ whenever $z(\delta(\Se))=\f(S)$. 
Further, $z(\delta(j))=x_j$ implies that every matching $M_S$ with $j\in S$ covers $j$.
We see that $\chi_{M_S}$ and $(\pi,\rho)$ satisfy complementary slackness in
\eqref{prog:utility} for every set $S$ with $\lambda_S>0$. Thus, $c(M_S) = \nu(\chi_S)$, and $\nu(\chi_S)=\val(S)$ again by Theorem~\ref{thm:integral}.
 We can thus conclude that
\[
\nu(x)=\sum_{S\subseteq \G}\lambda_S c(M_S)=\sum_{S\subseteq \G}\lambda_S \val(S)\le \bar\val(x)\, ,
\]
completing the proof. 
\end{proof}

\subsection[{Conjectures on characterizing M-natural-concave functions}]{Conjectures on characterizing M\nat-concave functions}\label{sec:conjectures}

First, we answer  Frank's question negatively, showing that  Rado valuations do not cover the entire class of M\nat-concave valuations.
Lehmann et al.~\cite[Example 1]{DBLP:journals/geb/LehmannLN06} gave an example that is an M\nat-concave (gross substitutes) valuation but not OXS.  We show that the same example is also not a  Rado valuation; the proof is similar.

\begin{lemma}\label{lem:non-Rado}
Consider the following valuation on the ground set $\G = \{1,2,3,4\}$. We define $\val(S)=10$ if $|S|=1$, and $\val(S)=19$ for all sets with $|S|\ge 2$ except $\val(\{1,3\}) =\val(\{2,4\}) = 15$. This is M\nat-concave, but not a Rado valuation function.
\end{lemma}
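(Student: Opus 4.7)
The plan is to suppose for contradiction that $\val$ admits a Rado representation with bipartite graph $(\G,\T;\E)$, costs $\co : \E \to \R_+$ and matroid $\M = (\T, \I)$, and derive a numerical inconsistency from the given pair values. The symmetry between the ``low'' pairs $\{1,3\},\{2,4\}$ and the ``high'' value of $19$ on the other four pairs will pin down enough of the matroid structure to force a contradiction.

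The first step is to extract a parallel class from $\M$. For each $i \in \G$, let $K_i := \{k \in \T : (i,k) \in \E,\ \{k\} \in \I,\ \co_{ik} = 10\}$; since $\val(\{i\}) = 10$, each $K_i$ is non-empty. Now fix any pair $(i,j) \in \{(1,2),(1,4),(2,3),(3,4)\}$, for which $\val(\{i,j\}) = 19$. If there were $k_i \in K_i$ and $k_j \in K_j$ with $k_i \ne k_j$ and $\{k_i, k_j\} \in \I$, then the matching $\{(i,k_i),(j,k_j)\}$ would certify $\val(\{i,j\}) \ge 20$, a contradiction. Hence every such $k_i,k_j$ are parallel in $\M$ (distinct non-loops whose pair is dependent). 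Since parallelism is an equivalence relation on non-loops and each $K_i$ is non-empty, transitivity along the four edges $(1,2),(2,3),(3,4),(1,4)$ forces $K_1 \cup K_2 \cup K_3 \cup K_4$ to lie in a single parallel class $P$ of $\M$.

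The second step is to establish the explicit formula
\[
\val(\{i,j\}) \;=\; 10 + \max(u_i, u_j) \qquad \text{for all } i \ne j,
\]
where $u_i := \max\{\co_{ik} : (i,k) \in \E,\ \{k\} \in \I,\ k \notin P\}$ (set to $0$ if no such edge exists). The lower bound comes from a two-edge matching pairing the best in-$P$ edge at one item (cost $10$) with the best outside-$P$ edge at the other (cost $u_j$ or $u_i$); the two matroid endpoints are non-loops in different parallel classes, so their pair is independent. The upper bound is a short case analysis on any two-edge matching $\{(i,k),(j,l)\}$: both endpoints in $P$ is impossible since $\{k,l\}$ would then be parallel, one in $P$ and one outside gives cost at most $10 + u_i$ or $10 + u_j$, and both outside gives cost at most $u_i + u_j \le 10 + \max(u_i,u_j)$ (using $u_i,u_j \le 10$, which follows from $\val(\{i\})=10$).

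Plugging the given values into the formula yields the contradiction immediately: $\val(\{1,3\}) = 15$ gives $u_1, u_3 \le 5$, and $\val(\{2,4\}) = 15$ gives $u_2, u_4 \le 5$; but then $\val(\{1,2\}) = 10 + \max(u_1,u_2) \le 15 < 19$. The main obstacle is the upper-bound half of the formula: one must handle degenerate cases (items with no outside-$P$ neighbour, $u_i = 0$, $u_i = 10$, etc.), but in every case the bound $10 + \max(u_i,u_j)$ survives because $u_i,u_j \le 10$. Everything else is a routine translation from the Rado definition together with the elementary fact that parallelism is an equivalence relation on the non-loops of a matroid.
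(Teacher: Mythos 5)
Your proof is correct and follows essentially the same route as the paper's: the cost-$10$ witnesses for the four singletons are forced together by the value-$19$ pairs, the value-$15$ pairs then cap every other usable edge at $5$, and $\val(\{1,2\})=19$ becomes unrealizable. The only difference is cosmetic: the paper assumes without loss of generality that the matroid has no loops or parallel elements (so the witnesses collapse to a single node $u$), whereas you keep the parallel class $P$ explicit and package the endgame as the formula $\val(\{i,j\})=10+\max(u_i,u_j)$, which makes the degenerate cases slightly more transparent.
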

\begin{proof}
The proof that $\val$ is a gross substitutes/M\nat-concave valuation is given in \cite[Claim 2]{DBLP:journals/geb/LehmannLN06}. Let us show that it is not a Rado valuation.
For a contradiction, assume $\val$ is a Rado valuation as in Definition~\ref{def:Rado}. We can assume that the matroid on $\T$ does not contain any loops (rank-0 elements), and any parallel elements, i.e., any set $S\subseteq\T$ with $|S|\ge 2$    and $\f(S)=1$; we can contract any such set to a single element and obtain another representation.

Trivially, we can assume that no edge in the bipartite graph $(\G, \T; \E)$ has cost more than $10$. 
By $\val(\{1\}) = 10$ we have an element $u\in \T$ with $c_{1u} = 10$.
Since $\val(\{2\})=10$, there is $u'\in \T$ such that $c_{2u'} = 10$.
Since $\val(\{1, 2\}) < 20$ we have $u'=u$ as otherwise $(1,u), (2,u')$ would be an independent matching of cost 20, 
since $\f(\{u,u'\})=2$ by the above assumption. 

An analogous argument shows that $c_{ju} = 10$ for all $j \in \{1,2,3,4\}$.
We must have $c_{jk} \le 5$ for any $j \in \{1,2,3,4\}$ and any $k \in \T \setminus \{v\}$, as otherwise we would have an independent matching of cost $>15$ covering $\{1,3\}$ or $\{2,4\}$, again using the assumption of no parallel elements in $\T$.
Now, it is clear that we cannot realize $\val(\{1,2\}) = 19$.
\end{proof}

The reason why Rado valuations are not a rich enough class is that it is not closed under {\em endowment operations}. Given a valuation $\val:2^\G\to \R$ and a subset $T\subseteq \G$, we can define the valuation $\val':2^{\G'\setminus T}\to\R_+$ as
\[
\val'(X)=\val(X\cup T)-\val(T)\, .
\]
Using Definition~\ref{def:M-conc}, it is immediate that if $\val$ is M\nat-concave than so is $\val'$. It is not difficult to check that the example in Lemma~\ref{lem:non-Rado} arises as the endowment of a Rado valuation, showing that Rado valuations are {\em not} closed under endowment operations.

Endowment can be seen as a minor operation. Let us say that $\val$ is a Rado minor valuation if it arises from a Rado valuation by the endowment operation. Note that this class is trivially closed for endowment.
This motivates the following conjecture:
\begin{conjecture}\label{conj:Rado-minor}
Every M\nat-concave valuation arises as a Rado minor valuation.
\end{conjecture}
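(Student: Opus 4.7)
The plan is to construct, for any M\nat-concave valuation $v: 2^G \to \mathbb{R}$, an explicit Rado valuation $\tilde v$ on a larger ground set $G' = G \sqcup D$ such that $v(X) = \tilde v(X \cup D) - \tilde v(D)$ for every $X \subseteq G$. As a warm-up I would first reverse-engineer an explicit Rado-minor realization for the valuation in Lemma~\ref{lem:non-Rado}; the mechanism there — roughly, that committing items in $D$ forces specific nodes on the $T$-side to be saturated, and the residual matroid structure then implements the four-cycle trade-off between $\{1,3\}$ and $\{2,4\}$ that no single bipartite matching can realize — should suggest the general template: $D$ ``pays'' for an overhead whose removal unlocks the nonlinear exchange behavior on $G$.

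For the general construction I would proceed in two stages. In the first stage, I would exploit Murota's closure properties of M\nat-concave functions (infimal convolution, transformation by networks, aggregation, contraction) to decompose $v$ into simpler pieces, each already known to be Rado — typically weighted matroid rank functions or OXS valuations, whose Rado representations are explicit. In the second stage, I would merge these pieces into a single Rado instance $(G \sqcup D, T; E)$ by taking a disjoint union of the per-piece bipartite graphs and matroids $\M_i$ on disjoint copies of $T$, and then using elements of $D$ together with carefully chosen very-large edge costs to force any maximum independent matching to couple the pieces in the intended way. The central identity to verify is that the large constants contributed by forced matchings through $D$ cancel against $\tilde v(D)$, leaving precisely $v(X)$ after endowment.

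The main obstacle, and the reason the statement is still a conjecture, is that Rado valuations are \emph{not} closed under sum or infimal convolution, so the merging step in the second stage cannot be carried out by a naive ``glue and add'' argument: the matroid $\M$ on $T$ must simultaneously encode the independent-matching structure of \emph{all} pieces while remaining a genuine matroid on a single ground set. I expect the hardest case to be when the Dress--Wenzel valuated matroid associated with $v$ is non-representable (e.g., built from a V\'amos-type matroid): there the tropical-linear toolkit gives no algebraic realization to bootstrap from, and one must work purely with abstract matroidal axioms, where there is no obvious local invariant — analogous to M\nat-concavity itself — to track across the construction. If this direct route stalls, a complementary line would be a systematic computer search among small M\nat-concave valuations: a negative example would refine the conjecture, and confirmation for all small instances would likely expose the pattern of the $D$-construction needed to settle the general case.
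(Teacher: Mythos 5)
You have not proved this statement, and neither does the paper: it appears there only as an open conjecture (Conjecture~\ref{conj:Rado-minor}), introduced as a natural weakening of the Ostrovsky--Paes Leme ``matroid based valuation'' conjecture (Conjecture~\ref{conj:OPL}), which itself remains open. Your proposal is a research plan rather than a proof, and its first stage is circular: decomposing an arbitrary M\nat-concave valuation into weighted matroid rank functions or OXS pieces via merging and endowment operations is essentially the content of Conjecture~\ref{conj:OPL}, a \emph{stronger} statement than the one you are trying to establish. Moreover, Tran's result cited in the paper shows that merging alone, starting from weighted matroid rank functions (even augmented with partition valuations), does not generate all M\nat-concave valuations, so any such decomposition would have to interleave endowment steps in a way nobody currently knows how to produce in general.

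There are also two technical misstatements in your identified ``obstacle.'' Rado valuations \emph{are} closed under merging/convolution: given representations $(\G,\T_1;\E_1)$ with matroid $\M_1$ and $(\G,\T_2;\E_2)$ with $\M_2$, take the disjoint union of the $\T$-sides with the direct-sum matroid; a maximum-cost independent matching then optimizes the split of $S$ between the two pieces automatically, so the ``glue'' step you worry about is not where the difficulty lies. The genuine failure of closure --- and the entire reason the Rado \emph{minor} class is introduced --- is the endowment operation: the valuation of Lemma~\ref{lem:non-Rado} is an endowment of a Rado valuation but is not itself Rado. The real open difficulty is exhibiting, for an arbitrary M\nat-concave $v$ (in particular one whose associated valuated matroid is non-representable), \emph{any} Rado valuation on an enlarged ground set $\G\cup D$ whose endowment by $D$ recovers $v$; no candidate construction, nor any invariant certifying impossibility, is currently known. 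A computer search over small instances, as you suggest, would be a sensible experiment, but it would not settle the conjecture.
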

Ostrovsky and Paes Leme \cite{ostrovsky2015gross} previously posed the following stronger {\em ``matroid based valuation conjecture''}. We define the merging/convolution of the valuations $\val_1,\val_2:2^\G\to \R$ as
\[
\val^*(S)=\max_{T\subseteq S} \val_1(T)+\val_2(S\setminus T)\quad\forall S\subseteq\G\, .
\]
Merging two M\nat-concave functions results in an M\nat-concave function. 
\begin{conjecture}[Ostrovsky and Paes Leme \cite{ostrovsky2015gross}]\label{conj:OPL}
Every M\nat-concave valuation arises by the repeated application of endowment and merging operations starting from weighted matroid rank functions.
\end{conjecture}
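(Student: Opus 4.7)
The plan is to proceed by induction on a suitable complexity measure of the M\nat-concave valuation $\val : 2^\G \to \R$, for instance the lexicographic pair $(|\G|, N(\val))$ where $N(\val)$ is the number of maximal cells in the matroidal subdivision of $[0,1]^\G$ induced by $\val$. The base case is when $\val$ is already a weighted matroid rank function (e.g.\ $|\G|\le 1$, or when the matroidal subdivision is trivial). For the inductive step, the goal is to exhibit $\val$ as either (i) the endowment of an M\nat-concave valuation on a larger ground set, or (ii) the merging $\val_1 * \val_2$ of two strictly simpler M\nat-concave valuations. Either reduction reduces the induction parameter and lets the hypothesis apply.

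The main tool is the polyhedral picture of M\nat-concave functions. By Theorem~\ref{thm:gs-mnat} and Lemma~\ref{lemma:concaveClosure}, the concave closure $\bar\val$ is a piecewise-linear concave function whose regions of linearity give a subdivision of $[0,1]^\G$ into matroid polytopes. Dually, the Newton polytope $P_\val$ of $\val$ is a generalized permutohedron whose vertices are exactly the characteristic vectors of members of the ``demand family'' of $\val$ at generic prices. The merging of two M\nat-concave functions corresponds to Minkowski addition of their Newton polytopes, while endowment at a singleton corresponds to passing to a face of the form $\{x_j = 0\}$ or $\{x_j = 1\}$ after translation. Thus the inductive step breaks into two cases: \textbf{(a)} if $P_\val$ is saturated in a coordinate (every maximizer contains a fixed $j$ with a fixed value), we undo an endowment at $j$; \textbf{(b)} otherwise, we seek a non-trivial Minkowski decomposition $P_\val = P_{\val_1} + P_{\val_2}$ into generalized permutohedra that are themselves Newton polytopes of M\nat-concave functions, which corresponds exactly to writing $\val = \val_1 * \val_2$.

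The key technical lemma required is: every M\nat-concave valuation that is not already a weighted matroid rank function admits at least one reduction of type (a) or (b). A natural route is to exploit the theorem of Ardila--Benedetti--Doker that every generalized permutohedron is a (signed) Minkowski combination of matroid polytopes, and to try to promote this signed decomposition into an unsigned one by repeatedly absorbing negative summands via endowment operations (whose effect on $P_\val$ is to translate and restrict to a face, which can cancel negative Minkowski terms). A second route is Legendre duality: passing to the conjugate L\nat-concave function, merging becomes infimal convolution and endowment becomes restriction, and one may seek the decomposition in the dual picture where sums and minima of simple L\nat-concave functions are better understood.

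The main obstacle is precisely proving the dichotomy in the inductive step: one must rule out ``prime'' M\nat-concave valuations that are neither weighted matroid ranks nor obtainable from strictly smaller ones by endowment or merging. This is delicate because not every generalized permutohedron decomposes as an unsigned Minkowski sum of matroid polytopes, so the naive version of case \textbf{(b)} can fail; the reduction must therefore interleave endowment moves (case \textbf{(a)}) with partial Minkowski decomposition to peel off matroid summands one at a time while keeping the remainder M\nat-concave. A successful execution will require a quantitative invariant (such as the total Ehrhart volume of the matroidal cells, or the sum of ranks of the matroids appearing in the subdivision) that provably strictly decreases under the combined peeling procedure, guaranteeing termination at the base class of weighted matroid rank functions.
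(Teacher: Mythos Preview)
The statement you are attempting to prove is a \emph{conjecture}, not a theorem: the paper does not prove it, and explicitly states that ``this conjecture is still open'' (immediately after stating it). There is therefore no ``paper's own proof'' to compare against. Your task here was miscalibrated.

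More substantively, your proposal is not a proof but an outline of an approach, and you yourself correctly identify the obstruction that makes the conjecture open. The inductive scheme hinges on the dichotomy that every M\nat-concave valuation which is not a weighted matroid rank function admits either an endowment reduction or a nontrivial \emph{unsigned} Minkowski decomposition into M\nat-concave pieces. You note that the Ardila--Benedetti--Doker decomposition is \emph{signed}, so case~(b) can fail, and you propose to interleave endowment moves to ``absorb'' negative summands. But you give no argument that such absorption is always possible, nor any well-founded complexity measure that provably decreases under the combined procedure; the candidates you list (Ehrhart volume, sum of ranks) are not shown to decrease. The existence of ``prime'' M\nat-concave valuations --- ones admitting neither reduction --- is exactly what must be ruled out, and nothing in the proposal does so. Indeed, the paper notes that Tran~\cite{Tran2019} showed merging alone (without endowment) is insufficient, which already indicates the delicacy here. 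As written, the proposal is a plausible research programme, not a proof.
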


This conjecture is still open. Tran~\cite{Tran2019}  showed that only allowing merging above is not sufficient, even if starting from a slightly broader class also including partition valuations.

Conjecture~\ref{conj:Rado-minor} is a natural weakening of Conjecture~\ref{conj:OPL}: weighted matroid rank functions form a subclass of Rado valuations, and it is easy to verify that Rado minor valuations are closed under merging and endowment. 

Balkanski and Paes Leme \cite{Balkanski2020} gave a negative answer to the question  whether every M\nat-concave valuation arises as a conic combination of (unweighted) matroid rank functions. Note that M\nat-concave functions are not closed under conic combinations, even the sum of two matroid rank functions may not be M\nat-concave. Thus, the questions was whether conic combinations of matroid ranks forms a superclass of the M\nat-concave valuations.

\section{Improved product bound}
\label{section:productBounds}
We show that the bound in Lemma~\ref{lem:productBound} can be slightly improved.
Throughout the section the base of the logarithm is $e$.
We recall that the Lamberth function $\cal W$ is the inverse of $t \mapsto t \ln t$ for $t \in \R_+$.
For $x > e$ it holds ${\cal W}(x) < \log x$; and for $x > 41.19$ it holds ${\cal W}(x) > \log x - \log (\log x)$, see~\cite{hassani2005approximation}.
Let 
$\psi(x) = \left(\frac{x-2}{{\cal W}(\frac{x-2}{e})} \right)^{x/\left(x-2+\frac{x-2}{{\cal W}(\frac{x-2}{e})}\right)}$ (for $x > 2$).
Then, by the above bound we get $\psi(x) \le \max \left\{\overline c, \frac{x-2}{\log(\frac{x-2}{e}) -\log\log(\frac{x-2}{e})} \right\}$ 
for some constant $\overline c$ that depends on $41.19$.
Now, we can prove our lemma. 

\begin{lemma}\label{lemma:productBound} 
Let $n \in \N$, $S \subseteq [n]$,  and $1\le w_1, \dots, w_n <  \g-1$.
For $i\in S$ let $\s_i \in \R_+$ such that $\sum_{i \in S} \s_i \le c \cdot n$.
Assuming $c$ is a constant we have $$\left( \prod_{i \in S} \s_i^{w_i} \right)^{1/\sum_{i=1}^n w_i} 
\le c \cdot \psi(\g) = O\left(\frac{\g}{\log(\g)} \right)\,.$$
\end{lemma}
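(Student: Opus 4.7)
The plan is to mirror the proof of Lemma~\ref{lemma:symmetricProductBound}, replacing the unweighted AM-GM step by Jensen's inequality for $\log$ weighted by the $w_i$'s, together with a decomposition of $\sum_{i\in S} w_i\s_i$ that keeps the numerator and denominator of $\log P$ coupled. First I would reduce to the case $c=1$ via the scaling $\s_i \mapsto \s_i/c$: the LHS picks up a factor $c^{W_S/W} \le c$ since $W_S \le W$ and $c\ge 1$. I would then assume $\s_i\ge 1$ for every $i\in S$ exactly as in Lemma~\ref{lemma:symmetricProductBound}, since any term with $\s_i < 1$ only decreases the product and can be dropped from $S$ without violating $\sum\s_i \le n$. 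Write $k=|S|$, $W_S = \sum_{i\in S}w_i$ and $W=\sum_{i\in[n]} w_i$, and denote $P := \bigl(\prod_{i\in S}\s_i^{w_i}\bigr)^{1/W}$.

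The core estimate combines Jensen's inequality for the concave map $\log$ with weights $w_i/W_S$,
\[
\sum_{i\in S} w_i \log \s_i \;\le\; W_S\log\!\Bigl(\tfrac{1}{W_S}\sum_{i\in S} w_i\s_i\Bigr),
\]
with the telescoping bound
\[
\sum_{i\in S} w_i\s_i \;=\; W_S + \sum_{i\in S} w_i(\s_i-1) \;\le\; W_S + (\g-1)(n-k),
\]
which uses $\s_i\ge 1$, $w_i\le \g-1$ and $\sum_{i\in S}(\s_i-1)\le n-k$. Together with $W\ge W_S+(n-k)$ (because $w_j\ge 1$ for $j\notin S$) and the substitution $\beta := (\g-1)(n-k)/W_S$, the ratio $W_S/W$ is at most $(\g-1)/(\g-1+\beta)$, which yields the one-parameter bound
\[
\log P \;\le\; \frac{(\g-1)\log(1+\beta)}{\g-1+\beta}.
\]
As $(k,W_S)$ range over feasible values, $\beta$ sweeps all of $[0,\infty)$, so the remainder is a one-variable optimisation.

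I would then maximise $g(\beta)=\log(1+\beta)/(\g-1+\beta)$. Writing $u=1+\beta$, the first-order condition becomes $\log u = 1 + (\g-2)/u$; substituting $t=u/e$, this is $t\log t = (\g-2)/e$, so by definition of the Lambert function $u = \xi_0 := (\g-2)/{\cal W}((\g-2)/e)$. At this critical point $\log u = (\g-2+u)/u$, which gives $\max_\beta g(\beta) = 1/\xi_0$ and hence $\log P \le (\g-1)/\xi_0$. A parallel computation using the same identity shows $\log\psi(\g) = \g\log\xi_0/(\g-2+\xi_0) = \g/\xi_0$, so $\log P \le (\g-1)/\xi_0 \le \g/\xi_0 = \log\psi(\g)$ and therefore $P \le c\,\psi(\g)$. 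The asymptotic $\psi(\g)=O(\g/\log\g)$ then follows from the Lambert bound ${\cal W}(x)\ge \log x - \log\log x$ for large $x$ recalled just above the lemma: it gives $\xi_0 = O(\g/\log\g)$, and the exponent $\g/(\g-2+\xi_0)$ tends to $1$ since $\xi_0 = o(\g)$.

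The main obstacle is identifying this coupled upper bound on $\log P$. A naive estimate that replaces $w_i$ by $\g-1$ in the numerator while only invoking $W\ge n$ in the denominator produces an exponential-in-$\g$ factor. The telescoping identity $\sum w_i\s_i = W_S + \sum w_i(\s_i-1)$, paired with the matching lower bound $W\ge W_S + (n-k)$, is precisely what preserves the tie between numerator and denominator of $\log P$ and enables the Lambert-$\cal W$ saving.
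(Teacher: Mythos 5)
Your proposal is correct, and it reaches the paper's bound by a genuinely different reduction step, although it lands on exactly the same one-variable optimization. The paper's own proof first asserts, without detailed justification, that the worst case is the extremal configuration $w_i=\g-1$ for $i\in S$, $w_i=1$ for $i\notin S$, with all $\s_i$ equal to $n/k$, and then maximizes $\xi^{(\g-1)/(\g-2+\xi)}$ over $\xi=n/k$. You bypass that symmetrization claim: weighted Jensen for $\log$ together with the coupled bounds $\sum_{i\in S}w_i\s_i\le W_S+(\g-1)(n-k)$ and $W\ge W_S+(n-k)$ give $\log P\le (\g-1)\log u/(\g-2+u)$ with $u=1+\beta$, which is precisely the function the paper optimizes (in the paper's extremal configuration one has $W_S=k(\g-1)$, so your $u$ specializes to $n/k=\xi$). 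From there the two arguments coincide: the first-order condition $\log u=1+(\g-2)/u$ has the unique solution $u=\xi_0=(\g-2)/{\cal W}((\g-2)/e)=e^{{\cal W}((\g-2)/e)+1}$, yielding $\log P\le(\g-1)/\xi_0\le\g/\xi_0=\log\psi(\g)$, and the asymptotic $O(\g/\log\g)$ follows from the cited Lambert bound just as in the paper's preamble; note that your formula for $\xi_0$ tacitly uses the standard identity ${\cal W}(x)e^{{\cal W}(x)}=x$ (the identity the paper actually invokes), rather than the ``inverse of $t\mapsto t\ln t$'' phrasing in the paper's text, and the two readings agree only under that identity. What your route buys is rigor: the paper's ``worst case'' assertion is the one unproved step there, and your Jensen-plus-decomposition argument replaces it with explicit inequalities. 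The remaining loose ends are at the same level of informality as the paper's proof: the scaling step uses $c\ge1$, the cases $S=\emptyset$ and $\g=2$ are degenerate, and for the final asymptotic it is cleanest to note that the exponent $\g/(\g-2+\xi_0)\le 1$ because $\xi_0\ge e$, so $\psi(\g)\le\xi_0=O(\g/\log\g)$.
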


\begin{proof}
\begin{sloppypar}
We present the proof for $c=1$, the general cases easily reduces to $c=1$ by scaling.   
Since ${\cal W}(x)e^{{\cal W}(x)} = x$ we have $e^{{\cal W}(\frac{x-2}{e}) + 1} = e \cdot \frac{x-2}{e} \cdot \frac{1}{{\cal W}(\frac{x-2}{e})} = \frac{x-2}{{\cal W}(\frac{x-2}{e})}$.
Hence, 
$$\left( e^{{\cal W}(\frac{x-2}{e}) + 1} \right)^{
                x/\left(x-2+e^{{\cal W}(\frac{x-2}{e}) + 1)}\right)} 
= \left(\frac{x-1}{{\cal W}(\frac{x-2}{e})} \right)^{
                x/\left(x-2+\frac{x-2}{{\cal W}(\frac{x-2}{e})}\right)}\,,$$
for $x > 2$. 
It suffices to prove that 
$$ \left( \prod_{i \in S} \s_i^{w_i} \right)^{1/\sum_{i=1}^n w_i} \le 
        \left( e^{{\cal W}(\frac{\g-2}{e}) + 1} \right)^{
                \g/\left(\g-2+e^{{\cal W}(\frac{\g-2}{e}) + 1)}\right)} \,.$$
Without loss of generality we can assume that $\s_i \ge 1$. 
Then the worst case is if $w_i = \g-1$ for all $i \in S$ and $w_i = 1$ for $i \in [n]\setminus S$.
For fixed size of $S$ ($k = |S|$), 
the product $\prod_{i\in S} \s_i^{\g-1}$ is maximized when all $\s_i$ are the same. 
Hence, $\left( \prod_{i \in S} \s_i^{w_i} \right)^{1/\sum_{i=1}^n w_i}$ is  upper-bounded by 
$\left( \frac{n}{k} \right)^{k(\g-1)/(k(\g-1) + n - k)}$.
Let $\xi = \frac{n}{k}$ then $\left( \frac{n}{k} \right)^{k(\g-1)/(k(\g-1) + n - k)} = \xi^{(\g-1)/(\g-2 + \xi)}$.
By the first order conditions, the value $\xi^{(\g-1)/(\g-2 + \xi)}$ achieves the maximum for $\xi = e^{{\cal W}(\frac{\gamma-2}{e}) + 1}$.
Hence, 
\begin{equation*}\label{eq:lossy}
\left( \frac{n}{k} \right)^{k\g/(k\g + n - k)} \le \left( e^{{\cal W}(\frac{\g-2}{e}) + 1} \right)^{(\g-1)/\left(\g-2+e^{{\cal W}(\frac{\g-2}{e}) + 1)}\right)}\,. \qedhere
\end{equation*}
\end{sloppypar}
\end{proof}

\section{Connection to spending restricted equilibrium}
\label{section:connectionToSR}

The first constant factor approximation algorithm for the Nash social welfare problem was given by~\citet{cole2015approximating} 
using the so-called \emph{spending restricted (SR)} equilibrium. 
Since then, the SR-equilibrium is one of the main concepts used in the design of the approximation algorithms for the NSW problem~\cite{anari2018nash,cole2017convex,garg2018approximating,garg2019auction}. 

An important feature of the SR-equilibrium is that the items highly valued by the agents are recognized as items with price more than $1$ (\emph{expensive}) in the equilibrium. 
Isolating such items is at the essence of the approximation algorithms in the literature.   
The main idea is that each of the expensive items must be allocated integrally to one agent only,
thereby preventing the unbounded integrality gap arising when several agents share a very desirable good, see~\citep[Lemma 3.1]{cole2015approximating}.

In this section, we illustrate a connection between the approach we use and the SR-equilibrium.
In that light, for the rest of the section we focus on the case of symmetric Nash social welfare problem where agents have additive valuations.
We show that the \emph{set of the most preferred items} $\Hs$ obtained in~\ref{phase1} contains all the expensive items in an SR-equilibrium. 
Similarly to the algorithms relying on the SR-equilibrium where expensive items have special status during rounding, 
the items in $\Hs$ are allocated integrally throughout our algorithm.
Intuitively, this is how we are overcoming the unbounded integrality gap.

\paragraph{SR-equilibrium}
We quickly recall the necessary definitions and refer the reader to~\cite{cole2015approximating} for more details.
The market consists of a set of divisible items $\G$, 
agents $\A$ each of which has a budget of $1$ and an additive valuation function over the items.
A valuation of agent $i$ is additive if her value is given as $\val_i(x_i) = \sum_{j \in \G} \val_{ij} x_{ij}$ for all $x_{i} \in \R^{\G}_{+}$ and $\val_{ij} \in \R_+$.

Consider prices $p\in \R^{\G}_+$ for the items in $\G$. 
We say that an item $j$ is \emph{maximum bang per buck} (MBB) 
for agent $i$ if $j \in \argmax_{j \in \G}\{\val_{ij}/ p_j\}$.
For an allocation $x$, the spending of an agent on $x_i$ is $p^{\top}x_i$ 
and the spending on an item $j$ is $\sum_{i\in \A} p_j x_{ij}$.
The MBB items are exactly the items an agent would buy at prices $p$ in order to maximize its valuation such that spending is not more than a given budget. 

\begin{definition}
    A spending restricted (SR) equilibrium is a fractional allocation $x$
    and a price vector $p$ such that every agent spends all of her budget on her MBB items at prices $p$,
    and the total spending on each item is equal to $\min\{1, p_j\}$.
\end{definition}
By scaling the valuation of each agent we can assume that the maximum bang per buck is one for all agents. 
Under such a scaling, in an SR-equilibrium we also have that $v_{ij} = p_j$ whenever item $j$ is MBB for agent $i$ and $v_{ij} < p_j$ otherwise.
We work with this assumption for the rest of this section.

\paragraph{NSW and SR-equilibrium} 
Consider a NSW welfare instance with items $\G$ and agents $\A$ where each agent $i$ has additive valuation. 
For the NSW problem, the valuations are discrete function and the value of a subset of items $S$ for agent $i$ is given by $\val_i(S) = \sum_{j \in S} \val_{ij}$. 
The extension of an additive valuation $\val_i$ to $\R^{\G}_+$ is naturally defined as $
\val_i(x_i) = \sum_{j \in \G} \val_{ij} x_{ij}$ for all $x_{i} \in \R^{\G}_{+}$.
We construct the market from the NSW instance from the same set of items $\G$ that are now declared divisible and the set of agents
$\A$ each equipped with the extension of the discrete additive valuation and budget one.

Let $(x, p)$ be an SR-equilibrium in such a market.
Define the set of \emph{expensive goods} $\overline H$ as $\overline H := \{j \in \G: p_j >1\}$.
\citet{cole2015approximating} proved that $\left(\prod_{j \in \overline H} p_j \right)^{1/|\A|}$ is an upper-bound on the optimal value of NSW, 
and gave a rounding algorithm that uses an SR-equilibrium as a starting point.

In the next lemma we show that $\overline H \subseteq \Hs$, where $\Hs$ is the set of the most preferred goods obtained in~\ref{phase1}
of our algorithm. 
Recall that $\tau$ is an assignment maximizing $\left(\prod_{i \in \A} \val_{i\tau(i)}\right)^{1/|\A|}$ and that $\Hs := \tau (A)$.
In words, $\tau$ maximizes the NSW welfare under the constraint that each agent gets exactly one item. 

For the purposes of the proof recall that the
the spending graph $(\A, \G; \E_x )$  of an allocation $x$ is defined as $ij \in \E_{x}$ if and only if $x_{ij} > 0$.

\begin{lemma}
It holds $\overline H \subseteq \Hs$.
\end{lemma}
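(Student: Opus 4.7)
The plan is to proceed by contradiction: I would assume some expensive good $j^* \in \overline H$ is not in $\tau(\A) = \Hs$ and construct an alternate assignment $\tau''$ with $\prod_i \val_{i\tau''(i)} > \prod_i \val_{i\tau(i)}$, contradicting the optimality of $\tau$. Throughout I would use the normalisation (already adopted in this subsection) that the maximum bang-per-buck equals $1$ for every agent, so that $\val_{ij} \le p_j$ holds for all pairs $(i,j)$ with equality iff $j$ is MBB for $i$; this per-agent scaling multiplies the objective $\prod_i \val_{i\tau(i)}$ only by a $\tau$-independent constant and thus leaves the set of optimal matchings unchanged.

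The first step is to build an auxiliary matching $\tau' : \overline H \to \A$ assigning each expensive good to an agent for whom it is MBB. I would argue its existence via Hall's theorem in the bipartite MBB graph: for any $S \subseteq \overline H$, the total SR-equilibrium spending on $S$ equals $|S|$ (since each expensive item receives spending $1$), this spending comes solely from agents having some item of $S$ as MBB (only MBB edges carry spending in equilibrium), and each such agent contributes at most her unit budget, so $|N(S)| \ge |S|$ in the MBB graph.

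Next I would analyse the graph $\tau \cup \tau'$. Every agent has degree at most $2$, every item has degree at most $2$, and $j^*$ has degree exactly $1$ (no $\tau$-edge since $j^* \notin \tau(\A)$, and a $\tau'$-edge by construction of $\tau'$), so the component of $j^*$ is a path
\[
j^* \xrightarrow{\tau'} a_1 \xrightarrow{\tau} j_1 \xrightarrow{\tau'} a_2 \xrightarrow{\tau} j_2 \;\cdots\; \xrightarrow{\tau} j_K,
\]
which continues past each intermediate $j_k$ precisely when $j_k \in \overline H$ (so that $j_k$ has a $\tau'$-edge). Hence the final item $j_K$ satisfies $j_K \notin \overline H$, i.e.\ $p_{j_K} \le 1$. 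I would then define $\tau''$ by setting $\tau''(a_k) := j_{k-1}$ for $1 \le k \le K$ (with $j_0 := j^*$) and $\tau'' := \tau$ off the path; this is a valid matching containing $j^*$, and the objective ratio telescopes as
\[
\frac{\prod_i \val_{i\tau''(i)}}{\prod_i \val_{i\tau(i)}} = \prod_{k=1}^K \frac{\val_{a_k j_{k-1}}}{\val_{a_k j_k}} \ge \prod_{k=1}^K \frac{p_{j_{k-1}}}{p_{j_k}} = \frac{p_{j^*}}{p_{j_K}} > 1,
\]
using $\val_{a_k j_{k-1}} = p_{j_{k-1}}$ (since $j_{k-1}$ is MBB for $a_k = \tau'(j_{k-1})$) and $\val_{a_k j_k} \le p_{j_k}$. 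The main subtlety is justifying that the component of $j^*$ is a simple path terminating at a non-expensive item; this follows from $\tau'$ being a matching (so no agent is revisited), every agent carrying a $\tau$-edge (forcing continuation past each agent), and $j^* \notin \tau(\A)$ (ruling out a return to $j^*$). A minor caveat is that the denominators $\val_{a_k j_k}$ must be positive, which is guaranteed by the standing assumption that $\tau$ attains a non-zero NSW value.
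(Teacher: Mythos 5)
Your proof is correct, and its second half---tracing the alternating path from the hypothetical unmatched expensive item $j^*$, reassigning $\tau$ along it, and telescoping the price ratios down to $p_{j^*}/p_{j_K}>1$ to contradict the optimality of $\tau$---is essentially the paper's own argument. Where you genuinely diverge is in how the injective assignment of expensive items to agents is obtained. The paper first cycle-cancels the spending graph into a forest, proves a small claim that an expensive item can be a leaf only in a trivial one-agent/one-item component, and then roots each tree at an item so that every expensive item can be mapped to a child agent who buys it; you instead work directly in the MBB graph and invoke Hall's theorem, using that each item of $S\subseteq \overline H$ receives spending exactly $1$ while each neighbouring agent can supply at most her unit budget, so $|N(S)|\ge |S|$. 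The Hall route is shorter and more self-contained (no forest preprocessing, no leaf claim, and it only needs the matched agent to have the item as MBB rather than to actually buy it), while the paper's forest structure is the standard object reused throughout the SR-equilibrium literature. Two further points in your favour: you explicitly justify that the path terminates at a non-expensive item (coincidence of a $\tau$- and $\tau'$-edge being excluded by injectivity of the two matchings together with $j^*\notin\tau(\A)$), and your telescoping, based on $\val_{a_k j_{k-1}}=p_{j_{k-1}}$ and $\val_{a_k j_k}\le p_{j_k}$ with cancellation of the intermediate prices, is stated more carefully than the paper's intermediate inequality $\val_{\kappa(j_t)j_t}\ge \val_{\kappa(j_t)j_{t+1}}$, which as written only becomes valid once the prices are telescoped. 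Your handling of the positivity of the denominators via the standing assumption of non-zero NSW is also a legitimate (and easily avoidable, by comparing products rather than ratios) caveat.
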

\begin{proof}
Using a cycle canceling argument, we can assume that the spending graph of SR-equilibrium $(x,p)$ is a forest $F$.
Moreover, since $x$ is an SR-equilibrium allocation, every tree contains at least one agent and one item.
The next claim states that an expensive item is a leaf in some tree in $F$ only in a very special case.

\begin{claim}\label{claim1}
Let $T=(\A_1, \G_1; E_1)$ be a tree component of  $F$ and $j \in \G_1$ an item in $T$.
If $p_j > 1$ then either $|A_1| = |\G_1| = 1$ or $j$ is not a leaf of $T$. 
\end{claim}
\begin{claimproof}
    By definition of SR-equilibrium each agent spends all of her budget which is $1$.
    If $j$ is a leaf, then there is unique agent $i$ buying $j$. 
    Moreover, $i$ spends all $1$ unit of her budget on $j$ and cannot buy any other item. 
    Thus, $\A_1 = \{i\}$ and $\G_1 = \{j\}$.
\end{claimproof}

Let $\kappa : \overline H \to \A$ such that $x_{\kappa(j) j} > 0$.
Such an function $\kappa$ exists by definition of SR-equilibrium.
Moreover, by Claim~\ref{claim1} we can choose $\kappa$ to be an assignment 
(root every tree of $F$ in an arbitrary item and assign the expensive item to any child agent).
We are ready to prove the lemma.

For the sake of contradiction suppose that there is an item $j_1 \in \overline H$ such that $j_1 \not \in \Hs$.
In other words, $p_{j_1} > 1$ and $j_1$ is not allocated to any agent by $\tau$.
By definition we have $\overline H \le |\A| = |\Hs|$.
Consider the component of the symmetric difference $\tau \Delta \kappa$ containing $j$.
Since $j_1 \not\in \Hs$ and $\Hs = \tau(\A)$, this component forms a path starting in $j_1$ and ending in a vertex $j_{k+1}$ in $\G\setminus \overline H$;
see Figure~\ref{figure:connectionToSR}.
Let us denote the path as $j_1, \kappa(j_1), j_2, \kappa(j_2), \dots, \kappa(j_{k}), j_{k+1}$
where $j_{t+1} = \tau(\kappa(j_{t}))$ for $t \in [k]$, and $j_t \in \overline H$ for $t \le k$.

\begin{figure}[h]
    \centering
    \includegraphics[width=0.8\textwidth]{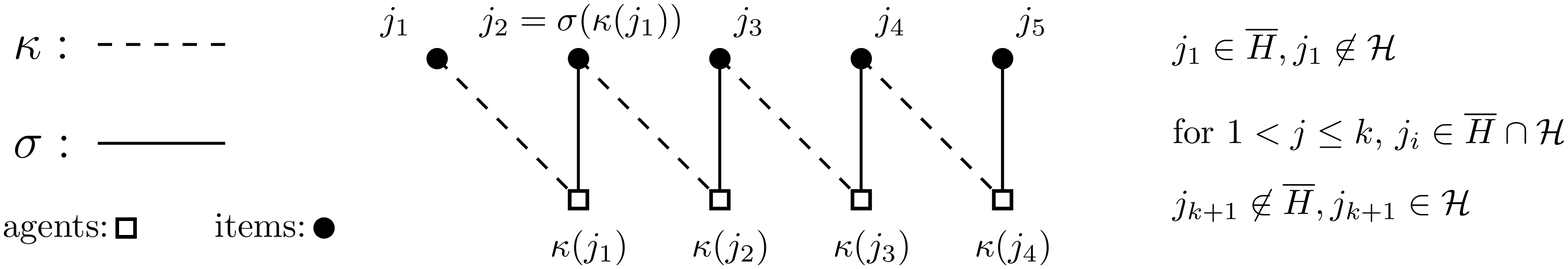}
    \caption{A component of  $\kappa \Delta \tau$ containing $j_1$.}
    \label{figure:connectionToSR}
\end{figure}

Recall, that MBB of each agent is one, therefore $\val_{ij} = p_j$ for each $i,j$ with $x_{ij} > 0$.
By definition of $\kappa$ we have that $\val_{\kappa(j_t) j_t} \ge \val_{\kappa(j_t) j_{t+1}}$ for $t\in[k-1]$.
Moreover, we have $\val_{\kappa(j_1) j_1} = p_{j_1} > 1 \ge p_{j_{k+1}} \ge \val_{\kappa(j_{k}), k+1}$.
Since $j_{t+1} = \tau(\kappa(j_{t}))$, augmenting over the above path will contradict the optimality of $\tau$.
\end{proof}

\section{Conclusions and future work}\label{sec:conclusions}
We have given a constant factor approximation algorithm for the Nash social welfare problem with Rado valuations, assuming that the weights of the agents are bounded by a constant. Rado valuations form a broad subclass of gross substitutes valuations. It remains open to obtain a constant factor approximation for the entire class of gross substitutes valuations, and for even more general classes, such as submodular valuations. The other main open question is to remove the assumption of bounded weights, that is, to obtain a constant factor independent of the parameter $\g$.
 
We note that for subadditive valuations, \citet{barman2020tight} gave an  $O(n)$-approximation and showed that this is essentially tight: an $O(n^{1-\varepsilon})$ approximation would require an exponential number of oracle queries for any fixed $\varepsilon>0$. 

The algorithm is based on a mixed integer programming relaxation, and decomposes into a number of phases. Most reduction steps are applicable for the general subadditive setting. We only require Rado valuations for \ref{phase4}, to obtain an approximate solution with a small support. The factor $\g$ only appears in the reduction in \ref{phase2}, where we restrict each agent to receiving only a single item from the set $\Hs$. Besides extending the result to more general settings, there is
much scope for improving the approximation factor by using tighter analyses and amortizing across the different phases.

For example, we expect that a (mild) extension to budget-Rado valuations should be achievable. 
Similarly to \cite{ChaudhuryCGGHM18,garg2018approximating}, 
this means Rado valuations with a cap on the maximum obtainable value for each agent. 
This only requires a slightly more careful argument in \ref{phase4}.

Our work also highlights Rado valuations as an interesting class of gross substitutes valuations; this could be relevant also for other problems in mechanism design: it is a broad class including most common examples such as weighted matroid rank functions and OXS valuations, yet it has a rich combinatorial structure that can be exploited for algorithm design.

\bibliographystyle{plainnat}
\bibliography{references}
\end{document}